\documentclass[11pt]{article}
\pdfoutput=1
\usepackage{amsmath,amssymb,amsthm,amsfonts,latexsym,mathtools,mathdots,graphicx,float,bbold}
\usepackage{caption,subcaption,ellipsis,xcolor,textcomp,thmtools,thm-restate,cmap,mleftright,framed,nicematrix}
\usepackage[backref=page,colorlinks,citecolor=blue,bookmarks=true]{hyperref}
\usepackage[nameinlink]{cleveref}
\crefname{ineq}{inequality}{inequalities}
\creflabelformat{ineq}{#2{\upshape(#1)}#3}
\crefname{fact}{fact}{facts}
\creflabelformat{fact}{#2#1#3}
\crefname{obs}{observation}{observations}
\creflabelformat{obs}{#2{\upshape(#1)}#3}
\usepackage[letterpaper,margin=1in]{geometry}
\usepackage{enumitem}

\declaretheorem[numberwithin=section]{theorem}
\declaretheorem[numbered=no,name=Theorem]{theorem*}
\declaretheorem[sibling=theorem]{lemma}
\declaretheorem[sibling=theorem]{claim}
\declaretheorem[sibling=theorem]{proposition}
\declaretheorem[sibling=theorem]{corollary}
\declaretheorem[sibling=theorem]{conjecture}

\theoremstyle{definition}
\declaretheorem[sibling=theorem]{definition}
\declaretheorem[sibling=theorem]{fact}

\declaretheorem[sibling=theorem]{problem}

\let\Pr\relax
\DeclareMathOperator*{\Pr}{\mathbf{Pr}}
\DeclareMathOperator*{\E}{\mathbf{E}}

\DeclareMathOperator*{\Var}{\mathbf{Var}}

\DeclarePairedDelimiter{\ket}{\lvert}{\rangle}\DeclarePairedDelimiterX\innerp[2]{\langle}{\rangle}{#1\delimsize\vert\mathopen{}#2}\DeclarePairedDelimiterX\braket[2]{\langle}{\rangle}{#1\delimsize\vert\mathopen{}#2}\DeclarePairedDelimiterX\braketOP[3]{\langle}{\rangle}{#1\,\delimsize\vert\,\mathopen{}#2\,\delimsize\vert\,\mathopen{}#3}\DeclarePairedDelimiterX\ketbra[2]{\lvert}{\rvert}{#1\delimsize\rangle\!\delimsize\langle#2}\DeclarePairedDelimiterX\outerp[2]{\lvert}{\rvert}{#1\delimsize\rangle\!\delimsize\langle#2}\DeclarePairedDelimiterX\projector[1]{\lvert}{\rvert}{#1\delimsize\rangle\!\delimsize\langle#1}

\DeclareMathOperator{\poly}{poly}
\DeclareMathOperator{\polylog}{polylog}

\DeclareMathOperator{\codim}{codim}
\DeclareMathOperator{\diag}{diag}

\newcommand{\F}{\mathbb{F}}

\newcommand{\eps}{\epsilon}

\newcommand{\calA}{\mathcal{A}}
\newcommand{\calB}{\mathcal{B}}
\newcommand{\calC}{\mathcal{C}}

\newcommand{\calF}{\mathcal{F}}

\newcommand{\calK}{\mathcal{K}}

\newcommand{\calM}{\mathcal{M}}

\newcommand{\calT}{\mathcal{T}}

\newcommand{\calX}{\mathcal{X}}

\newcommand{\bone}{\boldsymbol{1}}

\newcommand{\br}{\boldsymbol{r}}

\DeclarePairedDelimiter\norm{\lVert}{\rVert}

\usepackage{booktabs,array,microtype}
\usepackage{pgfplots}
\usepgfplotslibrary{groupplots}
\pgfplotsset{compat=1.18}
\allowdisplaybreaks

\crefname{problem}{problem}{problems}
\Crefname{problem}{Problem}{Problems}

\DeclareMathOperator{\rank}{rank}

\DeclareMathOperator{\mult}{mult}
\DeclareMathOperator{\red}{red}
\DeclareMathOperator{\ev}{ev}

\newcommand{\G}{\mathbb{G}}
\newcommand{\indicator}{\mathsf{1}}
\newcommand{\Dec}{\mathsf{Dec}}

\newcommand{\Fsf}{\mathsf{F}}
\newcommand{\complexi}{\mathsf{i}}

\newcommand{\Tbb}{\mathbb{T}}
\newcommand{\rtsf}{\mathsf{rt}}
\newcommand{\ksf}{\mathsf{k}}
\newcommand{\Solve}{\mathsf{Solve}}

\newcommand{\Isf}{\mathsf{I}}
\newcommand{\Lsf}{\mathsf{L}}
\newcommand{\Xsf}{\mathsf{X}}
\newcommand{\Csf}{\mathsf{C}}
\newcommand{\Hsf}{\mathsf{H}}

\newcommand{\SubsetSum}{\textnormal{\textsc{Subset-Sum}}\xspace}
\newcommand{\FSubsetSum}{$\F_3^n$\textnormal{\textsc{-}}\SubsetSum}
\newcommand{\FSignedSum}{$\F_3^n$\textnormal{\textsc{-Signed-Sum}}\xspace}

\newcommand{\FqSubsetSum}{$\F_q^n$\textnormal{\textsc{-}}\SubsetSum}
\newcommand{\BinaryErrorLWE}{\textnormal{\textsc{Binary-Error LWE}}\xspace}
\newcommand{\PointwiseBinaryErrorLWE}{\textnormal{\textsc{Pointwise Binary-Error LWE}}\xspace}

\newcounter{boxeddefinition}[section]
\renewcommand{\theboxeddefinition}{\thesection.\arabic{boxeddefinition}}

\crefname{boxeddefinition}{definition}{definitions}
\Crefname{boxeddefinition}{Definition}{Definitions}

\title{Exponential quantum speedup for $\mathbb{F}_3^n$-Subset-Sum? \\ Or, rigorous classical algorithms for Binary-Error LWE}
\author{
Robin Kothari\textsuperscript{1}\quad
Tony Metger\textsuperscript{2}\quad
Ryan O'Donnell\textsuperscript{3}\quad
Noah Shutty\textsuperscript{1}\quad
Kewen Wu\textsuperscript{4}
\\[1em]
\small \textsuperscript{1}Google Quantum AI
\\
\small \textsuperscript{2}Courant Institute of Mathematical Sciences and Department of Physics, New York University
\\
\small \textsuperscript{3}Computer Science Department, Carnegie Mellon University
\\
\small \textsuperscript{4}Department of Computing and Mathematical Sciences, California Institute of Technology
}
\date{}

\begin{document}
\maketitle

\begin{abstract}

We study vector subset sum over $\mathbb{F}_3^n$: given $m$ random vectors from $\mathbb{F}_3^n$, find a nonempty subset that sums to zero; the smaller $m$, the more difficult it is to find such a subset.
Chen, Liu, and Zhandry (EUROCRYPT'22) introduced an efficient quantum algorithm that solves this problem when $m\approx n^2/2$, where a naive classical algorithm would require exponential time.
Subsequently, Kothari, O'Donnell, and Wu (STOC'2026) gave an efficient classical algorithm that only requires $m \approx n^2/3$ vectors, thus removing the hope for an exponential quantum advantage in this parameter regime.

Using the framework of Chen, Liu, and Zhandry, we give quantum algorithms that require much fewer input vectors, renewing the possibility of an exponential quantum speedup: for any fixed $\epsilon>0$, our quantum algorithm solves $\mathbb{F}_3$-subset sum in polynomial time with $m=\epsilon\cdot n^2$ vectors.
More generally, we establish a full sample--time tradeoff that interpolates between exponential and polynomial runtime.

The main ingredient is a deterministic classical algorithm for the binary-error Learning-with-Errors problem, which is of independent cryptographic interest.
For this, we rigorously establish a sample--time tradeoff that was predicted by earlier algebraic heuristics.
For vector subset sums over larger fields, we also significantly improve classical algorithms in Kothari, O'Donnell, and Wu (STOC'2026).
\end{abstract}

\newpage
\tableofcontents
\newpage

\section{Introduction}\label{sec:intro}

Finding natural computational problems that admit exponential quantum
speedups is a central goal in quantum computing. Despite decades of
progress, relatively few families of such candidates are known beyond
quantum simulation and problems with hidden algebraic structure. Discovering new candidates would both
broaden the scope of quantum algorithms and help identify which features
of a problem make a large quantum advantage possible. Search problems
with explicit classical inputs and efficiently checkable solutions are
particularly appealing: they give concrete targets for comparing quantum
and classical methods, even when proving classical lower bounds remains
out of reach.

A particularly interesting class of search problems are those related to post-quantum cryptography.
From the practical side, studying problems adjacent to post-quantum cryptography gives us a clearer understanding of the potential for quantum attacks on such schemes, and even relatively minor quantum speedups or quantum speedups in restricted parameter regimes may be important for the deployment of post-quantum cryptography.
From the theoretical side, Regev's reduction~\cite{Reg09}, which was originally introduced to argue for the \emph{quantum hardness} of the Learning With Errors (LWE) problem, has recently been identified as a potential source of \emph{quantum speedups} in related problems~\cite{CLZ22,yamakawa2024verifiable,DQI}.

In this work, we focus on the \FSubsetSum problem:
given uniformly random vectors $h_1,\ldots,h_m\in\F_3^n$, find a nonempty subset whose sum is zero.
This problem is related to the short-integer-solution (SIS) problem in lattice cryptography, which asks for a ``short'' linear combination of a given set of vectors that sums to zero.
In standard SIS, ``short'' means the Euclidean norm of the coefficient vector is small.
\FSubsetSum is similar, except that ``short'' means that the coefficients are only allowed to be $0$ or $1$ (not $2$).

The metric of interest for the \FSubsetSum is the number of vectors $m$ (as a function of the dimension $n$) that an algorithm requires to efficiently find a subset that sums to zero.
Chen, Liu, and Zhandry~\cite{CLZ22} showed how to use Regev's reduction to construct an efficient quantum algorithm for \FSubsetSum for $m \approx n^2/2$, which was a potential exponential quantum speedup. Kothari, O'Donnell, and Wu~\cite{KOW26} subsequently gave classical algorithms
covering the proposed regimes; in particular, their polynomial-time algorithm for \FSubsetSum
uses $m\approx n^2/3$ vectors.

The use of Regev's reduction in~\cite{CLZ22} reduces \FSubsetSum into a classical decoding problem known as binary-error LWE,
which also has an independent cryptographic motivation (see \Cref{sec:quantum} for details of the reduction).
In binary-error LWE over a general field $\mathbb{F}_q$, one needs to recover
a secret from random linear equations perturbed by errors in $\{0,1\}$.
Binary errors allow simple and efficient constructions, including
proposed encryption schemes for constrained devices~\cite{BGGOP16,STA20}.
For suitable growing moduli and restricted sample counts, Micciancio and
Peikert~\cite{MP13} established reductions from standard LWE to binary-error
LWE. On the algorithmic side, Arora and Ge~\cite{AG11} showed how small
error sets can be exploited by solving polynomial equations, giving
efficient attacks when sufficiently many samples are available.
Understanding the sample--time tradeoff is therefore essential to
assessing the hardness of binary-error LWE and the parameters in which
it can support cryptography.
We remark that prior algebraic analyses predict a precise sample--time tradeoff for certain attacks, but rely on regularity assumptions~\cite{ACFP14,STA20}, which remain unproven.

\paragraph{Our contribution.}
We provide a deterministic classical attack for binary-error LWE and prove an unconditional, complete sample--time tradeoff that matches prior predictions.
Through the
quantum reduction of~\cite{CLZ22}, this yields a polynomial-time quantum algorithm for \FSubsetSum when $m=\eps\cdot n^2$ at every fixed $\eps>0$. 

Recall that the prior work~\cite{KOW26} provides an efficient classical algorithm in the regime $\eps \geq 1/3$ (up to asymptotically vanishing corrections).
To more accurately assess the prospects for an exponential quantum advantage for \FSubsetSum, we also improve the classical algorithms for this problem, achieving a constant $\eps = 0.1022$.
Thus, our result suggests that there might an exponential quantum speedup at values of $\eps$ smaller than this.
We note that this value of $\eps$ is the result of an LLM-assisted optimization of existing classical algorithmic ideas.
Naturally, we do not know (or expect) that this constant is optimal, but achieving a similar sample--time tradeoff as we do in the quantum case (and in particular a classical algorithm for any fixed $\eps > 0$) seems to require substantially new ideas.

We also extend some of our results to larger field sizes. These results are stated in the following subsections, though we mostly focus on the $\mathbb{F}_3$-case for concreteness.

\paragraph{Notation.}
Throughout,
$\log$ denotes the base-two logarithm and $[n]=\{1,2,\ldots,n\}$ for each
positive integer $n$. Asymptotic notation hides absolute constants unless
subscripts indicate dependence on fixed parameters; for example,
$O_{q,\kappa}(\cdot)$ permits dependence on $q$ and $\kappa$.

\subsection{\texorpdfstring{Quantum algorithms for average-case \FSubsetSum}
{Quantum algorithms for average-case F3n-Subset-Sum}}\label{sec:intro_f3_quantum}

We begin by formally stating the \FSubsetSum problem.

\begin{problem}[Average-case \FSubsetSum]\label{prob:f3-subset-sum}
Given input vectors $h_1,\ldots,h_m\in\F_3^n$ uniformly at random, the task is to find a nonempty set $S\subseteq[m]$ such that $\sum_{i\in S}h_i=0$.
\end{problem}

The crucial restriction is that we require an unweighted sum of vectors $h_i$, i.e., we demand a linear combination of the vectors $h_i$ with coefficients in $\{0,1\}$; if we also allowed the coefficient 2, the problem would easily by solved by Gaussian elimination.

Naturally, when $m$ is too small compared to $n$, it is likely that there is no subset that sums to zero. 
Once $m$ becomes large enough, a solution will exist with high probability; for even larger $m$, a solution will exist for all choices of $h_i$, though of course if may be intractable to find.
These information-theoretic properties of the \FSubsetSum problem are collected in the following \Cref{fct:f3-subset-sum-existence}, whose proof is in \Cref{sec:fct:f3-subset-sum-existence}.

\begin{restatable}{fact}{fctfsubsetsumexistence}\label{fct:f3-subset-sum-existence}
Let $\kappa_\star=\log(3)\approx1.5850$ and let $\eta>0$ be any constant. 
For \Cref{prob:f3-subset-sum}, 
\begin{itemize}
\item if $m\ge2n+1$, a solution $S$ always exists;
\item if $m\ge(\kappa_\star+\eta)\cdot n$, a solution $S$ exists with probability at least $1-2^{-\Omega_\eta(n)}$; 
\item if $m\le(\kappa_\star-\eta)\cdot n$, a solution $S$ exists with probability at most $2^{-\Omega_\eta(n)}$.
\end{itemize}
\end{restatable}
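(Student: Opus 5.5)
The three items need three separate arguments: Olson's theorem (Davenport constant) for the first, a second-moment count for the second, and a first-moment count for the third.

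\emph{First item.} We use that the Davenport constant of $\F_3^n$ is $D(\F_3^n)=2n+1$ (Olson's theorem for $p$-groups): every sequence of $2n+1$ elements of $\F_3^n$ has a nonempty zero-sum subsequence. For a self-contained proof we would use the Chevalley--Warning theorem. Introduce variables $x_1,\dots,x_m$ over $\F_3$ and the $n$ polynomials
\[
P_j(x)=\sum_{i=1}^m h_{i,j}\,x_i^2,\qquad j\in[n].
\]
Each $P_j$ has degree $2$, so the total degree is $2n<m$. Chevalley--Warning then gives a common zero other than $x=0$. Since $x_i^2\in\{0,1\}$, the set $S=\{i:x_i\neq 0\}$ is nonempty and satisfies $\sum_{i\in S}h_i=0$.

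\emph{Third item.} This is a first-moment bound. Let $X$ be the number of nonempty $S$ with $\sum_{i\in S}h_i=0$. For a fixed nonempty $S$, the sum $\sum_{i\in S}h_i$ is uniform on $\F_3^n$, so each $S$ contributes probability $3^{-n}$. Hence
\[
\Pr[X\ge1]\le \E[X]\le 2^m3^{-n}=2^{m-\kappa_\star n}\le 2^{-\eta n}.
\]

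\emph{Second item.} This is a second-moment argument, and it is the main obstacle, because two subsets can have correlated sums. To control this, we restrict to a family $\mathcal{S}$ where correlations are mild. Take $\mathcal{S}$ to be all nonempty subsets of $[m']$ with $m'=(\kappa_\star+\eta)n$, and let $X=\sum_{S\in\mathcal{S}}\indicator[\sum_{i\in S}h_i=0]$.

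For $S\neq T$, consider the two events $\sum_S h_i=0$ and $\sum_T h_i=0$.
\begin{itemize}
\item The vectors $\indicator_S$ and $\indicator_T$ are distinct nonzero $0/1$ vectors. Over $\F_3$, a $0/1$ vector can equal a scalar multiple of a different $0/1$ vector only via the factor $2$, which maps $0/1$ vectors outside $\{0,1\}^m$ unless the vector is zero. So $\indicator_S$ and $\indicator_T$ are linearly independent.
\item Because the coordinates of the $h_i$ are independent and uniform, the pair $\bigl(\sum_S h_i,\sum_T h_i\bigr)$ is uniform on $(\F_3^n)^2$.
\item Hence the two events are independent, each with probability $3^{-n}$.
\end{itemize}
So the indicators are pairwise independent, and
\[
\Var[X]\le \E[X], \qquad \E[X]=(2^{m'}-1)3^{-n}\ge 2^{\eta n-1}.
\]
Chebyshev then gives
\[
\Pr[X=0]\le \frac{\Var[X]}{\E[X]^2}\le \frac{1}{\E[X]}=2^{-\Omega(\eta n)}.
\]
Finally, if $m\ge m'$, a solution inside $[m']$ is also a solution for the full input.

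The step that needs care is the linear-independence check over $\F_3$, which reduces to the scalar-multiple observation above; everything else is routine. Pairwise independence removes any need for a delicate overlap analysis.
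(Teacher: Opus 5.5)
Your proof is correct and follows the paper's argument. The paper likewise cites Olson/Diderrich--Mann for the first item, and for the other two it uses $X=\sum_{S\neq\emptyset}I_S$, pairwise independence of the indicators, Markov's inequality for the upper bound and a second-moment (Chebyshev/Paley--Zygmund) bound for the lower bound. Your additions are a self-contained Chevalley--Warning proof of the first item (via $P_j=\sum_i h_{i,j}x_i^2$) in place of the citation, and a restriction to subsets of $[m']$ with $m'=\lceil(\kappa_\star+\eta)n\rceil$. That restriction is harmless but unnecessary: pairwise independence holds for all nonempty $S\subseteq[m]$, and a larger $\E[X]$ only helps.
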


We are interested in the computational difficulty of solving \Cref{prob:f3-subset-sum} as a function of the number of vectors $m$.
Prior to our work, the best (classical or quantum) algorithm achieved $m\approx n^2/3$ \cite{KOW26}.
Our main result is a quantum algorithm for \Cref{prob:f3-subset-sum} that smoothly interpolates between the information-theoretic threshold $m\approx\kappa_\star\cdot n$ and the prior polynomial-time-solvable threshold $m\approx n^2/3$. In particular, this yields a polynomial-time quantum algorithm for $m = \epsilon \cdot n^2$ for any fixed $\epsilon$.

\begin{restatable}{theorem}{quantummain}\label{thm:quantum-main}
Let $\kappa>\log(3)$ be any constant.
For every $m\ge\kappa\cdot n$, there is a quantum algorithm that solves \Cref{prob:f3-subset-sum} with probability at least $1-2^{-\Omega_\kappa(m)}$ and runs in time $(m/n)^{O_\kappa(\lceil n^2/m\rceil)}$.
\end{restatable}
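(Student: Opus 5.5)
The proof will follow the Chen--Liu--Zhandry framework: Regev's reduction turns \Cref{prob:f3-subset-sum} into a classical decoding task, namely binary-error LWE over $\F_3$, which we then solve with the promised deterministic algorithm and its sample--time tradeoff. Let $H\in\F_3^{n\times m}$ have columns $h_i$, and write $C=\ker H\subseteq\F_3^m$. A solution is a nonzero $x\in C\cap\{0,1\}^m$.

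\textbf{Quantum state preparation.} First prepare the uniform superposition over $x\in\{0,1\}^m$, tensored with an arbitrary amplitude vector over the error alphabet. Compute $Hx$ into a register, i.e. apply the map $\ket{x}\mapsto\ket{x}\ket{Hx}$. Then apply the quantum Fourier transform over $\F_3^m$. In the dual picture, Regev's reduction works as follows: start from $\sum_{s\in\F_3^n}\ket{s}\otimes\sum_{e\in\{0,1\}^m}\ket{H^\top s+e}$, uncompute $s$ by decoding, and Fourier transform; measuring then yields a vector supported on $C$ weighted by $\prod_i\widehat{\phi}(x_i)$. The error amplitude $\phi$ on $\{0,1\}$ is chosen so that $\widehat\phi(2)=0$, for instance $\phi(0)=1$ and $\phi(1)=-\omega^{2}$ up to normalization. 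With this choice the measured $x$ lies in $C\cap\{0,1\}^m$. Equivalently, the measured $x$ is $0$ with probability bounded away from $1$, provided the number of solutions is large relative to the Fourier weight at $0$. This holds because $m\ge\kappa n$ with $\kappa>\log 3$, so $|C\cap\{0,1\}^m|\approx 2^m/3^n$ is exponentially large by \Cref{fct:f3-subset-sum-existence}. The key requirement is the uncomputation step. Given $y=H^\top s+e$ with $e\in\{0,1\}^m$, we must recover $s$ coherently, and this is exactly binary-error LWE with $m$ samples and secret dimension $n$.

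\textbf{Decoding step (the main obstacle).} We need a deterministic classical algorithm that solves binary-error LWE with matrix $H^\top$ for \emph{all} errors $e\in\{0,1\}^m$, or for all but a negligible fraction of superposition weight, for a $1-2^{-\Omega(m)}$ fraction of $H$. It must run in time $(m/n)^{O(\lceil n^2/m\rceil)}$. The plan is an Arora--Ge linearization. Each sample gives the quadratic equation $(y_i-\langle h_i,s\rangle)(y_i-\langle h_i,s\rangle-1)=0$ over $\F_3$. We multiply these by all monomials of degree up to $D-2$ (a Macaulay-matrix, XL-style approach) and linearize over the monomials of degree at most $D$ in $s$. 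This space has $\binom{n+D}{D}$ unknowns, and we solve for the degree-one part.

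The hard part is proving, unconditionally, that the degree-$D$ Macaulay system has full enough rank to pin down $s$, with $D=O(\lceil n^2/m\rceil)$. This is the regularity assumption that prior work left unproven. I would try to establish it by a dual argument. Any spurious solution of the linearized system is a degree-$D$ pseudo-expectation, i.e. a linear functional on polynomials that is consistent with all the constraints. I would show that for random $H$ this functional must vanish on the ideal, using a counting or Schwartz--Zippel union bound over low-degree polynomials in the span. Equivalently, one can use a Hilbert-series comparison: show that the span of the products reaches all degree-$D$ polynomials in the quotient by $(s_j^3-s_j)$. Once $m\gg n^2/D$, the number of products $m\binom{n+D-2}{D-2}$ exceeds the number of monomials $\binom{n+D}{D}$ by a constant factor.

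\textbf{Putting it together.} The runtime is dominated by linear algebra on matrices of size $\binom{n+D}{D}\le (en/D+e)^D$. With $D=\Theta(n^2/m)$ we have $n/D=\Theta(m/n)$, which gives the bound $(m/n)^{O(\lceil n^2/m\rceil)}$. Because the decoder is deterministic and correct on every error pattern, it can be implemented reversibly. The resulting quantum state is therefore exact, and the success probability $1-2^{-\Omega_\kappa(m)}$ comes from two sources: the probability over $H$ that the rank property holds, and the count of solutions from \Cref{fct:f3-subset-sum-existence}.
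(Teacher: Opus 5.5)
Your dual-picture filtering, with a $\{0,1\}$ error amplitude whose Fourier transform vanishes at $2$ and uncomputation of $s$ by a classical binary-error LWE decoder, is the right Chen--Liu--Zhandry reduction. Even so, two steps of your derivation fail.

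First, your claim that the decoder is ``correct on every error pattern,'' so that the state is exact, is impossible. Since $\{0,1\}-\{0,1\}=\{-1,0,1\}=\F_3$, for every $H$ and every $s\ne s'$ there exist $e,e'\in\{0,1\}^m$ with $H^\top s+e=H^\top s'+e'$. So every decoder errs on some $(s,e)$ for every $H$; \Cref{thm:lwe-uniform} is only pointwise over random $A$. You need the robust bound of \Cref{thm:coherent-reduction}: failure is at most $2\eta+3^n(2/3)^m$. The term $3^n(2/3)^m$ comes from cross terms between colliding pairs $(s,e)\ne(s',e')$ in the lost component (the paper's $\ket{B_A}$), not just from the weight of the branch where decoding fails.

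Second, that term is at least $1$ when $m/n\le\log_{3/2}3\approx2.71$. Indeed, by \Cref{fct:lwe-existence}, below that threshold the LWE secret is non-unique with high probability, so no decoder can uncompute $s$. The relevant threshold is LWE uniqueness, not the subset-sum existence threshold of \Cref{fct:f3-subset-sum-existence} that you invoke. The paper handles $\kappa n\le m<3n$ by brute-force enumeration of subsets. This fits the time budget: there $\lceil n^2/m\rceil\ge n/3$ and $m/n\ge\kappa>1$, so $(m/n)^{O_\kappa(\lceil n^2/m\rceil)}$ absorbs $2^m\poly(m,n)$. The reduction is used only for $m\ge3n$, where $3^n(2/3)^m=2^{-\Omega(m)}$.

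For the decoder, citing \Cref{thm:lwe-uniform} is what the paper does, but the self-contained route you sketch would not prove it. The crux is \Cref{thm:uniform-truncated-top}: for $R=\F_3[x]/(x_1^3,\ldots,x_n^3)$ and $D=\lceil500n^2/m\rceil$, the map $(Q_i)_i\mapsto\sum_iL_i^2Q_i$ from $R_{D-2}^m$ to $R_D$ is onto with probability $1-e^{-\Omega(m)}$.

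Counting products against monomials is only a necessary condition. The products satisfy forced relations, such as $L_i^2\cdot L_j^2Q=L_j^2\cdot L_i^2Q$ and $L_i\cdot L_i^2=L_i^3=0$ in $R$. For $D\ll n$ the count is already met at $m\approx n^2/D^2$, far below the $n^2/D$ scale where surjectivity is expected. Asserting that the Hilbert function is what counting predicts is exactly the semi-regularity hypothesis earlier work had to assume.

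A union bound over the dual space is also hopeless. There are $3^{\dim R_D}$ functionals, and each survives a sample with probability at least $3^{-n}$ (take $L=0$), so you would need $m\gtrsim\dim R_D/n$.

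The paper instead runs a martingale on the codimension $h_t$ of $V_t=\sum_{i\le t}L_i^2R_{D-2}$, in four steps:
\begin{enumerate}
\item The gain $X_t$ equals the rank of the contraction $C_{L_t}$ restricted to $W_{t-1}=V_{t-1}^\perp$.
\item The multiplicity Schwartz--Zippel lemma gives $\E[X_t]\ge\rho/3$, where $\rho$ is the symbolic rank over $\F_3(z_1,\ldots,z_n)$.
\item Borel's fixed-point theorem for the triangular group replaces $W_{t-1}$ by a Borel-fixed monomial space. There $\rho$ counts monomials divisible by $x_1^2$, and the slices satisfy $\partial\calA_d\subseteq\calA_{d+1}$.
\item A multiset Lov\'asz-type shadow bound yields $\rho\ge(h/3)^{1-2/D}$. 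Its reciprocal integrates to at most $100n^2/D$, and Azuma--Hoeffding finishes.
\end{enumerate}
Finally, passing from surjectivity to recovering $s$ needs its own argument, not just ``solve for the degree-one part.'' That argument can be a solving-degree bound, or the paper's quotient decoder with multiplication matrices and $V/N\cong S/J$.
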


Making a few representative parameter choices, we specialize \Cref{thm:quantum-main} as follows.

\begin{corollary}\label{cor:quantum-main}
\Cref{prob:f3-subset-sum} can be solved in quantum
\begin{itemize}
    \item polynomial $\poly(n)$ time for every $m\ge\Omega(n^2)$,
    \item quasipolynomial $2^{\polylog(n)}$ time for every $m\ge n^2/\polylog(n)$,
    \item truly-subexponential $2^{O_\alpha(n^{1-\alpha})}$ time for every $m\ge n^{1+\alpha}\cdot\log n$, where $0<\alpha<1$ is a constant.
\end{itemize}
\end{corollary}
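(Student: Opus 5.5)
The plan is to obtain each item by plugging parameters into \Cref{thm:quantum-main}, with the constant fixed once and for all at $\kappa = 2 > \log(3)$, so that every $O_\kappa(\cdot)$ becomes an absolute constant. The one preliminary observation is a \emph{truncation step}. If the algorithm is given $m$ vectors, it may discard all but the first $m' \le m$ of them. The first $m'$ vectors are still uniform and independent in $\F_3^n$, and a zero-sum subset $S\subseteq[m']$ is also a valid solution for the original instance. Hence we may replace $m$ by any $m'\le m$ with $m'\ge 2n$ and run \Cref{thm:quantum-main} on the first $m'$ vectors. Reading the input costs only $O(mn)$, which is polynomial in the input size. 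This matters because the factor $(m/n)$ in the running time of \Cref{thm:quantum-main} would otherwise blow up when $m$ is far larger than needed. The success probability is $1-2^{-\Omega(m')}$, which is $1-o(1)$ in all three cases below.

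\textbf{First item.} Suppose $m \ge c n^2$ for a constant $c>0$, and set $m' = \lceil c n^2\rceil \le m$. For $n$ large enough, $m'\ge 2n$. Moreover $\lceil n^2/m'\rceil \le \lceil 1/c\rceil = O_c(1)$ and $m'/n \le c n + 1$. The running time is therefore $(O(n))^{O(1)} = \poly(n)$.

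\textbf{Second item.} Suppose $m \ge n^2/\log^{c} n$, and set $m' = \lceil n^2/\log^c n\rceil$. Then $\lceil n^2/m'\rceil \le \log^c n + 1$ and $m'/n \le n$. The time is $n^{O(\log^c n)} = 2^{O(\log^{c+1} n)}$, which is quasipolynomial.

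\textbf{Third item.} Suppose $m \ge n^{1+\alpha}\log n$, and set $m' = \lceil n^{1+\alpha}\log n\rceil$. Then $\lceil n^2/m'\rceil \le n^{1-\alpha}/\log n + 1$ and $\log(m'/n) \le \alpha\log n + \log\log n + 1 \le 2\log n$. The time is
\[
2^{O\left((n^{1-\alpha}/\log n + 1)\cdot 2\log n\right)} = 2^{O(n^{1-\alpha})}.
\]
The $\log n$ factor in the hypothesis is exactly what cancels the $\log(m'/n)$ in the exponent. The constant hidden here is absolute; the subscript $\alpha$ only covers the threshold on $n$ beyond which $m'\ge 2n$ and the displayed bounds hold.

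There is no real obstacle here. The only point needing care is the truncation to $m'$, which keeps the base $m/n$ polynomially bounded. Beyond that, one only checks $m'\ge \kappa n$ for large $n$ and handles the ceiling in $\lceil n^2/m\rceil$, which contributes an additive $1$ and is harmless in all three regimes.
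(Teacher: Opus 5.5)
Your proposal is correct and takes the same route as the paper, which obtains the corollary simply by specializing \Cref{thm:quantum-main} to these parameter choices. Your explicit truncation to the first $m'$ vectors is a detail the paper leaves implicit. It is what makes the $\poly(n)$ and $2^{O_\alpha(n^{1-\alpha})}$ bounds hold literally when $m\gg n^2$, provided the algorithm reads only those $m'$ vectors rather than the whole $O(mn)$-size input.
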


We briefly compare this to the best known classical algorithms.
Write $r=m/n$.  
If $r=\Theta(1)$, the known attacks remain exponential~\cite{BCDL19,CDE21,KL22,WANT26}.
If $r\to\infty$, the prior Wagner-type algorithms achieve runtime $2^{O(n/\log r)}$~\cite{Wagner02,BCDL19,CDE21}.
By contrast, our \Cref{thm:quantum-main} has superior runtime $2^{O(\lceil n/r\rceil\log r)}$: for example, if $r=n^\alpha\log n$ and $m=n^{1+\alpha}\log n$, our runtime is $2^{O_\alpha(n^{1-\alpha})}$ whereas theirs are $2^{O_\alpha(n/\log n)}$.
For polynomial-time solvers, the smallest prior threshold was $m\ge(1/3+o(1))\cdot n^2$~\cite{KOW26}, building on earlier works~\cite{ISS12,IPS18,CLZ22}. In comparison,
\Cref{cor:quantum-main} covers $m\ge\eps\cdot n^2$ for every constant $\eps>0$.

More loosely related are Wagner analyses for other short-relation
problems~\cite{DEL25} and cryptographic assumptions~\cite{Wave19,Wave23,BBWW26}.
Since none of the latter has exactly the distribution and parameter regime of
\Cref{prob:f3-subset-sum}, our asymptotic result alone is not a concrete attack
on them.

\subsection{\texorpdfstring{Classical algorithms for \BinaryErrorLWE over $\F_q$}{Classical algorithms for Binary-Error LWE over Fq}}\label{sec:intro-lwe}

To achieve the results in \Cref{thm:quantum-main}, we use the framework of~\cite{CLZ22}, which makes use of Regev's reduction~\cite{Reg09} to reduce \FSubsetSum to \BinaryErrorLWE. 
In particular, any improved classical solver, which we also call a decoder, for \BinaryErrorLWE yields an improved quantum algorithm for \FSubsetSum.
In this subsection, we explain our decoder for \BinaryErrorLWE and state the rigorous bounds we prove.

As noted, \BinaryErrorLWE is also of independent cryptographic interest over general fields $\F_q$. Our decoder satisfies the following pointwise recovery guarantee.

\begin{problem}[\BinaryErrorLWE over $\F_q$ with a pointwise recovery guarantee]\label{prob:pointwise-binary-error-lwe}
Let $q\ge3$ be a fixed prime. The input is $(A,b)$, where
$A\in\F_q^{n\times m}$ is uniformly random and $b=A^\top s+e$, for
arbitrary unknown vectors $s\in\F_q^n$ and $e\in\{0,1\}^m$ chosen
independently of $A$. The task is to recover the planted secret $s$.

A deterministic decoder $\Dec$ has \emph{pointwise failure probability}
at most $\delta$ if, for every $s\in\F_q^n$ and $e\in\{0,1\}^m$,
\[
 \Pr_{A\sim\F_q^{n\times m}}
 \left[\Dec(A,A^\top s+e)\ne s\right]\le\delta.
\]
The decoder receives only $(A,b)$, and the same decoder must satisfy this
guarantee for all $s,e$.
\end{problem}

We focus on this more general problem, and our results for \FSubsetSum follow as a special case for $\mathbb{F}_3$. 
Similarly to \FSubsetSum, it is easy to establish a uniqueness threshold for \Cref{prob:pointwise-binary-error-lwe}. This is summarized in \Cref{fct:lwe-existence}, which we prove in \Cref{sec:fct:lwe-existence}.

\begin{restatable}{fact}{fctlweexistence}\label{fct:lwe-existence}
Let $\kappa_q=\log_{q/2}(q)$ and let $\eta>0$ be any constant.
For \Cref{prob:pointwise-binary-error-lwe}, if
$m\ge(\kappa_q+\eta)\cdot n$, a unique solution $s$ exists with probability
at least $1-2^{-\Omega_{q,\eta}(m)}$; if
$m\le(\kappa_q-\eta)\cdot n$, a unique solution $s$ exists with probability
at most $2^{-\Omega_{q,\eta}(n)}$.
\end{restatable}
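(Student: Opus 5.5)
Both halves of \Cref{fct:lwe-existence} come from a first-moment argument on the number of alternative explanations $(s',e')$ of $b$, together with a counting argument in the opposite direction. By linearity, $b=A^\top s'+e'$ with $s'\ne s$ means $A^\top(s-s')=e'-e$. Write $t=s-s'\ne0$ and $d=e'-e\in\{-1,0,1\}^m$. Then the constraint is that $e+d\in\{0,1\}^m$, so in each coordinate $d_i$ ranges over exactly two values ($\{0,1\}$ if $e_i=0$, $\{-1,0\}$ if $e_i=1$).

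\textbf{Uniqueness above the threshold.} Fix a nonzero $t$. Since $A$ is uniform, $A^\top t$ is uniform in $\F_q^m$. The probability that $A^\top t\in e'-e$ for some $e'\in\{0,1\}^m$ is therefore at most
\[
\prod_i \Pr[(A^\top t)_i\in\{\text{two allowed values}\}]=(2/q)^m,
\]
since the coordinates are independent and uniform. A union bound over the $q^n-1$ choices of $t$ gives failure probability at most $q^n(2/q)^m=2^{n\log q-m\log(q/2)}$. When $m\ge(\kappa_q+\eta)n$, the exponent is at most $-\eta n\log(q/2)$. Because $m\le$ some multiple of $n$ is not assumed, I would rewrite the exponent as $-m\log(q/2)\,(1-\kappa_q n/m)\le -m\log(q/2)\cdot\eta/(\kappa_q+\eta)$. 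This yields $2^{-\Omega_{q,\eta}(m)}$, as claimed, using $q\ge3$ so that $\log(q/2)>0$.

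\textbf{Non-uniqueness below the threshold.} Consider the map $(s',e')\mapsto A^\top s'+e'$ from $\F_q^n\times\{0,1\}^m$, a set of size $q^n2^m$, into $\F_q^m$. I expect this direction to be the main obstacle, since the statement is pointwise in $(s,e)$ and I need a second-moment or direct argument rather than pure counting. My first try is a second-moment argument on
\[
X=\#\{t\ne0: A^\top t\in D\},\qquad D=\{e'-e:e'\in\{0,1\}^m\},
\]
with $|D|=2^m$. We have $\E X\approx q^n(2/q)^m\ge 2^{\eta n\log(q/2)}$, which is exponentially large when $m\le(\kappa_q-\eta)n$. For pairwise independence, if $t,t'$ are linearly independent then $A^\top t$ and $A^\top t'$ are independent and uniform, so those pairs contribute covariance $0$. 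Dependent pairs ($t'=ct$) number at most $q\cdot q^n$, and each contributes at most $\Pr[A^\top t\in D]$, so they add $O(q\,\E X)$ to $\Var X$. Chebyshev then gives
\[
\Pr[X=0]\le \Var X/(\E X)^2=O(q/\E X)=2^{-\Omega_{q,\eta}(n)}.
\]
Whenever $X>0$ there is a second solution $s'=s-t$, so uniqueness fails with the stated probability. The only care needed is the exact variance bookkeeping ($\Var X\le \E X+\sum_{\text{dependent pairs}}$), which is routine.
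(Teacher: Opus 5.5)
Your proposal is correct and follows the paper's proof: above the threshold, a first-moment (union bound/Markov) estimate $q^n(2/q)^m$; below it, a second-moment (Chebyshev) argument that uses the independence of $A^\top t$ and $A^\top t'$ for linearly independent $t,t'$. The only difference is that the paper computes $\E[I_uI_{cu}]=q^{-m}$ exactly for $c\ne0,1$, whereas you bound it by $(2/q)^m$ to get $\Var X\le(q-1)\E X$; this cruder bound suffices because $q$ is fixed.
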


For \Cref{prob:pointwise-binary-error-lwe}, we provide a deterministic classical algorithm whose runtime scales favorably with the number of input vectors.
In the usual \BinaryErrorLWE, the error vector $e$ is sampled from a specified distribution, and the secret $s$ may also be random~\cite{MP13}.
Our decoder's success guarantee holds for every secret and binary error vector, with probability taken only over the random matrix $A$.
The matrices $A$ on which the decoder fails may depend on $(s,e)$.
Averaging this guarantee over uniformly random $s,e$ gives the average-case guarantee that suffices for our quantum reduction.

\begin{restatable}{theorem}{lweuniform}
\label{thm:lwe-uniform}
Let $\kappa>\log_{q/2}(q)$ be any constant.
For every $m\ge\kappa\cdot n$, there is a deterministic classical algorithm that solves
\Cref{prob:pointwise-binary-error-lwe} with probability at least
$1-2^{-\Omega_{q,\kappa}(m)}$ and runs in time
$(m/n)^{O_{q,\kappa}(\lceil n^2/m\rceil)}$.
\end{restatable}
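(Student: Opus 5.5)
We plan to use an Arora--Ge style linearization. The key is to prove rigorously that a degree-$D$ system has the expected rank, with $D = O(\lceil n^2/m\rceil)$ relative to the sample ratio.

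\textbf{Setting up the system.} Write $a_i\in\F_q^n$ for the $i$-th column of $A$. Each sample gives $e_i=b_i-\langle a_i,s\rangle\in\{0,1\}$, so $s$ is a common root of the $m$ quadratics
\[
f_i(x)=(b_i-\langle a_i,x\rangle)(b_i-\langle a_i,x\rangle-1).
\]
Fix a degree bound $D$. Form the Macaulay matrix whose rows are the products $x^\alpha f_i$ with $|\alpha|\le D-2$, reduced modulo the field equations $x_j^q-x_j$. Its columns are indexed by the reduced monomials of degree at most $D$.

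Substitute $x=y+s$. Then $f_i$ becomes $g_i(y)=(\langle a_i,y\rangle-e_i)(\langle a_i,y\rangle-e_i+1)$. This is the key point: the row space of the shifted system depends on $A$ and $e$ but not on $s$. That is exactly what lets the guarantee hold for every fixed pair $(s,e)$.

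\textbf{Decoding step.} The goal is to show that, with probability $1-2^{-\Omega(m)}$ over $A$, the row span of the degree-$D$ Macaulay matrix contains a polynomial of the form $x_j-c_j$ for every $j$. This is equivalent to saying that degree-$D$ ideal membership certifies $x_j-s_j$. Given that, Gaussian elimination over the monomial space recovers $s$ deterministically. The cost is
\[
\binom{n+D}{D}^{O(1)}\le (n/D)^{O(D)}=(m/n)^{O(\lceil n^2/m\rceil)}
\]
when $D\approx c\,n^2/m$. When $m\ge n^2$, take $D$ constant, so the runtime is polynomial. For $m$ close to $\kappa n$ the degree becomes linear in $n$, and the bound degrades to exponential, as the statement allows.

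Equivalently, the certificate can be phrased dually. The dual space is the set of linear functionals on degree-$\le D$ polynomials that vanish on all $x^\alpha g_i$. Each point $y\in\F_q^n$ with $\langle a_i,y\rangle-e_i\in\{0,1\}$ for all $i$ gives the evaluation functional $\ev_y$. The target is that the only such functional, up to degree-$D$ "pseudo-solutions," is $\ev_0$ at the true secret. So we must rule out pseudo-expectations of degree $D$ other than the truth.

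\textbf{Main obstacle.} The heart of the proof is a union bound over nonzero "pseudo-distributions" $\mu$: functionals on $\F_q[y]_{\le D}$ that are not proportional to $\ev_0$. The approach is:
\begin{enumerate}
\item For a fixed $\mu$ and a single random column $a_i$, the constraints $\mu(y^\alpha g_i)=0$ for all $|\alpha|\le D-2$ fail with probability at least some constant $p(D,n)$.
\item Independence across the $m$ columns then gives failure of all samples with probability $(1-p)^m$.
\item This must beat the number of candidate $\mu$, roughly $q^{\binom{n+D}{D}}$, after reducing to canonical representatives. That requires $m\,p\gtrsim \binom{n+D}{D}$-dimensional entropy.
\end{enumerate}
A naive count is far too large. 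Instead I would decompose $\mu$ by its lowest-degree nonvanishing homogeneous part and use Schwartz--Zippel / DeMillo--Lipton on the random linear form $\langle a_i,y\rangle$. This should give that each column kills a $\gtrsim 1/q^{O(1)}$ fraction of a $\binom{n}{\le D}$-sized subspace.

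I would then run an incremental rank argument in the spirit of semi-regularity proofs. Process columns one at a time and show that each new $g_i$, multiplied by degree-$(D-2)$ monomials, raises the rank by a fresh amount with constant probability. This continues until the rank reaches the dimension of the space of degree-$\le D$ polynomials minus those vanishing on the unique solution. The counting closes when $m\cdot D^2\gtrsim n^2$, which is where $D\approx n^2/m$ comes from. Handling this rank-growth lemma uniformly for large $D$ is the step I expect to be hardest. I would first try to prove it for multilinear truncations over $\{0,1\}$-valued forms via a Boolean-cube Fourier argument, and then lift it to $\F_q$.
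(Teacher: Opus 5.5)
There is a genuine gap. The step you defer as hardest, that each new sample raises the rank ``by a fresh amount with constant probability,'' is essentially the whole theorem, and in that form it cannot close the count. The gain has to depend on the current codimension $h$, at rate about $h^{1-2/D}$. The paper proves $\E[X_t\mid\calF_{t-1}]\ge\rho/3$ and $\rho\ge(h/3)^{1-2/D}$. The number of samples is then controlled by $\int_0^H du/\gamma(u)\lesssim D\,H^{2/D}\lesssim n^2/D$, where $H=\dim R_D$. By contrast, a gain of order $(D^2/n^2)h$, which is what a popular-pair argument gives, already loses a logarithmic factor.

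None of the tools that produce this rate appear in your plan. The paper's chain is:
\begin{enumerate}
\item Dualize to the contraction maps $C_L$ restricted to the annihilator $W_{t-1}$.
\item Compare the rank at a random $L$ with the symbolic rank over $\F_q(z_1,\dots,z_n)$, using the multiplicity Schwartz--Zippel lemma. A witnessing minor has degree at most $2\rho$, and the rank drop at $a$ is at most $\mult_a$ of its determinant.
\item Use Borel's fixed-point theorem for the triangular group to replace $W$ by a Borel-fixed monomial subspace of equal dimension and no larger symbolic rank. That symbolic rank equals the rank of contraction by $L=x_1$, which is the number of its monomials with $x_1$-exponent at least $2$.
\item Bound this count using the slice closure $\partial\calA_d\subseteq\calA_{d+1}$ and a multiset Lov\'asz shadow bound.
\end{enumerate}
Your union bound over pseudo-distributions cannot work, as you note. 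It is also unclear how a Boolean-cube Fourier argument would apply, since $R=\F_q[x]/(x_1^q,\dots,x_n^q)$ has non-multilinear monomials and the forms have $\F_q$ coefficients. Finally, your count $mD^2\gtrsim n^2$ gives $D\approx n/\sqrt m$, not $n^2/m$. That is only the naive necessary condition $m\dim R_{D-2}\ge\dim R_D$, which ignores syzygies among the $L_i^2$; the $n^2/D$ scaling comes out of the potential argument above.

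Second, the link from rank growth to decoding is missing. A sample-by-sample argument naturally controls only the leading forms $L_i^2Q\in R_D$. That is also what makes it independent of $b$, and hence pointwise in $(s,e)$. Your target, however, is that the degree-$D$ XL span already contains $x_j-c_j$, which means all lower-degree relations produced by degree falls must already be present. Surjectivity of $\Phi_{m,D}$ does not by itself give this at degree $D$: multiplying a degree-fall relation by a variable leaves the span of $\{x^\alpha f_i:|\alpha|\le D-2\}$.

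The paper bridges this with the quotient decoder. Surjectivity makes $V=\langle\calM_{<D}\rangle$ span $S/J$. One then builds reduction-based multiplication maps $\Xsf_j$ on $V$ and closes the relation space $N$ (commutators, field equations, sample equations) under them. This gives $V/N\cong S/J\cong\prod_{u\in\calC}\F_q$, and $s$ is read off when $\calC=\{s\}$. Uniqueness holds for each fixed $(s,e)$ with probability $1-2^{-\Omega(m)}$, by a first-moment bound that needs $\kappa>\log_{q/2}q$. You would need this construction, or an appeal to a solving-degree bound in terms of the degree of regularity (Semaev--Tenti, Salizzoni), together with an explicit uniqueness step.
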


We also specialize \Cref{thm:lwe-uniform} with a few parameter choices.

\begin{corollary}\label{cor:lwe-uniform}
For constant $q\ge3$, \Cref{prob:pointwise-binary-error-lwe} can be solved classically in
\begin{itemize}
    \item polynomial $\poly(n)$ time for every $m\ge\Omega(n^2)$,
    \item quasipolynomial $2^{\polylog(n)}$ time for every $m\ge n^2/\polylog(n)$,
    \item truly-subexponential $2^{O_\alpha(n^{1-\alpha})}$ time for every
    $m\ge n^{1+\alpha}\cdot\log n$, where $0<\alpha<1$ is a constant.
\end{itemize}
\end{corollary}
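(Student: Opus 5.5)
This corollary follows directly from \Cref{thm:lwe-uniform} by choosing $m$ in each regime and evaluating the runtime bound $(m/n)^{O_{q,\kappa}(\lceil n^2/m\rceil)}$. In every case $m\ge n^{1+\alpha}\log n$ for large $n$, so the hypothesis $m\ge\kappa n$ holds for any fixed constant $\kappa>\log_{q/2}(q)$, say $\kappa=\log_{q/2}(q)+1$. Fixing such a $\kappa$ makes all hidden constants depend only on $q$. The failure probability $2^{-\Omega_{q}(m)}$ is then negligible, so it suffices to bound the running time in each regime.

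We may also assume $m\le\poly(n)$, say $m\le n^3$. The decoder only needs a prefix of the samples: given more than $n^3$ samples, run it on the first $n^3$ columns of $A$. The truncated instance is again a uniformly random $A'$ with binary error, so it is a valid instance of \Cref{prob:pointwise-binary-error-lwe}. Reading the input costs polynomial time in its size. With this truncation in place, $m/n\le n^2$, so $\log(m/n)=O(\log n)$.

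\textbf{First regime.} If $m\ge cn^2$ with $c>0$ constant, then $\lceil n^2/m\rceil\le\lceil 1/c\rceil=O(1)$. The runtime is $(m/n)^{O(1)}=\poly(n)$.

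\textbf{Second regime.} If $m\ge n^2/\log^k n$, then $\lceil n^2/m\rceil\le\log^k n+1$. The runtime is $2^{O(\log^k n)\cdot O(\log n)}=2^{\polylog(n)}$.

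\textbf{Third regime.} If $m\ge n^{1+\alpha}\log n$, then $\lceil n^2/m\rceil\le n^{1-\alpha}/\log n+1$. The runtime is $2^{O(n^{1-\alpha}/\log n+1)\cdot O(\log n)}=2^{O_\alpha(n^{1-\alpha})}$. The $\log n$ factor in the sample count is exactly what cancels the $\log(m/n)$ in the exponent.

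There is no real obstacle here. The only point needing care is justifying that restricting to a prefix of the columns preserves the problem distribution and the pointwise guarantee, which is immediate because the columns of $A$ are independent and $e$ is arbitrary.
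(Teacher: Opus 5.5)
Your proposal is correct and takes the same route as the paper, which obtains this corollary simply by substituting each regime of $m$ into the runtime bound $(m/n)^{O_{q,\kappa}(\lceil n^2/m\rceil)}$ of \Cref{thm:lwe-uniform} with a fixed $\kappa$. Your extra step of truncating to the first $\min\{m,n^3\}$ columns, which the paper leaves implicit, is a valid way to keep $\log(m/n)=O(\log n)$, and hence the stated bounds in terms of $n$, when $m$ is superpolynomial in $n$.
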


The works most closely related to \Cref{prob:pointwise-binary-error-lwe} are attacks based on linearization, Gr\"obner bases, and BKW
\cite{AG11,SemaevTenti21,KS99,CKPS00,FIMS03,Ivanyos07,IPS18,BKW03}.

For every $q\ge3$ and $m=\Theta(n)$, the best unconditional bounds remain
exponential, with precise exponents analyzed in~\cite{BCDL19,CDE21,KL22,WANT26,BBPSWW23,BBO24}.
For $m\ge\kappa n$, guessing all but $d=\Theta_q(\sqrt m)$ secret coordinates and applying the known attacks in the remaining dimension gives $q^{n-d}$ runtime.  Semi-regularity \emph{heuristically} predicts a much better curve $(m/n)^{O_q(\lceil n^2/m\rceil)}$~\cite{ACFP14,ACFP15,ACFFP15BKW,STA20,Steiner24}, which \Cref{thm:lwe-uniform} \emph{unconditionally} proves in the stronger pointwise sense.
For polynomial-time attacks, the prior best rigorous guarantee was $m\ge(\kappa_q/2+o(1))n^2$~\cite{ACFP14}, which \Cref{cor:lwe-uniform} improves to $m\ge\eps\cdot n^2$ for every constant $\eps>0$.

More loosely related are binary-error LWE hardness reductions, growing-modulus
BKW, hybrid lattice/meet-in-the-middle attacks, bounded-error algebraic attacks,
and cryptosystems based on nearby binary-error or restricted-decoding
assumptions~\cite{MP13,KF15,BGPW16,CMSU25,BGGOP16,GJLS21,BBWW26}.  Their
matrix structure, secret or error distributions, or sample rates differ from \Cref{prob:pointwise-binary-error-lwe}.

\subsection{\texorpdfstring{Classical algorithms for \FqSubsetSum}{Classical algorithms for Fq-Subset-Sum}}\label{sec:intro-fq}

To assess the prospects for an exponential quantum speedup for \FSubsetSum, we need to compare our quantum algorithm to the best classical algorithms.
Prior work~\cite{KOW26} gave an efficient classical algorithm that uses $m \approx n^2/3$ equations.
We improve this to $m = c_\star\cdot n^2$, where
\begin{equation}\label{eq:c-star}
c_\star=\frac{8\log(3)-9}{36}\approx0.1022.
\end{equation}
The algorithm that achieves this constant is a convoluted affair that builds on the strategy of~\cite{KOW26} , but uses several LLM-constructed gadgets that improve the performance.
We do not expect this algorithm to be the best-possible and have not found it insightful to study it in detail.
We thus do not give a complete description or formal proof of the classical algorithm in this paper and leave it to future work to develop a more systematic understanding of classical algorithms for \FSubsetSum.

Finally, we consider the subset-sum problem with a larger prime modulus.

\begin{problem}[\FqSubsetSum]\label{prob:fp-subset-sum}
Let $q\ge3$ be a prime.  
Given input vectors $h_1,\ldots,h_m\in\F_q^n$, the task is to find a nonempty $S\subseteq[m]$ such that $\sum_{i\in S}h_i=0$.  
In the average case, $h_i$'s are independently uniform; in the worst case, they are arbitrary.
\end{problem}

For the average-case \Cref{prob:fp-subset-sum}, the prior work~\cite{KOW26} gave a deterministic classical algorithm whenever $m\ge\Omega_q\left(n^{2\lfloor(q+3)/4\rfloor}\right)$. We give a near-quadratic improvement here.

\begin{restatable}{theorem}{generalpaverage}\label{thm:general-p-average}
For every
$$
m\ge q^{q^2}\cdot n^{1+\lfloor(q+3)/4\rfloor},
$$
there is a deterministic classical algorithm that solves the average-case \Cref{prob:fp-subset-sum} with probability at least $1-2^{-\Omega(n)}$ and runs in time $\poly(m)$.
\end{restatable}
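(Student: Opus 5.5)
The plan is to reduce the average-case \Cref{prob:fp-subset-sum} to a linear-algebra problem over $\F_q$ in which every free coefficient is automatically realizable by a $\{0,1\}$-combination. The key object is a \emph{gadget}: a family of pairwise disjoint subsets $T_1,\dots,T_k\subseteq[m]$ whose sums satisfy prescribed $\F_q$-multiples of a common vector $w$, say $\sum_{i\in T_j}h_i=a_j w$. The coefficients $a_1,\dots,a_k\in\F_q$ must be chosen so that every residue $c\in\F_q$ is a subset sum of $\{a_1,\dots,a_k\}$.

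With $k$ such gadgets $G_1,\dots,G_N$ on disjoint index sets and representative vectors $w_1,\dots,w_N$, any linear relation $\sum_\ell c_\ell w_\ell=0$ with some $c_\ell\neq0$ turns into a nonempty $\{0,1\}$-solution. For each $\ell$ we simply take the union of those $T_j$ in $G_\ell$ whose coefficients sum to $c_\ell$. Once $N>n$, Gaussian elimination finds such a relation. So the whole proof reduces to building more than $n$ disjoint gadgets in polynomial time from $q^{q^2}n^{1+\lfloor(q+3)/4\rfloor}$ random vectors.

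I first fix the set of coefficients $\{a_j\}$. One natural choice is to use both signs, $\{\pm1,\pm2,\dots\}$. I expect that the smallest $k$ making every residue a signed subset sum, when read through the construction below, gives the exponent $\lfloor(q+3)/4\rfloor$. The reason is that negation is free in $\F_q$, so only about $q/4$ ``independent'' multiples need to be produced by collisions. I would verify this small combinatorial fact separately for each $q$ class modulo $4$.

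The gadgets themselves are built by iterated collision-finding. Let $k'=\lfloor(q+3)/4\rfloor$ be the number of genuinely independent multiples needed.
\begin{enumerate}[label=(\roman*)]
\item Split the $m$ vectors into $n$ buckets of size $M=q^{q^2}n^{k'}$.
\item Within a bucket, form all subset sums of constant-size ($O_q(1)$) groups. This gives a list of $q^{O(q)}\cdot M$ candidate vectors, each tagged with its support.
\item Find $k'$ candidates with pairwise disjoint supports whose sums lie in the required ratios $a_1w:a_2w:\cdots$.
\end{enumerate}
This is a system of $n(k'-1)$ linear conditions, not literally in $\F_q^n$. We use one level of dimension reduction per extra multiple: first restrict to candidates lying in a random subspace, then pair them so that one more coordinate-block cancels. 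Each of the $k'$ levels consumes a factor of $n$ in the number of samples. The $q^{q^2}$ slack absorbs the constant-size enumeration and the failure probability, which is $2^{-\Omega(n)}$ by a Chernoff bound over buckets. The final extra factor of $n$ is needed because we require $N>n$ gadgets.

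The main obstacle is making the collision step rigorous under genuine randomness. Candidate sums built from overlapping constant-size groups are not independent, and after conditioning on earlier levels the remaining vectors are no longer uniform. I would handle this by letting each level use fresh, disjoint portions of the input, so that at every level the surviving vectors are uniform conditioned on lying in a fixed affine subspace. I would then apply a second-moment or pigeonhole bound over $\F_q^{d}$, with $d$ the current dimension, to guarantee enough disjoint collisions. Disjointness of supports is also what makes the final union of the chosen $T_j$ a genuine subset of $[m]$, and that union is nonempty because some $c_\ell\neq0$.
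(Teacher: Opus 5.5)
\textbf{The gap is the gadget construction.}
Your outer reduction is sound. A gadget in your sense is exactly a reducible vector of \cite{KOW26}, i.e.\ the case $c=0$ of \Cref{def:subset-reducible-line}. Taking $n+1$ gadgets on disjoint supports and running one Gaussian elimination does give a nonempty zero subset-sum. Your budget of about $n$ gadgets at about $n^{k}$ vectors each, with $k=\lfloor(q+3)/4\rfloor$ (your $k'$), also matches the paper's accounting.

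The problem is that building a gadget is the whole content of the theorem, and the collision route cannot do it.
\begin{itemize}
\item For $q\ge5$ (so $k\ge2$), the ratio conditions $\sum_{i\in T_j}h_i=a_ja_1^{-1}\sum_{i\in T_1}h_i$ are $n(k-1)$ exact $\F_q$-linear constraints.
\item Sums over nonempty disjoint supports of uniform vectors are independent and uniform. So a fixed tuple satisfies the constraints with probability $q^{-n(k-1)}$, and by a union bound, with high probability none of your $\poly(n)$ candidate tuples does.
\item Restricting to a random codimension-$c$ subspace keeps only a $q^{-c}$ fraction of candidates. With polynomially many candidates, each Wagner-type merging level \cite{Wagner02} therefore cancels only $O(\log n)$ coordinates. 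That leads to superpolynomial, $2^{\Theta(n/\log n)}$-type cost, not a factor of $n$ per level.
\end{itemize}
A factor of $n$ per level is the signature of linear algebra, since any $d+1$ vectors in $\F_q^d$ are dependent. But Gaussian elimination produces uncontrolled coefficients, and controlling them is precisely the difficulty. The exponent $\lfloor(q+3)/4\rfloor$ is asserted in your plan, not derived.

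\textbf{What the paper uses instead.}
The exponent comes entirely from the worst-case signed-relation theorem of \cite{KOW26} (\Cref{thm:kow-signed-relation}). It says that any $m_d\approx d^{k}$ vectors in $\F_q^d$ admit a deterministically computable nontrivial $\{0,\pm1\}$-relation.

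Such relations do not produce your pure gadgets. Write $v_i=\lambda_iu+w_i$ along a fixed direction $u$ and take a signed relation among the $w_i$. This yields only $\sum_{i\in P_j}v_i=\sum_{i\in N_j}v_i+\beta_ju$, with the negative part on the other side.

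The paper therefore relaxes to \emph{affine} reducibility: it requires only that $c+au$ be a subset-sum for every $a\in\F_q$. The negative parts are absorbed into the offset $c=\sum_j\sum_{i\in N_j}v_i$. Since the $\alpha_i$ depend only on the $w_i$, the $\beta_j$ are independent and uniform. Hence $B=O(\log q)$ blocks have subset sums covering $\F_q$ with probability at least $1-q^2/2^B$ (\Cref{prop:general-p-affine-line}).

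Each dimension-reduction step thus costs only $B\cdot m_{d-1}$ fresh vectors. The offsets force nonzero targets, which the paper handles by recursing on dimension with the shifted target $\pi_d(t-c)$ (\Cref{def:general-p-recursive-algorithm}). A martingale bound over a total budget of $T=4n+2$ attempts then gives $2n+1+T\cdot Bm_{n-1}$ vectors overall.

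Insisting on $c=0$, as your gadgets do, is the reducibility notion behind the weaker $n^{2k}$ bound of \cite{KOW26}.
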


\Cref{thm:general-p-average} gives a direct polynomial saving starting from $q=5$: the bound drops from $O(n^4)$~\cite{KOW26} to $O(n^3)$ in \Cref{thm:general-p-average}. In fact, our error dependence is also better: for inverse-exponential failure probability as in \Cref{thm:general-p-average}, the stated bound in~\cite{KOW26} becomes $m=\Omega(n^5)$.

For the worst-case \Cref{prob:fp-subset-sum}, previous algorithms require $n^{O(q\log q)}$ vectors~\cite{IS17,II24}. We shave the $\log q$ factor on the exponent, resulting in the following \Cref{thm:worst-main}.

\begin{restatable}{theorem}{worstmain}\label{thm:worst-main}
For every 
$$
m\ge q^{q^2}\cdot(n+1)^{q-1},
$$
there is a deterministic classical algorithm that solves the worst-case \Cref{prob:fp-subset-sum} and runs in time $\poly(m)$.
\end{restatable}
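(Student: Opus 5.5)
The plan is to \emph{linearize} the $\{0,1\}$-constraint by lifting each input vector to a tensor of degree $q-1$, which makes the target count $(n+1)^{q-1}$ the dimension of the lifted space. The slack factor $q^{q^2}$ is then spent on a bounded number of rounds of grouping and recursion. Concretely, set $\hat h_i=(1,h_i)\in\F_q^{n+1}$ and $v_i=\hat h_i^{\otimes(q-1)}$. One can work in the symmetric part, but the full tensor already has dimension $(n+1)^{q-1}$. Once $m$ exceeds this dimension by a factor, Gaussian elimination yields in $\poly(m)$ time a nonzero $c\in\F_q^m$ with $\sum_i c_iv_i=0$. Equivalently,
\[
\sum_{i} c_i\, p(h_i)=0\qquad\text{for every polynomial } p \text{ of degree}\le q-1 \text{ on } \F_q^n .
\]
The key fact I would then use is that coefficient patterns can be converted into $\{0,1\}$-patterns via polynomial identities of degree $<q$. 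The moment conditions say that the signed multiset $\{(h_i,c_i)\}$ is invisible to all low-degree tests. Grouping by coefficient value $a\in\F_q^*$ into $S_a=\{i:c_i=a\}$, we get $\sum_a a\sum_{i\in S_a}p(h_i)=0$ for all such $p$.

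The first step is a \emph{coefficient-reduction lemma}. Call a solver \emph{$T$-valued} if it outputs a nonzero combination with coefficients in $T\cup\{0\}$, where $T\subseteq\F_q^*$. I would show that a $T$-valued solver on $N$ vectors yields a $(T\setminus\{a\})$-valued solver on $O_q(N\cdot(n+1))$ vectors, at polynomial overhead. The mechanism is to split the input into $O_q(n+1)$ blocks and run the $T$-valued solver on each block. Each block's output is a new vector $g_j=\sum_{i\in S_a^{(j)}}h_i$ of pieces together with a residual, and the blocks have disjoint supports. Adjoining one extra coordinate per step records the weight of the unwanted value $a$, and this extra coordinate is what costs the factor $n+1$. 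A further linear dependency among the $g_j$, using the moment identities to cancel the $a$-part, then eliminates $a$. Starting from the trivial $\F_q^*$-valued solver (Gaussian elimination, $n+1$ vectors) and removing the $q-2$ values other than $1$ gives a $\{1\}$-valued solver, which is exactly a subset sum. This uses $(n+1)^{q-1}$ vectors times a constant-per-round blowup. Since the blowup is at most $q^{O(q)}$ per round and there are $q-2$ rounds, the constant is at most $q^{q^2}$.

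Disjointness of supports across blocks is what guarantees that composed combinations still have coefficients in the intended set rather than sums of coefficients. I would therefore carry disjoint-support bookkeeping throughout, and check separately that the final combination is nonempty: it is nonzero because each composed dependency is nonzero and the supports are disjoint. The runtime is $\poly(m)$ because each round is Gaussian elimination on $\poly(m)$ vectors, with $q-2=O(1)$ rounds.

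The main obstacle is the reduction step itself: showing that the degree-$(q-1)$ moment vanishing actually lets one cancel the $a$-coefficient part with a $\{0,1\}$-type recombination, instead of merely moving it to another coefficient value. I would first attempt this via the identity $x^{q-1}=1-[x=0]$ on $\F_q$. The idea is to express the indicator $\sum_i[\langle y,h_i\rangle=\beta]$ as a degree-$(q-1)$ polynomial, which shows that the $c$-weighted multisets are equidistributed along every line direction. I would then try to extract the $0/1$ subset combinatorially, in the spirit of Olson's proof of the Davenport constant. If that fails, the fallback is to adapt the recursive scheme of \cite{IS17,II24} and use the tensor lift only to save the $\log q$ factor in their recursion depth.
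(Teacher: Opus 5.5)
There is a genuine gap. The step that carries all the weight is your coefficient-reduction lemma: a $T$-valued solver on $N$ vectors is supposed to give a $(T\setminus\{a\})$-valued one on $O_q(N(n+1))$ vectors. It is asserted without a mechanism, and you yourself flag it as the main obstacle.

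The natural implementation fails. Suppose block $j$ yields $\sum_{t\in T}t\,g_j^{(t)}=0$, and you take a dependency $\sum_j\alpha_j g_j^{(a)}=0$ to kill the $a$-parts. The recombined relation is $\sum_j\sum_{t\ne a}\alpha_j t\,g_j^{(t)}=0$, and its coefficients $\alpha_j t$ again range over all of $\F_q^*$. So the $a$-part is moved rather than cancelled, and one extra bookkeeping coordinate per round cannot prevent this.

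The tensor lift does not rescue the argument. A kernel vector $c$ of the lifted matrix still has arbitrary $\F_q$ entries, and asking for a $0/1$ kernel vector is just subset sum for the lifted vectors in dimension $(n+1)^{q-1}$. Likewise, the identities $\sum_i c_i[\langle y,h_i\rangle=\beta]=0$ are $c$-weighted counts in $\F_q$ and say nothing about unweighted subsets. Isolating $\{i:c_i=1\}$ would need the power-sum conditions $\sum_i c_i^k h_i=0$ for $1\le k\le q-1$. These are not linear in $c$ and are not implied by moment conditions in $h$.

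In your plan the lift and the rounds are disconnected, since the rounds only use ordinary Gaussian elimination. So the exponent $q-1$ is matched numerically but not by an argument. The fallback of grafting the lift onto \cite{IS17,II24} is likewise unspecified.

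The paper never removes weights one at a time. It builds a $(q-1)$-dimensional array of pairwise disjoint subset sums $w_a$, $a\in(\F_q^*)^{q-1}$, such that every axis-parallel line satisfies $\sum_{s\in\F_q^*}s\,w_{a(j\gets s)}=0$. Level $1$ is one dependency among $n+1$ vectors, grouped by coefficient value. Level $\ell+1$ builds level-$\ell$ arrays on $(q-1)^\ell(n+1)$ disjoint blocks and concatenates each block's $(q-1)^\ell$ entries into one vector $W^{(k)}\in\F_q^{(q-1)^\ell n}$. It then takes a single dependency $\sum_k\alpha_kW^{(k)}=0$ and sets $w_{(a',s)}=\sum_{k:\alpha_k=s}w^{(k)}_{a'}$. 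Constraining the whole previous array, rather than one coordinate, is what costs the factors $(q-1)^\ell$, totalling $(q-1)^{(q-1)(q-2)/2}\le q^{q^2}$.

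The weights are eliminated only at the end, by an unweighted selection. One sums $w_{a(\calT)}$ over spanning trees $\calT$ of $\F_q$ rooted at $0$, where $a(\calT)_v=v-E(v)$ and $E(v)$ is the parent of $v$. This sum vanishes: test it against tensor products of the functionals $e_s\mapsto s^k$ ($0\le k\le q-3$), which span the dual of $\bar U^{\otimes(q-1)}$ with $\bar U=\F_q^{\F_q^*}/\langle\sum_s s\,e_s\rangle$. The matrix-tree theorem turns each test into a determinant whose $q-1$ rows are evaluations of polynomials of degree $\le q-3$, hence singular.

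Injectivity of $\calT\mapsto a(\calT)$ makes the result a genuine subset sum. The path $q-1\to\cdots\to1\to0$, which gives $a=\bone$, makes it nonempty. A multi-index cancellation identity of this kind is the idea your proposal is missing.
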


The factors $q^{q^2}$ in \Cref{thm:general-p-average,thm:worst-main} are convenient uniform upper bounds; we have not optimized their dependence on $q$.

The $q=3$ case has been discussed in \Cref{sec:intro_f3_quantum}.
For a fixed prime $q\ge5$, write $k=\lfloor(q+3)/4\rfloor$.  The best prior average-case \Cref{prob:fp-subset-sum} bound was $m\ge\Omega_q\left(n^{2k}\log(1/\eta)\right)$ for success probability $1-\eta$~\cite{KOW26}, following a quantum algorithm requiring $m\ge\Omega_q(n^{q-1})$~\cite{CLZ22}. Thus, at $\eta=2^{-\Omega(n)}$, \Cref{thm:general-p-average} lowers the $n$-exponent from $2k+1$ to $k+1$.
In the worst-case \Cref{prob:fp-subset-sum}, the best prior bound was $m\ge\Omega_q\left(n^{(\bar q/4)\log\bar q}\right)$, where $\bar q$ is the least power of two at least $q$~\cite{IS17,II24,KOW26}.
Our \Cref{thm:worst-main} replaces its $(\bar q/4)\log\bar q=\Theta(q\log q)$ exponent of $n$ by $q-1$.

For smaller $m$, the best asymptotic attacks remain Wagner-type algorithms~\cite{Wagner02}, which have runtime $2^{O_q(n/\log(m/n))}$; and specifically at linear rate $m=\Theta(n)$, the known attacks remain exponential~\cite{BBPSWW23,BBO24}.

The cryptographic hash of~\cite{IN96} over $\F_q^n$ is proposed in the linear-density range $m=\Omega(n)$; none of the polynomial-time results in this subsection reaches that range. 
More loosely related are small-parameter SIS and LWE assumptions~\cite{MP13}, Wagner analyses for
$\mathrm{SIS}^{\infty}$~\cite{DEL25}, other short-relation assumptions
\cite{KOW26}, and code-based restricted-decoding assumptions and systems
\cite{Wave19,Wave23,BBWW26}.  These problems allow signed, bounded, or arbitrary nonzero coefficients, and often use growing moduli or structured matrices.

\section*{Acknowledgments and AI disclosure}\label{sec:discussion}

KW thanks Pravesh Kothari, Shachar Lovett, and Mingjia Zhang for helpful references and discussions.

This project began in summer 2025. Before the release of GPT 5.6, the authors had developed the reduction from \Cref{prob:f3-subset-sum} to \Cref{prob:pointwise-binary-error-lwe} and formulated the XL/relinearization attack for the latter using higher-order Fourier analysis and algebraic geometry. These approaches involved connections to strength and Birch rank~\cite{baily2024strength} and to Fr\"oberg's conjecture~\cite{froberg1985inequality}, respectively. The authors had also proved several variants and related results, including an analogue of \Cref{thm:lwe-uniform} over sufficiently large finite fields. This final result was obtained with the help of the Google DeepMind AI co-mathematician~\cite{GDM26}. Aside from this, during this phase, assistance from LLMs was limited primarily to literature searches and explanations of related topics.

\noindent After its release, GPT 5.6 supplied three key ingredients missing from the earlier attempts:
\begin{enumerate}[nosep]
\item the multiplicity Schwartz--Zippel lemma, as in \Cref{lem:finite-rank-progress};
\item a multiset analogue of Lov\'asz's shadow bound, stated in \Cref{thm:multiset-shadow};
\item a sharper shifting argument underlying \Cref{thm:truncated-profile}.
\end{enumerate}
The authors had previously explored weaker versions of the second and third ingredients, using the Kruskal--Katona theorem and initial ideals in Gr\"obner-basis theory, respectively. The missing rank argument was the principal obstacle: combining it with those earlier tools already yields most of the tradeoff curve in \Cref{thm:lwe-uniform}, and hence in \Cref{thm:quantum-main}, through the XL/relinearization attack. Subsequent interactions with GPT 5.6 supplied the stronger tools above and led to the quotient decoder, which gives the full tradeoff curve.\footnote{However we conjecture that the XL/relinearization attack may work just as well. As a first step towards a rigorous understanding, in \Cref{app:relinearization-experiments}, we discuss a conjectural tight bound in a basic case and supply experiments.} The additional classical results described in \Cref{sec:intro-fq} also arose from interactions with GPT 5.6. At that stage, the constant $c_\star$ in \eqref{eq:c-star} was approximately $0.1069$.

After the release of GPT 6, further interactions aimed at improving the classical algorithm for \Cref{prob:f3-subset-sum} reduced $c_\star$ to the value in \eqref{eq:c-star}, largely through improved search capability. As noted in \Cref{sec:intro-fq}, we do not give a formal proof of this bound in the present paper. These searches have not produced a systematic way to obtain a polynomial-time classical algorithm for $m\ge c\cdot n^2$ for every fixed $c>0$. Such an improvement, if attainable, would close the gap between the classical and quantum polynomial-time sample thresholds discussed in \Cref{sec:intro_f3_quantum}.

GPT 5.6 and GPT 6 were also used to produce initial drafts, which the authors subsequently rewrote in full. The authors remain responsible for all mathematical claims and exposition.

\section{Sketch of the Binary-Error LWE analysis}

Here we give an overview of our (classical) algorithm for \PointwiseBinaryErrorLWE. 
For simplicity we focus on the $\F_3$ case, although very little change is needed to handle the general $\F_q$ case.

\subsection{Reduction to degree of regularity}
We think of the input as a sequence of~$m$ ``two-possibility equations'' over variables $x_1, \dots, x_n$, 
\begin{equation}
    b - a \cdot x \in \{0,1\},
\end{equation}
where the vectors $a \in \F_3^n$ are chosen uniformly at random, but with the promise that there is at least one solution $x = s$.  
We immediately convert these into equivalent quadratic \emph{equations},
\begin{equation}    \label{eqn:quads}
    q(x) \coloneqq (b- \ell(x))(b-\ell(x)-1)=0, \quad \ell(x) \coloneqq a \cdot x.
\end{equation}
Thus we have to solve a system of quadratic equations $q_1(x) = \cdots = q_m(x) = 0$, where the degree-$2$ part $\ell_i(x)^2$ of $q_i(x)$ is the square of a random linear form.
We also include the field equations $x_i^3 - x_i = 0$ for all $i \in [n]$, which do not change the solution set over $\F_3$.\\

It is natural to try to use a Gr\"{o}bner Basis-type algorithm to solve this system of equations.  
To analyze the running time of such an algorithm, we need to get an upper bound on the degree of all polynomials encountered by the algorithm.
By work of Semaev--Tenti~\cite{SemaevTenti21} and Salizzoni~\cite{salizzoni2023upper}, over a finite field, this can be bounded by the \emph{degree of regularity} of the system (plus one). 
\begin{definition}
    The \emph{degree of regularity} $d_{\text{reg}}(f_1, \dots f_m)$ of a collection of $n$-variate polynomials $f_1, f_2, \cdots, f_m$ is defined to be the least~$D$ such that every degree-$D$ monomial can be written as $f_1^{\text{top}} \cdot g_1 + f_2^{\text{top}} \cdot g_2 + \cdots$, where $f_i^{\text{top}}$ is the top-degree homogeneous part of $f_i$ and the $g_i$ are suitable polynomial multipliers, which may depend on the target monomial.
\end{definition}

Intuitively, a degree of regularity at most $D$ means that when working with the polynomial system of equations $f_1=\cdots=f_m=0$, we can rewrite every degree-$D$ monomial as a polynomial of smaller degree. Hence every polynomial (of any degree) can be represented, modulo the system of equations, as a polynomial of degree strictly less than $D$. This is why a small degree of regularity helps us solve the system efficiently.

In our case, we have that $q_i^{\text{top}}$ is $\ell_i^2$; thus from~\cite{SemaevTenti21,salizzoni2023upper} we get:
\begin{theorem} \label{thm:salizzoni}
    Let $2 \leq D \leq n/2$, and let $N_D$ denote the number of monomials in $n$ variables of total degree~$D$ and individual degree at most~$2$. Then $N_D = (\Theta(n/D))^D$.
    There is a $\poly(N_D, m)$-time classical algorithm that, given a solvable system of quadratic equations $q_1(x) = \cdots = q_m(x) = 0$ over $\F_3$ as in \Cref{eqn:quads}, outputs a solution whenever
    \[
        d_{\textnormal{reg}}(\ell_1^2, \dots, \ell_m^2, x_1^3, \dots, x_n^3) \leq D.
    \]
\end{theorem}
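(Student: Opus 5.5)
Our plan has two parts: the counting estimate for $N_D$, and the algorithm, which is an XL/linearization procedure in the quotient ring $R=\F_3[x_1,\dots,x_n]/(x_1^3-x_1,\dots,x_n^3-x_n)$. The algorithmic guarantee is imported essentially from~\cite{SemaevTenti21,salizzoni2023upper}; we only verify that their hypotheses match our setting.

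\emph{Counting.} $N_D$ is the number of $(d_1,\dots,d_n)\in\{0,1,2\}^n$ with $\sum d_i=D$. For the lower bound, we count only $0/1$ exponent vectors: $N_D\ge\binom nD\ge(n/D)^D$. For the upper bound, we choose the support $T$ with $|T|\le D$ and then the exponents in $\{1,2\}^T$: $N_D\le\sum_{j\le D}\binom nj2^j\le 2^D\cdot O(en/D)^D$, using $D\le n/2$ to bound the sum of binomials by a constant multiple of its last term. Together these give $N_D=(\Theta(n/D))^D$. Since the sizes $N_{D'}$ for $D'\le D+1$ are at most $\poly(N_D)$ (as $D\le n/2$, they are increasing up to constant factors), the relevant linear algebra is polynomial in $N_D$ and $m$.

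\emph{Algorithm.} We form the Macaulay-type matrix whose rows are $u\cdot q_i$ reduced in $R$, for all monomials $u$ of individual degree $\le2$ with $\deg u\le D-1$, together with the field equations. We row-reduce this matrix in a degree-compatible order. The key claim is: if every degree-$D$ monomial lies in the ideal generated by the top parts $\ell_i^2$ and $x_j^3$, then the row span $V_{\le D+1}$ (degree $\le D+1$ part) equals $I\cap R_{\le D+1}$, where $I$ is the ideal generated by the $q_i$. The argument is the standard "no degree fall beyond $d_{\mathrm{reg}}$" step. Any degree-$D$ monomial $\mu$ can be written $\mu=\sum g_i\ell_i^2+\sum h_j x_j^3$ homogeneously. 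Replacing each top part by the full polynomial shows $\mu\equiv(\text{lower degree})\pmod{V}$. By induction, every element of $R$ reduces modulo $V$ to degree $<D$. So the quotient $R/I$, which by the solvability promise is a nonzero quotient of functions on the solution set, is computed inside degree $<D$.

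From the reduced basis we then read off the solution. We restrict to the linear part: $V\cap R_{\le1}$ spans all affine-linear polynomials vanishing on the solution set, provided the Gröbner-basis degree bound of~\cite{salizzoni2023upper} (at most $d_{\mathrm{reg}}+1$) holds. When the solution is unique, this span yields $x_j-s_j$ for every $j$. In general, we branch by fixing a variable and repeating, which costs a factor $\poly(n)$.

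\emph{Main obstacle.} The delicate step is justifying that truncated linear algebra at degree $D+1$ captures all of $I$ in low degree. In other words, the degree-fall polynomials that appear when cancelling top parts must themselves lie in the span of the rows. This is exactly the content of Semaev--Tenti/Salizzoni, so we would invoke their bound, stated for systems including the field equations, rather than reprove it. The only remaining check is that our top parts $\ell_i^2$ and $x_j^3$ fit their framework. Note that over $\F_3$ the field equation's top part $x_j^3$ kills individual degree $3$, consistent with the definition of $N_D$.
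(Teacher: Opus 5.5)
Your proposal matches how the paper handles this statement: the paper also gives no self-contained proof, importing the algorithmic guarantee from \cite{SemaevTenti21,salizzoni2023upper} via the remark that the solving degree is at most $d_{\mathrm{reg}}$ plus one, and your count $\binom nD\le N_D\le\sum_{j\le D}\binom nj 2^j$ correctly fills in the $N_D=(\Theta(n/D))^D$ estimate that the paper leaves implicit (what justifies your last-term bound is the $2^j$ weight, which makes the ratio of consecutive terms less than $1/2$ for $D\le n/2$). The paper's rigorous counterpart in \Cref{sec:quotient-dec-alg} avoids exactly the degree-fall obstacle you defer to the literature: it uses surjectivity of $\Phi_{m,D}$ only to build multiplication maps $\Xsf_j$ on the span $V$ of reduced monomials of degree below $D$, closes the relation space $N$ under these maps, and proves $V/N\cong S/J$ directly, so no solving-degree bound is needed.
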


The degree-of-regularity condition means that every monomial $M$ of total degree~$D$ admits a representation
\begin{align*}
M = \sum_{i=1}^m \ell_i(x)^2 g_i(x) + \sum_{j=1}^n x_j^3 h_j(x),
\end{align*}
where the $g_i$ and $h_j$ are homogeneous of degrees $D-2$ and $D-3$, respectively. 
The cubic generators $x_j^3$ are the top-degree parts of the field equations $x_j^3-x_j=0$ that we added to our polynomial system. 
Equivalently, since the second sum already covers every monomial divisible by some $x_j^3$, we only need to express the remaining degree-$D$ monomials (those with individual degree at most~$2$) as $\sum_i \ell_i^2 g_i$ modulo $(x_1^3,\ldots,x_n^3)$, i.e., we require that 
\begin{align*}
M \equiv \sum_{i=1}^m \ell_i(x)^2 g_i(x) \pmod{(x_1^3,\ldots,x_n^3)}.
\end{align*}

\Cref{thm:salizzoni} reduces the proof of \Cref{thm:lwe-uniform} to the following degree-of-regularity bound:
\begin{theorem}
    Let $m \geq \kappa n$, where $\kappa > \log_{3/2}(3)$.  Then with probability at least $1 - \exp(-\Omega(m))$ over the \PointwiseBinaryErrorLWE instance, we have $d_{\textnormal{reg}}(\ell_1^2, \dots, \ell_m^2, x_1^3, \dots, x_n^3) \leq O(\lceil n^2/m\rceil)$.
\end{theorem}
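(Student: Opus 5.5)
The plan is to argue dually in the truncated ring $R=\F_3[x_1,\ldots,x_n]/(x_1^3,\ldots,x_n^3)$ with its grading $R=\bigoplus_D R_D$, $\dim R_D=N_D$. The target statement says that for $D=C\lceil n^2/m\rceil$ (with $C=C(\kappa)$ large) the degree-$D$ part of the ideal generated by $\ell_1^2,\ldots,\ell_m^2$ is all of $R_D$. Dually, it says that the annihilator
\[
V_m \coloneqq \{F\in R_D^{*} : \ell_i^2\circ F = 0 \text{ for all } i\le m\}
\]
is zero. Here $\circ$ denotes contraction (apolarity): a functional $F$ on $R_D$, contracted by $\ell^2$, gives the functional $g\mapsto F(\ell^2 g)$ on $R_{D-2}$. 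I will expose the forms one at a time and track the nested subspaces $V_0=R_D^*\supseteq V_1\supseteq\cdots\supseteq V_m$. The goal is to show that $\dim V_i$ decays geometrically at a rate of order $D^2/n^2$ per step, except on an event of probability $2^{-\Omega(m)}$.

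\textbf{Step 1 (single-functional progress).} For fixed nonzero $F$, the map $a\mapsto \ell_a^2\circ F$ is a quadratic polynomial map in the coefficient vector $a\in\F_3^n$. It is not identically zero, since some second-order contraction $x_ix_j\circ F$ is nonzero whenever $F\neq0$ and $D\ge2$. A (multiplicity) Schwartz--Zippel bound then shows $\Pr_a[\ell_a^2\circ F=0]\le 1-c$ for an absolute $c>0$.

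\textbf{Step 2 (from one functional to the whole subspace).} A bound on one $F$ at a time is too weak, since union-bounding over $V$ is hopeless. Instead, for a subspace $V$, I want to lower-bound the typical rank of the linear map $F\mapsto \ell_a^2\circ F$ restricted to $V$. To do this, I put $V$ in echelon form with respect to a monomial order, so that it has $\dim V$ distinct leading monomials. I then use shifting/compression to reduce to the case where these leading-monomial sets are down-closed in the appropriate sense. A multiset (individual degree $\le2$) analogue of Lov\'asz's shadow bound then shows that the degree-$(D-2)$ shadow is large. Combined with Step 1 applied along a generic direction, this should give
\[
\E_a\bigl[\dim V-\dim (V\cap\ker(\ell_a^2\circ))\bigr]\ \ge\ c\,\frac{D^2}{n^2}\dim V .
\]
With constant probability the drop is at least a constant fraction of this expectation. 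This is the step I expect to be the main obstacle. The shadow inequality has to be sharp in the ratio $D^2/n^2$: a Kruskal--Katona-type estimate loses too much. The shifting argument also has to remain compatible with the quadratic (squared-form) structure of the contractions, not merely with linear ones.

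\textbf{Step 3 (iteration and concentration).} Call step $i$ good if $\dim V_i\le(1-c'D^2/n^2)\dim V_{i-1}$. Steps are good independently with constant probability conditional on the past, so by Chernoff at least $\Omega(m)$ steps are good, except with probability $2^{-\Omega(m)}$. On that event,
\[
\dim V_m\le N_D\exp\bigl(-\Omega(mD^2/n^2)\bigr)<1
\]
as soon as $mD^2/n^2\gg D\log(n/D)$. This happens for $D=O(\lceil n^2/m\rceil)$, possibly up to the $\log(m/n)$ factor. That factor is harmless for the runtime $(m/n)^{O(D)}$ of \Cref{thm:salizzoni}, though I would try to remove it by using the exact shadow profile rather than the crude bound $N_D\le (O(n/D))^D$.

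In the regime $m=\Theta(n)$, where $D=\Theta(n)$ and the ratio $D^2/n^2$ is constant, the crude iteration is too lossy. There I will instead compare $\dim V_m$ against a counting bound on the truncated Hilbert series. The hypothesis $\kappa>\log_{3/2}(3)$ enters exactly here: it guarantees that the top graded pieces of $R/(\ell_1^2,\ldots,\ell_m^2)$ vanish, matching the uniqueness threshold of \Cref{fct:lwe-existence}.
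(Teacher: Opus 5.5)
\textbf{The gap is quantitative, and it lies in the form of your progress bound.} You turn the shadow estimate into a proportional decrease, $\E[X]\ge c(D^2/n^2)\dim V$, and then iterate geometrically. Your own Step~3 computation shows what this costs: you need $mD^2/n^2\gtrsim\log N_D\approx D\log(n/D)$, i.e.\ $D=\Theta(\lceil n^2/m\rceil\log(m/n))$. That is not the claimed $O(\lceil n^2/m\rceil)$.

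The loss is also not harmless for the runtime. At $m=\eps n^2$ it gives $D=\Theta(\eps^{-1}\log n)$ and running time $(m/n)^{O(D)}=n^{O(\log n/\eps)}$. That is quasi-polynomial rather than polynomial, so the headline result is lost.

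The missing idea is to keep the shadow bound in its sublinear-power form. The paper proves $\bar\rho(W)\ge(\dim W/3)^{1-2/D}$ (\Cref{thm:truncated-profile}) by applying $|\partial\calA|\ge|\calA|^{1-1/r}$ twice along the $x_1$-layers $\calA_0\to\calA_1\to\calA_2$. It then replaces geometric decay by a potential bound. With $\gamma(h)=(h/3)^{1-2/D}$, it shows $\sum_t X_t/\gamma(h_{t-1})\le\int_0^H du/\gamma(u)\le\frac{3D}{2}H^{2/D}=O(n^2/D)$, so only $O(n^2/D)$ successful steps are needed. Your linear bound agrees with $h^{1-2/D}$ only near $h\approx N_D$. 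For smaller $h$ it discards the factor $(N_D/h)^{2/D}$, and that is exactly where your logarithm comes from. Saying you would ``use the exact shadow profile'' is the right instinct, but it has to be carried out through this potential-function bookkeeping. For the concentration step, the paper uses $X_t\le\rho(W_{t-1})$ together with $\E[X_t]\ge\rho/3$, and runs Azuma on the truncated variables $\min\{X_t/\gamma(h_{t-1}),1\}$.

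\textbf{The step you flag as the main obstacle is also left open.} You do not say which operation on subspaces performs the compression without increasing the symbolic rank of $F\mapsto L_z^2\circ F$. Echelon form in a fixed monomial order does not supply one. As the paper's overview shows, a lex leading-term argument only controls the incidence matrix between $\partial_2\calA$ and $\calA$, giving rank $\gtrsim|\partial_2\calA|/\binom{D}{2}$ and losing another factor of order $D^2$.

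The paper instead applies Borel's fixed-point theorem for the triangular group $\calB$ to the closed, $\calB$-stable subvariety $\{U\in\mathrm{Gr}(h,\overline R_D^*):\bar\rho(U)\le r\}$. A fixed point $U$ has three properties (\Cref{lem:truncated-monomial-replacement,cor:to_shadow}):
\begin{itemize}
\item it is spanned by monomial functionals;
\item contraction by $x_1$ alone realizes its symbolic rank;
\item its $x_1$-layers satisfy $\partial\calA_d\subseteq\calA_{d+1}$.
\end{itemize}

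Passing from symbolic rank to the actual rank at a uniformly random $a\in\F_3^n$ also needs the multiplicity Schwartz--Zippel lemma, applied to a maximal nonvanishing minor of degree $2\rho$. Plain Schwartz--Zippel is vacuous over $\F_3$ at that degree, and your single-functional Step~1 does not give this.

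\textbf{Your treatment of $m=\Theta(n)$ is off.} There the statement is trivial: $R_D=0$ for $D>2n$, so $d_{\mathrm{reg}}\le 2n+1=O(n^2/m)$ deterministically. The hypothesis $\kappa>\log_{3/2}(3)$ plays no role in the degree-of-regularity bound; it only governs uniqueness of the secret. For $m>n^2$, the paper takes $\lfloor m/n^2\rfloor$ disjoint blocks of $n^2$ samples with constant $D$.
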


For the remainder of this section, we will think of $m = \eps n^2$ for an arbitrarily small constant~$\eps$. 
We will then  sketch why the degree of regularity is $O(\log(1/\eps)/\eps)$ with high probability.  This already gives a polynomial-time algorithm for  fixed $\epsilon$.  

\subsection{Outline of the degree of regularity bound}

Throughout the remainder of this sketch, homogeneous polynomial calculations are performed modulo $(x_1^3,\ldots,x_n^3)$, i.e., we only care about monomials with individual degree at most 2 and will not keep mentioning this condition in this overview.

Fixing a (constant) degree bound~$D$, our goal is to show that given $m = O(n^2/D)$ random linear forms $\ell_i = a_1 x_1 + \cdots a_n x_n$, it is very likely that all monomials of degree~$D$ are ``achievable from the $\ell_i$'s''.  
By ``achievable'', here, we mean ``expressible as $\sum_{i=1}^m \ell_i^2 g_i$ for homogeneous $g_i$'s of degree $D-2$ (modulo cubic monomials)''.

Let $V_t$ denote the vector space of all degree-$D$ homogeneous polynomials that are achievable from $\ell_1, \dots, \ell_t$.  The codimension $h_t \coloneqq \codim V_t$ (within the $N_D$-dimensional space spanned by the degree-$D$ monomials with individual degree at most~$2$) represents the number of dimensions ``still to go''.
Our goal is to show that $h_{t+1}$ has a good chance of being noticeably smaller than~$h_t$.
In the final argument, we will precisely show that the probability over the choice of the $(t+1)$-th equation $\ell_{t+1}$ (conditioned on any $\ell_1, \dots, \ell_t$):
\begin{equation} \label{eqn:thebest}
    \Pr[h_{t+1} \leq h_t - c \cdot h_t^{1-2/D}] \geq c
\end{equation}
for some universal $c > 0$.
From this, it is straightforward to show that for  $m = O(n^2/D)$ we get to $h_m = 0$ except with probability at most $\exp(-\Omega(m))$.\footnote{Briefly, write $\Phi(h) = h^{2/D}$. Then on a ``successful'' step we get $\Phi(h_{t+1}) \leq (h_t - c \cdot h_t^{1-2/D})^{2/D} \leq h_t^{2/D} - c\cdot (2/D) = \Phi(h_t) - \Theta(1/D)$ by concavity.  We have $\Phi(h_0) = \Phi(N_D) = \Theta(n/D)^2$, so we need $\Theta(n^2/D)$ successful steps to finish.}
However, in this outline we only describe how to establish something slightly weaker than \Cref{eqn:thebest}.  Specifically, we show
\begin{align}
    \Pr[h_{t+1} &\leq h_t - (c D^2/n^2)\cdot h_t ] \geq c, \text{ if $h_t \gtrsim (D^2/n) N_D$;}\label{eqn:popular-pair} \\
    \Pr[h_{t+1} &\leq h_t - (c/D^2) \cdot h_t^{1-2/D}] \geq c, \text{ in general.} \label{eqn:kk}
\end{align}
For this, we get that $m = O(n^2 \log(D)/D)$ suffices to reach $h_m = 0$ except with probability $\exp(-\Omega(m))$.\footnote{Initially, $h_t$ halves every $O(n^2/D^2)$ steps.  After $\Theta(D \log D)$ halvings it gets down to $\Theta(n/D^2)^D$.  Now the $\Phi$-potential argument from before shows that $\Phi$ decreases by $\Theta(1/D^3)$ per successful steps, and $O(n^2/D)$ such steps reduce it from $\Theta(n^2/D^4)$.}

Towards proving \Cref{eqn:popular-pair,eqn:kk}, let us fix the monomial basis for all degree-$D$ homogeneous polynomials (in lexicographic order), and think of a typical such polynomial as a row vector of length~$N_D$ (where the coefficients are the entries of the vector).
A basis for $V_t$ consists of $N_D - h_t$ such row vectors, and by performing row-reduction (and then reordering monomials) we can assume these vectors have a shape like the following:
\begin{equation} \label{eqn:list}
    \begin{bmatrix}
        I & M
    \end{bmatrix},
\end{equation}
where the identity matrix block is $(N_D - h_t) \times (N_D - h_t)$, and $M$ is $(N_D - h_t) \times h_t$.  Let the monomials labeling the columns of $M$ be $x^{A_1}, \dots, x^{A_{h_t}}$.
These monomials are ``currently unachievable'', and it would be desirable to show that a large fraction of them is likely to become achievable once $\ell_{t+1}$.  
It is actually a bit too much to hope for that the bare monomials become achievable, but they will guide us in the right direction.

\subsection{First argument: a popular pair of variables}
Here we sketch why the first bound \Cref{eqn:popular-pair} holds.
Since only $O((D^2/n)N_D)$ monomials are non-multilinear, the assumed lower bound on $h_t$ ensures that a constant fraction of the remaining monomials are multilinear.
We restrict the following argument to these monomials.
By averaging, there must exist two particular variables $x_i, x_j$ that co-occur in a $\Omega(D^2/n^2)$ fraction of the monomials $x^{A_k}$.  Assume for the sake of notation that these variables are $x_1, x_2$, and that $x^{A_1}, \dots, x^{A_r}$ are the monomials containing them, with $r \gtrsim (D^2/n^2) h_t$.  Write $x^{A_k} = x_1x_2 x^{B_k}$ for sets $B_1, \dots, B_r$ of cardinality $D-2$.  

Suppose, temporarily, that instead of getting just one incoming $\ell_{t+1} = a \cdot x$ (with $a$ random), we got four new ones, with coefficient vectors $a$, $a+e_1$, $a+e_2$, $a+e_1+e_2$. Then for each $B_k$, we could newly achieve
\begin{equation}
    (a \cdot x)^2 x^{B_k} - ((a +e_1) \cdot x)^2 x^{B_k} - ((a +e_2) \cdot x)^2x^{B_k} + ((a +e_1 + e_2) \cdot x)^2 x^{B_k} = 2x_1x_2 \cdot x^{B_k} = 2 x^{A_k}.
\end{equation}
Thus we would  get $r$ newly achievable monomials, and these constitute $r$ newly achievable dimensions by virtue of \Cref{eqn:list}.  It follows that one of the four coefficient vectors must gain us at least $r/4$ newly achievable dimensions.  
But each of $a$, $a+e_1$, $a+e_2$, $a+e_1+e_2$ individually is uniformly random.  Thus we may think of the incoming $\ell_{t+1}$ as being $b \cdot x$, where first $a$ is chosen at random, and then $b$ is randomly selected to be one of $a$, $a+e_1$, $a+e_2$, $a+e_1+e_2$.  And now we see there is at least a $1/4$ chance that $\ell_{t+1}$ will gain us at least $r/4 = \Omega(D^2/n^2) h_t$ newly achievable dimensions, verifying \Cref{eqn:popular-pair}.

\subsection{Second argument: Kruskal--Katona}
We sketch the second bound \Cref{eqn:kk} in the special case where all the remaining monomials $x^{A_1}, \dots, x^{A_{h_t}}$ are multilinear. The general case, where some remaining monomials contain squared variables, requires an extension that we omit from this overview.
Introduce the set system $\mathcal{A} = \{A_1, \dots, A_{h_t}\}$, and define
\begin{equation}
    \mathcal{S} = \partial_2 \mathcal{A} = \{B : B \text{ is a size-$(D-2)$ subset of some } A_k\}.
\end{equation}
Lov\'asz's bound for the Kruskal--Katona theorem~\cite[Exercise~13.31(b)]{Lovasz93}, applied twice, implies that $|\mathcal{S}| \gtrsim h_t^{1-2/D}$. Given the new $\ell_{t+1} = a \cdot x$, let us append each $\ell_{t+1}^2 \cdot x^B$ for $B \in \mathcal{S}$ as a new row into the matrix from \Cref{eqn:list}.  We write the new matrix as 
\begin{equation}
    \begin{bmatrix}
        I & M \\
        P(a) & Q(a)
    \end{bmatrix}.
\end{equation}
To understand the new number of newly achieved dimensions, we row-reduce the above further, obtaining 
\begin{equation}
    \begin{bmatrix}
        I & M \\
        0 & R(a)
    \end{bmatrix}, \quad R(a) = Q(a) - P(a) M.
\end{equation}
The number of newly achieved dimensions is at least $\rank R(a)$.

To analyze this, let us temporarily treat $a_1, \dots, a_n$ as \emph{indeterminates}, $z_1, \dots, z_n$.  We now wish to analyze the \emph{symbolic} rank of $R(z)$.  Specifically, we will ultimately show that
\begin{equation}    \label{eqn:symrank}
    \rank_{\F_3(z_1, \dots, z_n)} R(z) \geq |\mathcal{S}|/\binom{D}{2} \gtrsim h_t^{1-2/D}/D^2.
\end{equation}
Now when random scalars $a_i$ are substituted for the $z_i$'s, the resulting numerical rank can become smaller. But a standard argument shows that there is a constant chance that a constant fraction of the rank is retained.  Specifically, if the symbolic rank of $R(z)$ is $r$, select an $r \times r$ submatrix $T(z)$ with nonzero determinant.  Since the entries of $T(z)$ are quadratic in the $z_i$'s, $p(z) \coloneqq \det T(z)$ has degree degree at most $2r$.  Moreover, it is a simple fact that the rank deficiency of $r - \rank_{\F_3} T(a)$ is at most the multiplicity of $a$ as a zero of $p(z)$. The multiplicity version of Schwartz--Zippel~\cite{DKSS13} then implies that $\E[\mathrm{mult}_a p(z)] \leq \frac{\deg p(z)}{3} \leq (2/3) r$; hence the expected rank of $T(a)$ and hence $R(a)$ is at least~$r/3 = \Omega(h_t^{1-2/D}/D^2)$.  This confirms \Cref{eqn:kk}.

Thus it remains to show the symbolic rank bound \Cref{eqn:symrank}. To do this, first ignore the correction term $P(z) M$ and focus on the rank of $Q(z)$.  Let $H$ be be $|\mathcal{S}| \times h_t$ incidence matrix between $\mathcal{S}$ and $\mathcal{A}$; i.e., $H_{B,A} = 1[B \subseteq A]$.  Then $Q(z)_{B,A}$ is $2 z_i z_j$ if $A = B \cup \{i,j\}$, and $0$ otherwise. Hence
\begin{equation}
    Q(z) = 2\diag(z^{-B})_{B \in \mathcal{S}} \cdot H \cdot \diag(z^A)_{A \in \mathcal{A}},
\end{equation}
and it follows that the symbolic rank of $Q(z)$ equals $\rank H$ (as the diagonal matrices are invertible over $\F_3(z_1, \dots, z_n)$).  We claim that $r' \coloneqq \rank H \geq |\mathcal{S}|/\binom{D}{2}$.  To see this, take $r'$ columns of $H$ spanning the column space.  Each has $\binom{D}{2}$ many $1$'s, and collectively these $1$'s must touch all $|\mathcal{S}|$ rows, as there are no all-$0$ rows.  Hence $r' \binom{D}{2} \geq |\mathcal{S}|$, as claimed.  

Thus we have shown that the symbolic rank of $Q(z)$ is at least $|\mathcal{S}|/\binom{D}{2}$, and to complete the proof we need to show that $R(z) = Q(z) - P(z)M$ has rank that is at least as large.  Let $C$ denote a generic monomial not in $\mathcal{A}$, and recall that we originally ordered monomials lexicographically. From this it follows that in the matrix $M$ (from \Cref{eqn:list}), we have that $C > A$ (in lexicographic order) implies $M_{C,A} = 0$. Thus  we may write
\begin{equation} \label{eqn:expand}
    R(z)_{B,A}  = Q(z)_{B,A}  - \sum_{C < A} P(z)_{B,C}M_{C,A},
\end{equation}
and we note that $P(z)$ is a scalar multiple of $z^{C - B}$, since it is the coefficient of $x^C$ in $(z\cdot x)^2 x^B$.

Recall that $Q(z)$ has rank $r'$ and hence has an $r' \times r'$ submatrix with nonzero determinant; for the sake of notation, say its rows are $B_1, \dots, B_{r'}$ and its columns are $A_1, \dots, A_{r'}$. Let $Q'(z)$ denote this submatrix, and let $R'(z)$ denote the
submatrix of $R(z)$ on the same rows and columns. Every nonzero term in the determinant expansion of $Q'(z)$ is a scalar multiple of the same monomial:
\begin{equation}
    \prod_{i=1}^{r'} z^{A_{\pi(i)}-B_i} = z^\delta,
    \quad \delta \coloneqq \sum_{i=1}^{r'} A_i-\sum_{i=1}^{r'} B_i,
\end{equation}
where we are identifying  sets with their exponent vectors. Thus $\det Q'(z) = c z^\delta$ for some nonzero $c \in \F_3$.

Now expand $\det R'(z)$ using \Cref{eqn:expand}. The terms that use
only the $Q(z)$ entries contribute exactly $\det Q'(z)=cz^\delta$.
Every other nonzero term uses at least one correction
$-P(z)_{B,C}M_{C,A}$. In the exponent sum, this replaces the
corresponding column label $A$ by a lexicographically smaller monomial label $C<A$. Thus for the permutation $\pi$ specifying this
term, its exponent has the form
\begin{equation}
    \sum_{i=1}^{r'} E_i-\sum_{i=1}^{r'} B_i,
\end{equation}
where each $E_i$ is either $A_{\pi(i)}$ or a monomial strictly preceding $A_{\pi(i)}$; and, at least one replacement occurs. Since lexicographic order is preserved under addition,
this exponent is strictly smaller than
\begin{equation}
    \sum_{i=1}^{r'} A_{\pi(i)}-\sum_{i=1}^{r'} B_i=\delta.
\end{equation}
As a consequence, none of the correction terms can cancel the 
$c z^\delta$ term, meaning $\det R'(z)$ is also nonzero and we have $\rank R(z) \geq \rank R'(z) = r'$ as needed.

Together with the omitted treatment of nonmultilinear monomials, these arguments establish that the degree of regularity is $O(\log(1/\eps)/\eps)$ with high probability when $m = \eps n^2$ and $\eps$ is a constant.

\section{Quantum reduction from \texorpdfstring{\FSubsetSum}{F3-Subset-Sum} to \texorpdfstring{\BinaryErrorLWE}{Binary-Error LWE}}\label{sec:quantum}

The reduction from \FSubsetSum to average-case \BinaryErrorLWE over $\F_3$ is a special
case of the filtering technique of Chen, Liu, and Zhandry~\cite{CLZ22}, which in turn
builds on Regev's quantum reduction~\cite{Reg09} and follows the Fourier-state reinterpretation \cite{SSTX09} (see also \cite{DRT24,CT23,CT24,BCT25}).

The reduction is formally captured by the following theorem.

\begin{theorem}\label{thm:coherent-reduction}
Suppose we have a deterministic decoder
$\Dec:\F_3^{n\times m}\times\F_3^m\to\F_3^n$ that runs in time $T$ and solves
average-case \BinaryErrorLWE over $\F_3$ with error $\eta$, i.e.,
\[
 \Pr_{A\sim\F_3^{n\times m},\,s\sim\F_3^n,\,e\sim\{0,1\}^m}
 [\Dec(A,A^\top s+e)\ne s]\le\eta.
\]
Then there is a quantum algorithm that solves average-case \FSubsetSum (\Cref{prob:f3-subset-sum}) with
probability at least
\[
 1-2\eta-3^n\cdot(2/3)^m
\]
and runs in time $\poly(m,n,T)$.
\end{theorem}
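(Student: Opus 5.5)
The plan is to run Chen, Liu, and Zhandry's version of Regev's reduction \emph{in reverse}. We prepare a superposition of noisy codewords of the row space of $H=[h_1,\ldots,h_m]\in\F_3^{n\times m}$, use $\Dec$ to erase the secret, and apply the quantum Fourier transform over $\F_3^m$. The result is a superposition over binary vectors $y\in\{0,1\}^m$ with $Hy=0$, and measuring it gives a nonzero such $y$, i.e.\ the set $S=\{i:y_i=1\}$. Let $\omega=e^{2\pi\complexi/3}$.

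The key choice is the filter. On $\F_3$ define $g(0)=1$, $g(1)=-1$ and $g(2)=0$. Its Fourier transform is $\hat g(e)=1-\omega^{e}$ (up to sign conventions), which vanishes at $e=0$ and has modulus $\sqrt3$ at $e=1,2$. So the per-coordinate error state $\phi(e)=(1-\omega^e)/\sqrt{3}$ is supported on $\{1,2\}$ with equal weights. After the shift $e\mapsto e-1$ it is uniform binary noise with known phases.

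Concretely, the algorithm has four steps.
\begin{enumerate}[nosep]
\item Prepare $3^{-n/2}\sum_{s\in\F_3^n}\sum_{e\in\{1,2\}^m}\prod_i\phi(e_i)\,\ket{s}\ket{H^\top s+e}$. This is easy, since it is a product state followed by a reversible linear map.
\item Compute $\Dec(H,b-\bone)$ into a fresh register, where $b$ denotes the second register, and subtract it from the $s$ register. This erases $s$ whenever $\Dec$ is correct. Then uncompute $\Dec$ to clean the fresh register.
\item Apply the inverse QFT on $\F_3^m$. Summing over $s$ gives $\sum_s\omega^{-\langle y,H^\top s\rangle}=3^n\bone[Hy=0]$. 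The transform of the product of the $\phi$'s is $\prod_i g(y_i)$ up to normalization. The ideal output is therefore proportional to $\sum_{y\in\{0,1\}^m,\,Hy=0}(-1)^{|y|}\ket{y}$.
\item Measure and output $S=\operatorname{supp}(y)$.
\end{enumerate}

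The error accounting is in two parts. First, when every decoding call is correct, the map $(s,e)\mapsto(\ket{0},\ket{H^\top s+e})$ is an isometry on the support, and the state after step 2 is exactly the ideal one. The decoder sees $b-\bone=H^\top s+(e-\bone)$ with $e-\bone$ uniform in $\{0,1\}^m$ and $s$ uniform. It also sees a uniform $H$, because average-case \FSubsetSum has uniform $h_i$ and $\Dec$ only needs uniform $A=H$. The squared weight of the bad branch, averaged over $H$, is therefore exactly the decoder's average failure probability, at most $\eta$. A standard bound converts this weight into trace distance at most about $\sqrt{\eta}$, but that costs a square root. To get the stated loss $2\eta$, I would instead compare measurement statistics directly: the true state is the ideal state plus an error vector of squared norm at most $\eta$ on average, supported on bad branches. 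Deleting the bad branches and tracking output probabilities should give a loss of at most $2\eta$.

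Second, for the ideal state the output is uniform over $\mathcal{N}=\{y\in\{0,1\}^m:Hy=0\}$, so the failure probability is $\E_H[1/|\mathcal{N}|]$, coming only from measuring $y=0$. Distinct nonzero binary vectors are never $\F_3$-multiples of each other, so the events $\{Hy=0\}$ for $y\neq0$ are pairwise independent, each with probability $3^{-n}$. Hence $|\mathcal{N}|-1$ has mean $\mu=(2^m-1)3^{-n}$ and variance at most $\mu$. Splitting according to whether $|\mathcal{N}|-1\ge\mu/2$ and using Chebyshev gives $\E[1/|\mathcal{N}|]\le O(1/\mu)=O(3^n2^{-m})$, which is at most $3^n(2/3)^m$ after adjusting constants, or directly in the relevant range. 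The runtime is $\poly(m,n,T)$, since the QFT over $\F_3^m$ and the arithmetic are polynomial and $\Dec$ is called twice.

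I expect the main obstacle to be the first error estimate: showing that decoder failures cost only $2\eta$ in success probability, rather than a square-root loss, while averaging correctly over $H$. The failure set depends jointly on $(H,s,e)$, and the unnormalized bad component could interfere with the good one after the QFT. I would first bound the amplitude error by the norm of the bad component and then apply a triangle-type inequality to the probability of the event ``$y\neq0$''.
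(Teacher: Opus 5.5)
\textbf{Gap: the decoder-error accounting, and the filter that breaks it.} The overall framework is right: it is the same Chen--Liu--Zhandry filtering the paper uses, and your runtime and decoder-distribution remarks are fine. But the quantitative step the theorem is about is left as ``should give $2\eta$'', and with your filter it does not go through.

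Because $g$ is supported on $\{0,1\}$, the trivial vector $y=0$ lies in the support of the ideal output, and it is a failure outcome. Write the state after step~2 as $a-\beta+E$, where:
\begin{itemize}
\item $a=\sum_{s,e}\alpha_e\ket{0}\ket{H^\top s+e}$;
\item $\beta$ collects the bad terms with the erased register $\ket0$, i.e.\ the terms missing from the good branch;
\item $E$ is the bad branch with nonzero first register.
\end{itemize}
Normalize $\phi$ correctly as $(1-\omega^e)/\sqrt6$. Then after the inverse QFT, $a$ becomes $3^{n/2}2^{-m/2}\ket0\sum_{y\in\mathcal N}(-1)^{|y|}\ket{y}$. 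This is unnormalized and puts squared weight exactly $3^n2^{-m}$ on $y=0$, for every $H$.

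So the failure probability is $\|P_{\mathrm{fail}}a+P_{\mathrm{fail}}(E-\beta)\|^2$ with $P_{\mathrm{fail}}a\neq0$. A triangle-type inequality then only yields $(\sqrt{3^n2^{-m}}+\|E-\beta\|)^2$. The cross term $2\sqrt{3^n2^{-m}}\,\|E-\beta\|$ is absorbed by neither of your two losses, so they do not simply add.

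Moreover, the error vector does not have squared norm at most $\eta$ on average. You do have $\E\|E\|^2\le\eta$. However, $\E\|\beta\|^2$ is bounded by $\eta$ from its diagonal plus $3^n(2/3)^m$ from collisions $H^\top s+e=H^\top s'+e'$ with $s\ne s'$, each of probability $3^{-m}$. In the paper, that is exactly where the factor $2$ and the term $3^n(2/3)^m$ come from, not from the trivial solution. Your accounting would therefore end with extra terms beyond the stated $2\eta+3^n(2/3)^m$.

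\textbf{Missing idea: keep failure outcomes out of the ideal support.} The error should enter only through its own projection. The paper achieves this by passing to the signed problem via $a_i=h_i$ for $i<m$ and $a_m=h_m-\sum_{i<m}h_i$. Uniform instances map to uniform instances, and any $c\in\{\pm1\}^m$ with $\sum_ic_ia_i=0$ yields a nonempty zero subset-sum of the $h_i$.

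In your language, this is a filter supported on $\{1,2\}$, e.g.\ $g(1)=1$, $g(2)=-1$, $g(0)=0$. Its transform $\omega^e-\omega^{2e}$ still vanishes at $e=0$ and has constant modulus on $\{1,2\}$, so the noise is again uniform binary after a shift.

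Now the ideal component lies entirely in the span of valid outcomes. With $E$ and $\beta$ defined as before, this gives $\Pr[\text{failure}\mid A]=\|P_{\mathrm{fail}}(E-\beta)\|^2\le\|\beta\|^2+\|E\|^2$, which is linear with no cross term. Averaging with $\E\|E\|^2\le\eta$ and $\E\|\beta\|^2\le\eta+3^n(2/3)^m$ gives exactly $2\eta+3^n(2/3)^m$.

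The event that no solution exists is absorbed automatically, since then the ideal component is zero. So no separate estimate of $\E[1/|\mathcal N|]$, with its adjusting of constants, is needed.
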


The purpose of this section is to motivate and explain the quantum reduction underlying this theorem. We stress that this reduction is not new to this work and follows from~\cite{CLZ22}.

We first observe that \FSubsetSum is equivalent to the following signed zero-sum problem:
\begin{problem}[\FSignedSum]\label{prob:f3-signed-sum}
Given input vectors $a_1,\ldots,a_m\in\F_3^n$ uniformly at random, the task is to find coefficients $c_1, \dots, c_m \in \{\pm 1\}$ such that $\sum_{i = 1}^m c_i a_i=0$.
\end{problem}
The only difference between \FSubsetSum and \FSignedSum is that in the former, we allow coefficients 0 or 1 (but not $-1$), whereas in the latter we allow 1 or $-1$ (but not 0).
It is easy to see that these problems reduce to each other: given a uniformly random instance $h_1, \dots, h_m$ of \FSubsetSum, we can set \[
 a_i=h_i\quad(i<m),\qquad a_m=h_m-\sum_{i<m}h_i \,.
\]
This yields another uniformly random set of vectors, and solving \FSignedSum on $\{a_1, \dots, a_m\}$ straightforwardly implies a solution of \FSubsetSum on $\{h_1, \dots, h_m\}$ (and similarly for the other direction).
For pedagogical reasons, we focus on \FSignedSum in this section and explain how it reduces to \BinaryErrorLWE.

To solve \FSignedSum for a given set of vector $\{a_1, \dots, a_m\}$, it suffices to prepare the state (omitting normalisation)
\begin{equation}\label{eq:qr-solution}
 \ket{\Psi_{a_1, \dots, a_m}}\ \propto
 \sum_{c\in\{\pm1\}^m}\indicator\left[\sum c_i a_i = 0\right]\ket c.
\end{equation}
This is the uniform superposition over all solutions to \FSignedSum, so measuring in the computational basis will yield a uniformly random solution.

We will now rewrite this state in a more suggestive way.
Let $\omega=e^{2\pi\complexi/3}$. For every $y\in\F_3^n$,
\begin{equation}\label{eq:qr-indicator}
 \indicator[y=0]=3^{-n}\sum_{s\in\F_3^n}\omega^{\langle s,y\rangle}.
\end{equation}
Applying this for the indicator function in \Cref{eq:qr-solution}, we can write 
\begin{align*}
 \ket{\Psi_{a_1, \dots, a_m}}
 &\propto \sum_{s\in\F_3^n}\sum_{c\in\{\pm1\}^m}
       \omega^{\sum_i c_i\langle a_i,s\rangle}\ket c.
\end{align*}
Expanding the sum over $c$, this can can be rewritten as 
\begin{align*}
 \ket{\Psi_{a_1, \dots, a_m}}
 &=3^{-n}\sum_{s\in\F_3^n}\bigotimes_{i=1}^m
       \left(\omega^{\langle a_i,s\rangle}\ket{+1}
             +\omega^{-\langle a_i,s\rangle}\ket{-1}\right)\\
 &=\frac{2^{m/2}}{3^n}\sum_{s\in\F_3^n}\bigotimes_{i=1}^m
       \ket{\tau_{\langle a_i,s\rangle}} \,,
\end{align*}
where we introduced the \emph{trine states} 
\begin{align*}
 \ket{\tau_a}:=\frac{\omega^a\ket{+1}+\omega^{-a}\ket{-1}}{\sqrt2},
 \qquad a\in\F_3.
\end{align*}

The trine states are normalised states that have pairwise overlap $\langle\tau_a\mid\tau_b\rangle=-1/2$ for $a\ne b$.
Consider the following alternative set of normalised states with the same pairwise overlaps: 
\begin{align*}
\ket{\text{not 0}} = \frac{\ket1-\ket2}{\sqrt2},\quad
\ket{\text{not 1}} = \frac{\ket2-\ket0}{\sqrt2},\quad
\ket{\text{not 2}} = \frac{\ket0-\ket1}{\sqrt2}.
\end{align*}
Since this set of states has the same pairwise overlaps (i.e., the same Gram matrix) as the trine states, there exists a unitary $U$ on $\mathbb{C}^3$ such that 
\begin{align*}
\ket{\text{not 0}} = U \ket{\tau_0},\quad
\ket{\text{not 1}} = U \ket{\tau_1},\quad
\ket{\text{not 2}} = U \ket{\tau_2}.
\end{align*}
This unitary is efficient to apply to each qutrit in $\ket{\Psi_{a_1, \dots, a_m}}$.
Hence, we have established that to solve the \FSignedSum problem, it suffices to prepare the state 
\begin{align*}
\ket{\tilde \Psi_{a_1, \dots, a_m}}
 &=U^{\otimes m}\ket{\Psi_{a_1, \dots, a_m}}\\
 & \propto \sum_{s\in\F_3^n}\bigotimes_{i=1}^m
       \ket{\text{not }\langle a_i,s\rangle} .
\end{align*}

How might one prepare this state?
A first observation is that given $a_i$ and $s$, it is easy to prepare $\bigotimes_{i=1}^m \ket{\text{not }\langle a_i,s\rangle}$.
This means that there exists an efficient unitary $V$ such that 
\begin{align*}
V \ket{0} \ket{s} = \bigotimes_{i=1}^m \ket{\text{not }\langle a_i,s\rangle} \ket{s} \,.
\end{align*}
Consequently, by preparing a uniform superposition over secrets $s$ and applying $V$, we can efficiently prepare a state proportional to 
\begin{align*}
\sum_{s\in\F_3^n}\bigotimes_{i=1}^m
       \ket{\text{not }\langle a_i,s\rangle} \ket{s} \,.
\end{align*}
This is almost our desired state $\ket{\tilde \Psi_{a_1, \dots, a_m}}$, except that we have an additional register with the state $\ket{s}$.
We therefore need to uncompute this additional register, i.e., we need to coherently determine $s$ from $\bigotimes_{i=1}^m \ket{\text{not }\langle a_i,s\rangle}$.

Suppose first that we measure the registers $\bigotimes_{i=1}^m \ket{\text{not }\langle a_i,s\rangle}$ in the standard basis. 
By definition of the (suggestively named) $\ket{\text{not } i}$-states, this will yield a tuple $(b_1, \dots, b_m) \in \mathbb{F}_3^m$ with the property that $\langle a_i,s\rangle \neq b_i$.
In fact, each $b_i$ will be chosen uniformly at random from the two possible $\mathbb{F}_3$-elements that are not equal to $\langle a_i,s\rangle$.
This means that $(A,b-\bone)$, where $A=(a_1|\cdots|a_m)$ and $b=(b_1,\ldots,b_m)$, is a (uniformly chosen) instance of \BinaryErrorLWE, so the decoder $\Dec$ whose existence we assumed in Theorem~\ref{thm:coherent-reduction} can recover $s$ (with high probability).

This reasoning also works in superposition; the only difficulty is handling the decoder errors.
Suppose first that we have a perfect decoder, i.e., consider the special case $\eta = 0$ in \Cref{thm:coherent-reduction}.\footnote{Note that this is an idealization: even information-theoretically, a perfect decoder is impossible because different secrets $s$ can produce the same value of $A^\top s + e$ for different (albeit highly unlikely) choices of the error vector $e$. We handle decoder errors in \Cref{sec:qr-errors}.}
Denoting the operation of the decoder by 
\begin{equation}\label{eq:qr-uncompute}
 W_A:\ket b\ket r\longmapsto
       \ket b\ket{r-\Dec(A,b-\bone)} \,,
\end{equation}
we get that 
\begin{align*}
 W_A\left(\sum_{s\in\F_3^n}
       \left(\bigotimes_{i=1}^m\ket{\text{not }\langle a_i,s\rangle}\right)
       \ket s\right)
 &=\left(\sum_{s\in\F_3^n}\bigotimes_{i=1}^m
       \ket{\text{not }\langle a_i,s\rangle}\right)\ket0\\
 &\propto\ket{\tilde\Psi_{a_1,\ldots,a_m}}\ket0
\end{align*}
as desired.

\subsection{Allowing decoding errors}\label{sec:qr-errors}

To handle decoding errors, let us first write down the full, normalised
state immediately after applying $W_A$.
Each ``not'' state can be written as the signed equal superposition
\[
 \ket{\text{not }a}
 =\frac{\ket{a+1}-\ket{a-1}}{\sqrt2}
 =\frac{1}{\sqrt2}\sum_{\delta\in\{\pm1\}}\delta\ket{a+\delta},
 \qquad a\in\F_3,
\]
where addition is in $\mathbb{F}_3$.
Writing $A=(a_1|\cdots|a_m)$ as before and introducing the normalisation factor that we previously omitted, the full quantum state after applying $W_A$ therefore is
\begin{align}
 &\frac{1}{\sqrt{3^n2^m}}
       \sum_{\substack{s\in\F_3^n\\\delta\in\{\pm1\}^m}}
       \left(\prod_{i=1}^m\delta_i\right)\ket{A^\top s+\delta} \label{eq:total_state}
       \ket{s-\Dec(A,A^\top s+\delta-\bone)}\\
 &\qquad=\ket{G_A}\ket0+\ket{E_A},
\end{align}
where we have split the sum according to whether the decoder succeeds:
\begin{align*}
 \ket{G_A}
 &:=\frac{1}{\sqrt{3^n2^m}}
       \sum_{\substack{s,\delta:\\
                 \Dec(A,A^\top s+\delta-\bone)=s}}
       \left(\prod_{i=1}^m\delta_i\right)\ket{A^\top s+\delta},\\
 \ket{E_A}
 &:=\frac{1}{\sqrt{3^n2^m}}
       \sum_{\substack{s,\delta:\\
                 \Dec(A,A^\top s+\delta-\bone)\ne s}}
       \left(\prod_{i=1}^m\delta_i\right)\ket{A^\top s+\delta}
       \ket{s-\Dec(A,A^\top s+\delta-\bone)}.
\end{align*}
Here and below, $s$ ranges over $\F_3^n$ and $\delta$ over $\{\pm1\}^m$.
The signs in the amplitudes are real numbers, while the register labels
are computed over $\F_3$. The factor $1/\sqrt{3^n2^m}$ comes from the
uniform superposition over secrets and the $m$ normalised ``not'' states.

First consider the unsuccessful component $\ket{E_A}$.
By construction, the error probability of the decoder is equal to the probability that if we measure the second register in \Cref{eq:total_state}, we get a non-zero outcome.
The latter probability is exactly the squared norm of $\ket{E_A}$.
Averaging over $A$, the decoder's error bound therefore gives
\begin{align*}
\E_A\left[\norm{\ket{E_A}}^2\right]\le\eta \,.
\end{align*}

Now consider the successful component $\ket{G_A}\ket0$. The second
register has been erased, but the first register contains only the
correctly decoded terms rather than all combinations of $s$ and $\delta$. 
To see how it differs from the desired coherent
sum, collect the missing terms in the vector
\begin{align}
 \ket{B_A}:=\frac{1}{\sqrt{3^n2^m}}
 \sum_{\substack{s,\delta:\\
                 \Dec(A,A^\top s+\delta-\bone)\ne s}}
       \left(\prod_{i=1}^m\delta_i\right)\ket{A^\top s+\delta}. \label{eq:BA_def}
\end{align}

Adding these missing terms back recovers the desired coherent sum.
Thus the full state in \Cref{eq:total_state} can be written as
\begin{align*}
 \ket{G_A}\ket0+\ket{E_A}
 &=\left(\frac{1}{\sqrt{3^n}}\sum_{s\in\F_3^n}
       \bigotimes_{i=1}^m\ket{\text{not }\langle a_i,s\rangle}\right)\ket0 -\ket{B_A}\ket0+\ket{E_A}.
\end{align*}
The first term is proportional to the desired state
$\ket{\tilde\Psi_{a_1,\ldots,a_m}}\ket0$. Applying
$(U^\dagger)^{\otimes m}$ to the first register sends this term entirely
to valid signed solutions. Hence the probability of an invalid output
is at most the squared norm of the combined error term. Moreover,
$\ket{B_A}\ket0$ and $\ket{E_A}$ are orthogonal, since the latter has a
nonzero second register. We therefore obtain
\[
 \Pr[\text{failure}\mid A]
 \le\norm{-\ket{B_A}\ket0+\ket{E_A}}^2
 =\norm{\ket{B_A}}^2+\norm{\ket{E_A}}^2
\]
Averaging over $A$ gives
\begin{equation}\label{eq:qr-failure-bound}
 \Pr[\text{failure}]\le\eta+\E_A\left[\norm{\ket{B_A}}^2\right].
\end{equation}

It remains to bound $\norm{\ket{B_A}}^2$. 
For this, observe from \Cref{eq:BA_def} that the diagonal terms in $\norm{\ket{B_A}}^2$ sum to $\norm{\ket{E_A}}^2$.  An off-diagonal pair $(s,\delta)$ and $(s',\delta')$ can contribute
only when $A^\top s+\delta=A^\top s'+\delta'$ and
$(s,\delta)\ne(s',\delta')$, which necessarily implies $s\ne s'$.  Hence, for each such
fixed pair, $\Pr_A[A^\top(s-s')=\delta'-\delta]=3^{-m}$.  In total,
\begin{align*}
\E_A\left[\norm{\ket{B_A}}^2\right]
&\le \E_A[\norm{\ket{E_A}}^2]
  +\sum_{s\ne s'}\sum_{\delta,\delta'}3^{-m}\cdot3^{-n}2^{-m}\\
&\le\eta+3^{2n}\cdot2^{2m}\cdot3^{-m}\cdot3^{-n}\cdot 2^{-m}  = \eta+3^n(2/3)^m\,.
\end{align*}
Hence, 
\begin{align*}
 \Pr[\text{failure}]\le 2\eta+3^n(2/3)^m \,,
\end{align*}
concluding the proof of \Cref{thm:coherent-reduction}.

\section{\texorpdfstring{\PointwiseBinaryErrorLWE}{Pointwise Binary-Error LWE} via quotient decoding}\label{sec:lwe}

Throughout this section, $q$ is fixed as in \Cref{prob:pointwise-binary-error-lwe} and we will prove \Cref{thm:lwe-uniform}, which proves the heuristic tradeoff predicted by Sun, Tibouchi, and Abe~\cite{STA20}.

\lweuniform*

We prove the theorem by combining exhaustive search in the linear-sample regime $m\approx n$ with the quotient decoder at higher densities $m\gg n$.

\paragraph{Section organization.}
\Cref{sec:uniform-quotient-decoder} describes the quotient decoder algebraically.
\Cref{sec:quotient-dec-alg} gives its algorithmic version.
\Cref{sec:lwe-correctness} proves \Cref{thm:lwe-uniform} and its corollary \Cref{thm:quantum-main}, assuming a technical surjectivity condition.
\Cref{sec:lwe-success} then supplies the proof of the surjectivity bound.

\subsection{The algebraic version of the quotient decoder}\label{sec:uniform-quotient-decoder}

We first describe the algebra that encodes the candidate secrets, then in \Cref{sec:quotient-dec-alg} explain how to compute it algorithmically.

Given $(A,b)\in\F_q^{n\times m}\times\F_q^m$, let $a_i$ denote column $i$ of $A$ and set
\begin{equation}\label{eq:sec:decoder_setup}
 S=\F_q[x_1,\ldots,x_n],\qquad L_i(x)=\langle a_i,x\rangle,
 \qquad f_i(x)=(L_i(x)-b_i)(L_i(x)-b_i+1).
\end{equation}
These are the annihilating equations of Arora and Ge~\cite{AG11}: $f_i(u)=0$ exactly
when $b_i-L_i(u)\in\{0,1\}$.  Thus the set of candidate secrets is
\begin{equation}\label{eq:candidate-sec}
 \calC=\{u\in\F_q^n:f_i(u)=0\text{ for all }i\in[m]\}.
\end{equation}
In particular, the planted secret $s$ from \Cref{prob:pointwise-binary-error-lwe} belongs to $\calC$ as $b=A^\top s+e$ with
$e\in\{0,1\}^m$.
Define the quotient algebra $S/J$ where
\begin{equation}\label{eq:J}
 J=(f_1,\ldots,f_m,x_1^q-x_1,\ldots,x_n^q-x_n)\subseteq S
 \quad\text{and $(\cdot)$ denotes the generated ideal.}
\end{equation}

We first describe the quotient by the field equations and consider $S/I$, where
\[
 I=(x_1^q-x_1,\ldots,x_n^q-x_n).
\]
Observe that the evaluation map $\ev\colon S/I\to\prod_{u\in\F_q^n}\F_q$ is an isomorphism, where $\ev([p]_I)=(p(u))_{u\in\F_q^n}$ and $[p]_I=p+I$ is the coset in $S/I$ represented by $p$.

We next impose the sample equations and consider $S/J$.
Define the evaluation map
$$
\ev'\colon S/J\to\prod_{u\in\calC}\F_q
\quad\text{by}\quad
\ev'([p]_J)=(p(u))_{u\in\calC},
$$
which is simply $\ev$ restricted to $\calC$ and we use $[p]_J=p+J$ to denote cosets in $S/J$. Note that $\ev'$ is well-defined since any $f\in J$ vanishes on the entire $\calC$.

\begin{claim}\label{clm:decoder-candidate-algebra}
$\ev'$ is an isomorphism between $S/J$ and $\prod_{u\in\calC}\F_q$. Moreover, $[\delta_u]_J,u\in\calC$ form a basis of $S/J$, where $\delta_u$ is the indicator polynomial of $u$.
\end{claim}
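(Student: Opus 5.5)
We will show that $\ev'$ is well defined, surjective and injective; the basis statement then follows. The approach is to pull everything back to $S/I$, where we already know $\ev$ is an isomorphism onto $\prod_{u\in\F_q^n}\F_q$. Since $I\subseteq J$, we have $S/J\cong (S/I)/(J/I)$. Under $\ev$, the ideal $J/I$ corresponds to the ideal of $\prod_{u\in\F_q^n}\F_q$ generated by the tuples $(f_i(u))_u$. So the claim reduces to identifying this ideal.

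\textbf{Step 1: the ideal $J/I$.} Ideals of a finite product of fields $\prod_u \F_q$ are exactly the sets $\{v: v_u=0 \text{ for all } u\in Z\}$, where $Z$ is a subset of the index set. For the ideal generated by the tuples $(f_i(u))_u$, the set $Z$ is the common zero set of those tuples. That common zero set is precisely $\calC$ as defined in \eqref{eq:candidate-sec}.

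To avoid quoting the structure of ideals, we can argue directly with indicator polynomials. Each $\delta_u$ can be taken to be $\prod_j\bigl(1-(x_j-u_j)^{q-1}\bigr)$. For $u\notin\calC$, pick $i$ with $f_i(u)\neq0$. Then $\delta_u\equiv f_i(u)^{-1} f_i\,\delta_u \pmod I$, because both sides evaluate identically on $\F_q^n$ and $\ev$ is injective. Hence $\delta_u\in J$.

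\textbf{Step 2: injectivity.} Suppose $p(u)=0$ for all $u\in\calC$. Since $\ev$ is an isomorphism, $p\equiv\sum_{u\in\F_q^n}p(u)\delta_u \pmod I$. The terms with $u\in\calC$ vanish, and every remaining $\delta_u$ with $u\notin\calC$ lies in $J$ by Step 1. Therefore $p\in J$.

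\textbf{Step 3: surjectivity and the basis.} Well-definedness was already noted, since $J$ vanishes on $\calC$. For surjectivity, a target tuple $(c_u)_{u\in\calC}$ is hit by $\sum_{u\in\calC}c_u\delta_u$. Thus $\ev'$ is a bijective ring homomorphism, i.e.\ an isomorphism. The classes $[\delta_u]_J$ for $u\in\calC$ map to the standard basis vectors of $\prod_{u\in\calC}\F_q$, so they form a basis of $S/J$.

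There is no real obstacle here. The only point needing care is Step 1: showing that $\delta_u\in J$ for $u\notin\calC$. This uses that $f_i(u)$ is a nonzero scalar and that congruence modulo $I$ is detected by evaluation.
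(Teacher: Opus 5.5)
Your proof is correct and takes essentially the same route as the paper. In both, the key step is that for $u\notin\calC$ the congruence $\delta_u\equiv f_i(u)^{-1}f_i\delta_u \pmod I$, valid because $\ev$ is injective, forces $[\delta_u]_J=0$; the expansion $p\equiv\sum_u p(u)\delta_u \pmod I$ together with evaluation on $\calC$ then finishes the argument. The only difference is bookkeeping: the paper shows the classes $[\delta_u]_J$, $u\in\calC$, form a basis and deduces that $\ev'$ is an isomorphism, whereas you prove $\ev'$ bijective directly and read off the basis, which also makes explicit the spanning step the paper leaves implicit.
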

\begin{proof}
By the definition of $\ev$, multiplication in $S/I$ becomes entrywise product in $\prod_{u\in\F_q^n}\F_q$. Hence
\[
 f[\delta_u]_I=f(u)[\delta_u]_I\quad\text{in }S/I.
\]

If $u\notin\calC$, choose $i$ with $f_i(u)\ne0$ and then $[\delta_u]_I=f_i(u)^{-1}f_i[\delta_u]_I$ vanishes after quotienting by $f_i$; in other words, $[\delta_u]_J=[0]_J$.
The remaining $[\delta_u]_J,u\in\calC$ are linearly independent. To see this, assume that $\sum_{u\in\calC}c_u[\delta_u]_J=[0]_J$. Then $\sum_{u\in\calC}c_u\delta_u\in J$, which means $\sum_{u\in\calC}c_u\delta_u(v)=0$ for every $v\in\calC$ and thus each $c_u=0$.
Hence they form a linear basis of $S/J$ and $\ev'$ is an isomorphism.
\end{proof}

As a consequence of \Cref{clm:decoder-candidate-algebra}, we know how to recover the secret $s$ in \Cref{prob:pointwise-binary-error-lwe}, at least algebraically.

\begin{corollary}\label{cor:decoder-unique-candidate}
There is a unique candidate $s$ for \Cref{prob:pointwise-binary-error-lwe} if and only if $\dim(S/J)=1$.  In that case, $[1]_J$ is a basis of $S/J$, and the $j$th coordinate of $s$ is determined by $[x_j]_J=s_j[1]_J$.
\end{corollary}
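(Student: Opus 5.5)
The corollary follows directly from \Cref{clm:decoder-candidate-algebra}, so the plan is simply to read off each part from the isomorphism $\ev'\colon S/J\to\prod_{u\in\calC}\F_q$.

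\emph{Step 1: the dimension of $S/J$.} Because $\ev'$ is an $\F_q$-linear isomorphism, $\dim(S/J)=|\calC|$. The planted secret $s$ always lies in $\calC$, as noted after \eqref{eq:candidate-sec}, so $|\calC|\ge1$. Hence $\calC=\{s\}$ holds exactly when $\dim(S/J)=1$. This proves the ``if and only if''.

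\emph{Step 2: the basis element $[1]_J$.} Assume $\calC=\{s\}$. Then $\ev'([1]_J)=(1)$, which is nonzero. Since $\ev'$ is injective, $[1]_J\neq[0]_J$. A nonzero vector in a one-dimensional space is a basis, so $[1]_J$ spans $S/J$. One could instead use the basis $[\delta_s]_J$ from the claim. That route gives the same conclusion, because $\ev'([\delta_s]_J)=(1)=\ev'([1]_J)$ forces $[\delta_s]_J=[1]_J$.

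\emph{Step 3: reading off the coordinates.} Since $[1]_J$ spans $S/J$, there is a unique scalar $c_j$ with $[x_j]_J=c_j[1]_J$. Applying $\ev'$ to both sides gives $(s_j)=c_j\cdot(1)$, so $c_j=s_j$.

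No step here poses a real obstacle. The only point needing care is that $\calC$ is nonempty, and this is supplied by the promise $b=A^\top s+e$ with $e\in\{0,1\}^m$.
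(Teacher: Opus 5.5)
Your proposal is correct and follows the paper's proof: both use $\dim(S/J)=|\calC|$ from the isomorphism $\ev'$ of \Cref{clm:decoder-candidate-algebra}. When $\calC=\{s\}$, both observe that $\ev'$ is evaluation at $s$, sending $[1]_J\mapsto 1$ and $[x_j]_J\mapsto s_j$. Your version simply spells out the injectivity and one-dimensionality steps that the paper leaves implicit.
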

\begin{proof}
By \Cref{clm:decoder-candidate-algebra}, $\dim(S/J)=|\calC|$. When $\calC=\{s\}$, the isomorphism $\ev'$ is evaluation at $s$ which sends $[1]_J$ to $1$ and $[x_j]_J$ to $s_j$.
\end{proof}

To compute $S/J$ efficiently and for simplicity, we consider the graded ring that isolates the homogeneous part of each degree. Define
\begin{equation}
 R=S/(x_1^q,\ldots,x_n^q)
 \quad\text{and}\quad
 R_d=\text{the homogeneous degree-$d$ part of $R$}.
 \label{eq:truncated-ring}
\end{equation}
Call a monomial \emph{reduced} if each variable has exponent below $q$.
Thus $R_d$ has a basis consisting of the reduced degree-$d$ monomials.

Multiplication in $R$ discards any monomial divisible by an $x_j^q$;
these are precisely the terms whose degree drops when we reduce modulo $I$.
Equivalently, $R$ is the associated graded ring of $S/I$ for the total-degree
filtration.

The degree-two homogeneous part of $f_i$ is $L_i^2$. For an integer $D\ge2$,
define
\begin{equation}
 \Phi_{m,D}:R_{D-2}^{m}\longrightarrow R_D
 \quad\text{by}\quad
 \Phi_{m,D}(Q_1,\ldots,Q_m)=\sum_{i=1}^m L_i^2Q_i.
 \label{eq:truncated-top-map}
\end{equation}
If this map is surjective, the sample equations express every reduced degree-$D$ monomial, modulo $J$, as a polynomial of smaller degree.
We will later prove in \Cref{sec:quotient-dec-alg} that the surjectivity allows us to efficiently compute $S/J$, after which \Cref{cor:decoder-unique-candidate} determines whether the candidate is unique and, if so, recovers it.

We summarize the algebraic version of the degree-$D$ quotient decoder as the following \Cref{def:quotient-decoder}.

\begin{definition}[Degree-$D$ quotient decoder, algebraically]\label{def:quotient-decoder}
On input $(A,b)$ from \Cref{prob:pointwise-binary-error-lwe} and an integer $D\ge2$, perform the following computation.
\begin{enumerate}[label=(\arabic*)]
\item\label{itm:def:quotient-decoder_1}
If $\Phi_{m,D}$ is not surjective, output the all-zero vector in $\F_q^n$ and halt.
\item\label{itm:def:quotient-decoder_2}
Compute $S/J$. If $\dim(S/J)\ne1$, output the all-zero vector and halt.
\item\label{itm:def:quotient-decoder_3}
For each $j\in[n]$, find $s_j\in\F_q$ such that
$[x_j]_J=s_j[1]_J$; and output $s=(s_1,\ldots,s_n)$.
\end{enumerate}
\end{definition}

\subsection{The algorithmic version of the quotient decoder}\label{sec:quotient-dec-alg}

We implement \Cref{def:quotient-decoder} using polynomial arithmetic and
Gaussian elimination. 
Our algorithm is similar to a related approach using finite-field Gr\"obner bases in~\cite{SemaevTenti21}.

Given the surjectivity of $\Phi_{m,D}$, we will construct matrices that multiply by each variable
and reduce the result to degree below $D$. We then compute all linear relations
needed to make these matrices act as multiplication on $S/J$, so everything becomes \emph{linear} algebra.

For each $\alpha\in\{0,\ldots,q-1\}^n$, we use $x^\alpha$ to denote the monomial $\prod_{j=1}^nx_j^{\alpha_j}$ and use $|\alpha|=\sum_j\alpha_j$ to denote its degree.
Write
\[
 \calM_{<D}=\{x^\alpha:\alpha\in\{0,\ldots,q-1\}^n,\ |\alpha|<D\},
 \quad\text{and}\quad
 V=\langle\calM_{<D}\rangle,
\]
where $\langle\cdot\rangle$ denotes the linear span. 
We regard $V$ as a space of reduced polynomials in $S$.  By \eqref{eq:truncated-ring},
it is naturally identified, as a vector space, with
$R_0\oplus R_1\oplus\cdots\oplus R_{D-1}$. We also note the following simple fact on the dimension of $V$.

\begin{fact}\label{fct:dim-V}
$\dim(V)=|\calM_{<D}|\le\min\left\{q^n,\binom{n+D-1}{D-1}\right\}$.
\end{fact}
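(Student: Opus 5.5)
The equality $\dim(V)=|\calM_{<D}|$ holds by definition. The monomials $x^\alpha$ with distinct exponent vectors $\alpha$ are linearly independent in $S$, and $V$ is their span. So it remains to bound $|\calM_{<D}|$ by each of the two quantities separately.

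For the first bound, note that $\calM_{<D}$ is indexed by a subset of $\{0,\ldots,q-1\}^n$. That set has exactly $q^n$ elements, so $|\calM_{<D}|\le q^n$.

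For the second bound, drop the constraint that each exponent is below $q$. This gives an injection of $\calM_{<D}$ into the set of all exponent vectors $\alpha\in\mathbb{Z}_{\ge0}^n$ with $|\alpha|\le D-1$. We count that set by stars and bars: add a slack coordinate $\alpha_0=D-1-|\alpha|$. Then the set is in bijection with the nonnegative solutions of $\alpha_0+\alpha_1+\cdots+\alpha_n=D-1$, and there are $\binom{n+D-1}{D-1}$ of these. Taking the minimum of the two bounds proves the fact.

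There is no real obstacle here. The only point requiring care is using the slack variable so the count covers degrees strictly less than $D$, which yields $\binom{n+D-1}{D-1}$ and not $\binom{n+D-1}{n-1}$ (the two coincide anyway).
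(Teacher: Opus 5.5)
Your proof is correct and is the routine count the paper has in mind; the paper states this fact without proof. Distinct monomials are linearly independent, which gives the equality; the inclusion into $\{0,\ldots,q-1\}^n$ gives $q^n$; and stars and bars with a slack coordinate gives $\binom{n+D-1}{D-1}$.

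One small slip is in your closing aside, but nothing in your argument depends on it. The quantity $\binom{n+D-1}{n-1}=\binom{n+D-1}{D}$ counts monomials of degree exactly $D$. It is not equal to $\binom{n+D-1}{D-1}=\binom{n+D-1}{n}$ in general: for $n=3$, $D=2$ they are $6$ and $4$.
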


To address \Cref{itm:def:quotient-decoder_1} of \Cref{def:quotient-decoder}, we recall the linear map $\Phi_{m,D}$ from \eqref{eq:truncated-top-map}.
Form the matrix of $\Phi_{m,D}$ in the reduced monomial bases: its columns are $L_i^2x^\alpha$ computed in $R_D$, for $i\in[m]$ and reduced monomials $x^\alpha$ of degree $|\alpha|=D-2$.
Surjectivity of $\Phi_{m,D}$ means that this matrix has full row rank, namely
$\rank(\Phi_{m,D})=\dim(R_D)$.
In this case and by Gaussian elimination, for every reduced 
monomial $x^\alpha$ of degree $|\alpha|=D$, we have polynomials $Q_{i,\alpha}\in R_{D-2}$ satisfying
\begin{equation}\label{eq:decoder-lifted-relations-t}
 \sum_{i=1}^m L_i^2Q_{i,\alpha}=x^\alpha\quad\text{in }R_D.
\end{equation}

Now recall the actual vanishing polynomial is $f_i(x)=(L_i(x)-b_i)(L_i(x)-b_i+1)$ from \Cref{prob:pointwise-binary-error-lwe}. Hence the monomial $x^\alpha$ reduces to lower-degree terms modulo $J=(f_1,\ldots,f_m,x_1^q-x_1,\ldots,x_n^q-x_n)$.
This inspires us to define the actual remainder $r_\alpha\in V$ by
\begin{equation}
 r_\alpha=x^\alpha-\red\!\left(\sum_{i=1}^m f_iQ_{i,\alpha}\right),
 \label{eq:decoder-lifted-relations}
\end{equation}
where $\red(\cdot)$ computes the remainder modulo $I$ by repeatedly replacing $x_j^q$ by $x_j$.

\begin{claim}\label{clm:decoder-lifted-relations}
If $\Phi_{m,D}$ is surjective, then for each reduced degree-$D$ monomial $x^\alpha$, we have $r_\alpha\in V$ and $[x^\alpha]_J=[r_\alpha]_J$. Moreover, the surjectivity and these $r_\alpha$'s can be computed in deterministic time $\poly(\dim(V),n,m,\log(q))$.
\end{claim}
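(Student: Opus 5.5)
The claim has two parts: an algebraic identity for $r_\alpha$ and a running-time bound. I will prove them separately.

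\textbf{Showing $r_\alpha\in V$.} Fix a reduced degree-$D$ monomial $x^\alpha$ and the polynomials $Q_{i,\alpha}\in R_{D-2}$ from \eqref{eq:decoder-lifted-relations-t}. Regard each $Q_{i,\alpha}$ as a reduced homogeneous polynomial of degree $D-2$ in $S$. Expanding $f_i=L_i^2+(\text{terms of degree}\le1)$ gives
\[
\sum_i f_iQ_{i,\alpha}=\sum_i L_i^2Q_{i,\alpha}+(\text{terms of degree}\le D-1)
\]
in $S$. Split the degree-$D$ part $\sum_i L_i^2Q_{i,\alpha}$ into its reduced monomials and the monomials divisible by some $x_j^q$. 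By \eqref{eq:decoder-lifted-relations-t} the reduced part is exactly $x^\alpha$. Next use the fact that $\red$ never raises degree. A non-reduced monomial strictly drops in degree under $\red$, since each replacement of $x_j^q$ by $x_j$ lowers degree by $q-1\ge 2$. Terms that are already reduced are fixed by $\red$. Hence $\red(\sum_i f_iQ_{i,\alpha})=x^\alpha+(\text{reduced terms of degree}<D)$. Subtracting from $x^\alpha$ leaves a reduced polynomial of degree $<D$, so $r_\alpha\in V$.

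\textbf{Showing $[x^\alpha]_J=[r_\alpha]_J$.} We have $\sum_i f_iQ_{i,\alpha}\in J$. Also $\red(p)-p\in I\subseteq J$ for every $p$, because each replacement step subtracts a multiple of $x_j^q-x_j$. Therefore $x^\alpha-r_\alpha=\red(\sum_i f_iQ_{i,\alpha})\in J$.

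\textbf{Algorithm and running time.} I would build the matrix of $\Phi_{m,D}$ explicitly. It has $\dim R_D$ rows and $m\dim R_{D-2}$ columns. Each column $L_i^2x^\beta$ is computed in $R_D$ by expanding $L_i^2$, which has $O(n^2)$ terms, and discarding non-reduced monomials. Gaussian elimination over $\F_q$ then does three things:
\begin{enumerate}[nosep]
\item it decides whether the matrix has full row rank, i.e.\ whether $\Phi_{m,D}$ is surjective;
\item it finds a preimage of each standard basis vector $x^\alpha$, which gives the $Q_{i,\alpha}$;
\item it lets us compute each $r_\alpha$ by multiplying out $f_iQ_{i,\alpha}$ and reducing, one polynomial at a time.
\end{enumerate}

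The one point needing care is bounding all these sizes by $\poly(\dim V,n,m,\log q)$. The quantity $\dim R_D$ is not a priori bounded by $\dim V$, since $V$ only spans degrees below $D$. I would bound $\dim R_D\le n\cdot\dim R_{D-1}\le n\dim V$, using that every degree-$D$ monomial is $x_j$ times a degree-$(D-1)$ monomial. The bound $\dim R_{D-2}\le\dim V$ is immediate.

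With these bounds, the matrix has polynomially many entries. Field operations cost $\poly(\log q)$. Each $r_\alpha$ involves $m$ products of a quadratic with at most $\dim V$ monomials, followed by a reduction pass. The total is polynomial in $\dim V$, $n$, $m$ and $\log q$. This bookkeeping is the only step beyond routine verification; I would settle it with the $\dim R_D\le n\dim V$ bound.
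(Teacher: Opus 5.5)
Your proof is correct and takes the same route as the paper. The degree-$D$ reduced part of $\sum_i f_iQ_{i,\alpha}$ is $x^\alpha$ by \eqref{eq:decoder-lifted-relations-t}; $\red$ changes a polynomial only by an element of $I\subseteq J$ and never raises degree; and the running time comes from Gaussian elimination on the matrix of $\Phi_{m,D}$, using $\dim R_{D-2}\le\dim V$ and $\dim R_D\le n\dim V$, with your version simply spelling out the degree bookkeeping that the paper leaves implicit.
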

\begin{proof}
$r_\alpha\in V$ follows from \eqref{eq:decoder-lifted-relations-t} and $[x^\alpha]_J=[r_\alpha]_J$ is due to \eqref{eq:decoder-lifted-relations} and the definition of $\red(\cdot)$.
For the runtime, we simply note that the matrix of $\Phi_{m,D}$ has $m\dim(R_{D-2})\le m\dim(V)$ columns and $\dim(R_D)\le n\dim(V)$ rows, on which we perform $\F_q$ arithmetic and Gaussian elimination.
\end{proof}

Let $\tau$ be the linear map from reduced polynomials of degree at most $D$ to $V$
that fixes $V$ and replaces each degree-$D$ monomial $x^\alpha$ by $r_\alpha$. That is,
\begin{equation}\label{eq:tau}
\tau(x^\alpha)=\begin{cases}
    x^\alpha & |\alpha|<D,\\
    r_\alpha & |\alpha|=D,
\end{cases}
\end{equation}
which is then extended linearly to all reduced polynomials of degree at most $D$.

To address \Cref{itm:def:quotient-decoder_2} of \Cref{def:quotient-decoder}, we need to construct linear maps that capture $S/J$ upon surjectivity of $\Phi_{m,D}$.
Recall that $\red(\cdot)$ computes the reduced form of a polynomial modulo $I$.
Define linear maps $\Xsf_1,\ldots,\Xsf_n\colon V\to V$ by
\begin{equation}
 \Xsf_j(v)=\tau(\red(x_jv))\quad\text{for every $v\in V$.}
 \label{eq:decoder-multiplication-matrices}
\end{equation}
These maps satisfy $[\Xsf_j(v)]_J=[x_jv]_J$ and represent the quotient rule after multiplying by each $x_j$.

\begin{claim}\label{clm:Xsf+J}
If $\Phi_{m,D}$ is surjective, then $[\Xsf_j(v)]_J=[x_jv]_J$ holds for every $j\in[n]$ and $v\in V$. Moreover, these $\Xsf_j$'s can be computed in deterministic time $\poly(\dim(V),n,m,\log(q))$.
\end{claim}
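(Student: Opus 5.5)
The identity will follow from two facts. First, $\red$ preserves classes modulo $J$, because $I\subseteq J$. Second, $\tau$ preserves classes modulo $J$ on reduced polynomials of degree at most $D$, which comes from \Cref{clm:decoder-lifted-relations}. Given these, for $v\in V$ we have $[\red(x_jv)]_J=[x_jv]_J$, and then $[\tau(\red(x_jv))]_J=[\red(x_jv)]_J$, which is exactly the claimed identity $[\Xsf_j(v)]_J=[x_jv]_J$.

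For the first fact, note that $\red(p)-p\in I$ for every $p\in S$. This holds because each rewriting step $x_j^q\mapsto x_j$ subtracts a multiple of $x_j^q-x_j$. Since $I\subseteq J$, we get $[\red(x_jv)]_J=[x_jv]_J$. Moreover, $v$ is reduced of degree below $D$, so $x_jv$ has degree at most $D$. Reduction never increases degree, so $\red(x_jv)$ is a reduced polynomial of degree at most $D$ and hence lies in the domain of $\tau$.

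For the second fact, write a reduced polynomial $p$ of degree at most $D$ as $p=p_{<D}+\sum_{|\alpha|=D}c_\alpha x^\alpha$, where $p_{<D}\in V$. Then $\tau(p)=p_{<D}+\sum c_\alpha r_\alpha$. By \Cref{clm:decoder-lifted-relations}, which uses the surjectivity of $\Phi_{m,D}$, each $r_\alpha$ lies in $V$ and satisfies $[r_\alpha]_J=[x^\alpha]_J$. By linearity, $\tau(p)\in V$ and $[\tau(p)]_J=[p]_J$. Composing the two facts gives $[\Xsf_j(v)]_J=[x_jv]_J$, and also shows that $\Xsf_j$ really maps $V$ into $V$.

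For the runtime, I first compute all the $r_\alpha$ in time $\poly(\dim(V),n,m,\log q)$ via \Cref{clm:decoder-lifted-relations}. Then, for each $j\in[n]$ and each basis monomial $x^\beta\in\calM_{<D}$, I form $x_jx^\beta$. This product is either reduced, or equals $x_j^qx^{\beta-e_j}$ when $\beta_j=q-1$, in which case $\red$ sends it to $x^\beta$. In either case $\red(x_jx^\beta)$ is a single monomial computable in $O(n)$ operations. Applying $\tau$ either leaves it unchanged or looks up the precomputed $r_\alpha$, a vector of length $\dim(V)$. So each $\Xsf_j$ is a $\dim(V)\times\dim(V)$ matrix whose columns are filled in time $\poly(\dim(V),n)$, and all $n$ matrices together cost $\poly(\dim(V),n,m,\log q)$. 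No step is a real obstacle. The only point needing care is checking the degree bound that puts $\red(x_jv)$ in the domain of $\tau$, and the single-monomial form of $\red(x_jx^\beta)$ handles it.
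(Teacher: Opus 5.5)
Your proof is correct and is the paper's argument spelled out: the paper just cites \eqref{eq:tau} and \Cref{clm:decoder-lifted-relations} for $[\Xsf_j(v)]_J=[\red(x_jv)]_J=[x_jv]_J$, and calls the runtime immediate from linearity and the definitions. One harmless slip: for $\beta_j=q-1$ you have $x_jx^\beta=x_j^q\,x^{\beta-(q-1)e_j}$, so $\red(x_jx^\beta)=x^{\beta-(q-2)e_j}$, not $x^\beta$ --- but this is still a single reduced monomial of degree below $D$ that $\tau$ leaves fixed, so your runtime argument goes through unchanged.
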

\begin{proof}
By \eqref{eq:tau} and \Cref{clm:decoder-lifted-relations}, $[\Xsf_j(v)]_J=[\red(x_jv)]_J=[x_jv]_J$. The runtime follows immediately from the linearity and definition.
\end{proof}

Crucially these $\Xsf_1,\ldots,\Xsf_n$ need not commute on $V$ as different sequences of valid reductions can produce different representatives of the same coset due to the particular choices in \eqref{eq:decoder-lifted-relations-t}.
We therefore compute a relation space
$N\subseteq V$ such that the induced maps on $V/N$ commute and satisfy all field and sample equations.

Let $\Isf$ denote the identity on $V$ and recall \eqref{eq:sec:decoder_setup}.
Define for each $i\in[m]$ a linear map $\Fsf_i\colon V\to V$ by
\begin{equation}\label{eq:F_i}
 \Fsf_i(v)=\left((\Lsf_i-b_i\cdot\Isf)(\Lsf_i-b_i\cdot\Isf+\Isf)\right)(v)
 \quad\text{where}\quad
 \Lsf_i(v)=\left(\sum_{j=1}^n(a_i)_j\Xsf_j\right)(v),
\end{equation}
which is simply the vanishing equation $f_i(x)=(L_i(x)-b_i)(L_i(x)-b_i+1)$ with $L_i(x)=\sum_j(a_i)_jx_j$, modulo the reduction rules using $\tau$ in \eqref{eq:tau}.
Define for each $1\le j<k\le n$ a linear map $\Csf_{j,k}\colon V\to V$
\begin{equation}\label{eq:C_i}
 \Csf_{j,k}(v)=\left(\Xsf_j\Xsf_k-\Xsf_k\Xsf_j\right)(v),
\end{equation}
which is simply the commutative rule $x_jx_k=x_kx_j$, modulo the reduction rules using $\tau$ in \eqref{eq:tau}.
Define for each $j\in[n]$ a linear map $\Hsf_j\colon V\to V$
\begin{equation}\label{eq:H_i}
 \Hsf_j(v)=\left(\Xsf_j^q-\Xsf_j\right)(v),
\end{equation}
which is simply the vanishing field equation $x_j^q=x_j$, modulo the reduction rules using $\tau$ in \eqref{eq:tau}.

\begin{claim}\label{clm:quotient-alg-closure}
If $\Phi_{m,D}$ is surjective, then all the maps
$\Fsf_i$'s, $\Csf_{j,k}$'s, and
$\Hsf_j$'s have image in $J\cap V$.
\end{claim}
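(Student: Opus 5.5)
The key observation is that each $\Xsf_j$ satisfies the congruence $[\Xsf_j(v)]_J=[x_jv]_J$ from \Cref{clm:Xsf+J}, and that this congruence can be iterated. I will first show by induction on the length of a word that for any noncommutative monomial $\Xsf_{j_1}\cdots\Xsf_{j_k}$ and any $v\in V$,
\[
[\Xsf_{j_1}\cdots\Xsf_{j_k}(v)]_J=[x_{j_1}\cdots x_{j_k}v]_J .
\]
For the inductive step, set $w=\Xsf_{j_2}\cdots\Xsf_{j_k}(v)\in V$. By \Cref{clm:Xsf+J}, $[\Xsf_{j_1}(w)]_J=[x_{j_1}w]_J$. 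By the inductive hypothesis, $w-x_{j_2}\cdots x_{j_k}v\in J$, and since $J$ is an ideal, multiplying by $x_{j_1}$ keeps the difference in $J$. This gives the claim for words of length $k$.

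Extending linearly, for any noncommutative polynomial $P$ in $\Xsf_1,\ldots,\Xsf_n$ with $\F_q$ coefficients, including identity terms $\Isf$, we get $[P(\Xsf)(v)]_J=[\bar P(x)\,v]_J$. Here $\bar P\in S$ is the commutative image of $P$, obtained by replacing each $\Xsf_j$ with $x_j$ and $\Isf$ with $1$.

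Now apply this to the three families of maps.
\begin{itemize}
\item For $\Fsf_i$: expanding \eqref{eq:F_i} shows that it is a polynomial in the $\Xsf_j$ whose commutative image is exactly $f_i$. Hence $[\Fsf_i(v)]_J=[f_iv]_J=[0]_J$, because $f_i\in J$.
\item For $\Csf_{j,k}$: its commutative image is $x_jx_k-x_kx_j=0$, so $[\Csf_{j,k}(v)]_J=[0]_J$.
\item For $\Hsf_j$: its commutative image is $x_j^q-x_j\in J$, so $[\Hsf_j(v)]_J=[0]_J$.
\end{itemize}
In each case the output lies in $J$. It also lies in $V$, since every $\Xsf_j$ maps $V$ to $V$ by construction. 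So all images are contained in $J\cap V$.

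The only point needing care is the induction: the representatives $\Xsf_j(w)$ are not literally equal to $x_jw$, only congruent modulo $J$. The argument must therefore use the ideal property of $J$, and not any commutativity of the $\Xsf_j$ on $V$, which fails in general. Once the congruence is established for words, the rest is bookkeeping.
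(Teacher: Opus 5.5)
Your proposal is correct and follows the same route as the paper, whose proof is the one-line remark that the claim follows from \Cref{clm:Xsf+J} and the definition of $J$. You make explicit the step the paper leaves implicit: you iterate the congruence $[\Xsf_j(v)]_J=[x_jv]_J$ along words by induction, using that $J$ is an ideal and that each $\Xsf_j$ maps $V$ into $V$.
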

\begin{proof}
This follows directly from \Cref{clm:Xsf+J} and the definition of $J$ in \eqref{eq:J}.
\end{proof}

By \Cref{clm:quotient-alg-closure}, the images of these maps consist of valid
relations modulo $J$.  We iteratively take their linear span and then close it under the
maps $\Xsf_j$.  More precisely, define
\begin{align}
 N_0&=\sum_{1\le j<k\le n}\Csf_{j,k}(V)+\sum_{j=1}^n\Hsf_j(V)+\sum_{i=1}^m\Fsf_i(V),
       \label{eq:decoder-initial-relations}\\
 N_{r+1}&=N_r+\sum_{j=1}^n\Xsf_j(N_r)
 \quad\text{for}\quad r\ge0.
       \label{eq:decoder-relation-closure}
\end{align}
The three sums in $N_0$ impose commutativity, the field equations, and the sample
equations, respectively, on every vector of $V$.
Stop when $N_{r+1}=N_r$ and call the resulting space $N$. Then we have the following observation.

\begin{claim}\label{clm:N-sequence}
If $\Phi_{m,D}$ is surjective, $N\subseteq J\cap V$ and $\Xsf_j(N)\subseteq N$ holds for every $j\in[n]$. Moreover, a linear basis of $N$ can be computed in deterministic time $\poly(\dim(V),n,m,\log(q))$.
\end{claim}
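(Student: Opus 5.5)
The claim has three parts: containment in $J\cap V$, invariance under the $\Xsf_j$, and the running time. I will prove each directly from \Cref{clm:Xsf+J,clm:quotient-alg-closure}, arguing by induction along the chain $N_0\subseteq N_1\subseteq\cdots$.

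\emph{Containment.} The base case is \Cref{clm:quotient-alg-closure}: each image $\Csf_{j,k}(V)$, $\Hsf_j(V)$, $\Fsf_i(V)$ lies in $J\cap V$. Since $J\cap V$ is a linear subspace, the sum $N_0$ of these images also lies in $J\cap V$. For the inductive step, assume $N_r\subseteq J\cap V$ and take $v\in N_r$. By \Cref{clm:Xsf+J} we have $[\Xsf_j(v)]_J=[x_jv]_J$. Because $v\in J$ and $J$ is an ideal, $x_jv\in J$, so $[x_jv]_J=[0]_J$. Also $\Xsf_j(v)\in V$ by construction. Hence $\Xsf_j(v)\in J\cap V$, and therefore $N_{r+1}=N_r+\sum_j\Xsf_j(N_r)\subseteq J\cap V$.

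\emph{Invariance.} The chain is nondecreasing inside the finite-dimensional space $V$. Each strict inclusion raises the dimension by at least one, so the chain stabilizes after at most $\dim(V)$ steps at some $N=N_r=N_{r+1}$. The equality $N_{r+1}=N_r$ says exactly that $\Xsf_j(N)\subseteq N$ for every $j$.

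\emph{Running time.} By \Cref{clm:Xsf+J}, the matrices $\Xsf_j$ can be computed in $\poly(\dim(V),n,m,\log q)$ time. From them we form the following, using matrix products and linear combinations:
\begin{itemize}
\item $\Lsf_i$, then $\Fsf_i$, for each of the $m$ indices $i$;
\item $\Csf_{j,k}$ for each of the $O(n^2)$ pairs $j<k$;
\item $\Hsf_j=\Xsf_j^q-\Xsf_j$ for each $j$. Computing $\Xsf_j^q$ by repeated squaring takes $O(\log q)$ matrix products.
\end{itemize}
We obtain a basis of $N_0$ by Gaussian elimination on the columns of all these matrices, that is, on $\poly(n,m)\cdot\dim(V)$ vectors of length $\dim(V)$. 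Each closure step applies the $n$ maps $\Xsf_j$ to a basis of $N_r$ and row-reduces again. There are at most $\dim(V)+1$ iterations, so the total cost is $\poly(\dim(V),n,m,\log q)$.

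Little here is an obstacle. The one point needing care is that the inductive step uses $v\in J$ to get $x_jv\in J$, not merely $v\in V$. This is exactly why the closure must be taken inside $J\cap V$ rather than reasoned about coset by coset.
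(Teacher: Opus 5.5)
Your proof is correct and follows essentially the same route as the paper. Containment comes from \Cref{clm:Xsf+J} and \Cref{clm:quotient-alg-closure}; you spell out the induction along $N_0\subseteq N_1\subseteq\cdots$ that the paper leaves implicit. Invariance comes from the stopping condition $N_{r+1}=N_r$, and the running time from Gaussian elimination with at most $\dim(V)$ strict enlargements of the chain.
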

\begin{proof}
The first line follows from \Cref{clm:Xsf+J} and \Cref{clm:quotient-alg-closure}. The runtime can be argued as follows: to compute $N_0$, we go over each monomial in $\calM_{<D}$, apply the corresponding $\Csf_{j,k},\Hsf_j,\Fsf_i$, and use Gaussian elimination to select a basis of the outcomes.
For each subsequent $N_{r+1}$, we only need to check if applying $\Xsf_j$ expands the basis; then adjoin if necessary and stop if no new basis is discovered. Since each strict enlargement increases the dimension, there are at most $\dim(V)$ such enlargements. The intermediate calculation has the runtime prescribed in \Cref{clm:Xsf+J} and \Cref{clm:quotient-alg-closure}.
\end{proof} 

In fact, $N$ provides an isomorphism between $V/N$ and $S/J$; the former one is something we can algorithmically compute and the latter one is what we need in \Cref{def:quotient-decoder}.

\begin{proposition}\label{prop:truncated-exact-quotient}
If $\Phi_{m,D}$ is surjective, then $V/N\cong S/J$. More precisely, define $\pi:V\to S/J$ by $\pi(v)=[v]_J$ and then $\pi$ is surjective with $\ker(\pi)=N$.
\end{proposition}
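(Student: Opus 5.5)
Three steps, in order: show $\pi$ is surjective, show $N\subseteq\ker\pi$, and then prove the reverse inclusion $\ker\pi\subseteq N$. The last is the real content.

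\emph{Surjectivity of $\pi$.} Every class in $S/J$ is represented by a reduced polynomial, since $x_j^q-x_j\in J$. Next we show that every reduced monomial is congruent mod $J$ to an element of $V$, by induction on degree. If $|\alpha|<D$ it already lies in $V$. If $|\alpha|\ge D$, write $x^\alpha=x_j x^{\beta}$ with $|\beta|=|\alpha|-1$; by induction $[x^\beta]_J=[v]_J$ for some $v\in V$, and then \Cref{clm:Xsf+J} gives $[x^\alpha]_J=[x_jv]_J=[\Xsf_j(v)]_J$ with $\Xsf_j(v)\in V$.

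\emph{Kernel contains $N$.} By \Cref{clm:N-sequence}, $N\subseteq J\cap V=\ker\pi$.

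\emph{Kernel is contained in $N$.} Set $W=V/N$. By \Cref{clm:N-sequence}, each $\Xsf_j$ preserves $N$, so it descends to an operator $\bar\Xsf_j$ on $W$. Since $N_0\subseteq N$:
\begin{itemize}
\item the images of the $\Csf_{j,k}$ lie in $N$, so the $\bar\Xsf_j$ pairwise commute;
\item the images of the $\Hsf_j$ lie in $N$, so $\bar\Xsf_j^q=\bar\Xsf_j$;
\item the images of the $\Fsf_i$ lie in $N$, so $f_i(\bar\Xsf_1,\ldots,\bar\Xsf_n)=0$ on $W$.
\end{itemize}
Hence $W$ is a module over $S$ via $x_j\mapsto\bar\Xsf_j$, and $J$ annihilates it. This gives a module map $\psi\colon S/J\to W$, $[p]_J\mapsto p(\bar\Xsf)\,[1]$, which is well defined because $J\cdot W=0$ and commutativity lets us substitute into arbitrary polynomials.

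The key identity to verify is $\psi([v]_J)=[v]_N$ for every $v\in V$. It suffices to check monomials $x^\alpha$ with $|\alpha|<D$. Induct on degree, writing $x^\alpha=x_jx^\beta$ with $x^\beta\in V$. Then
\[
\psi([x^\alpha]_J)=\bar\Xsf_j[x^\beta]_N=[\Xsf_j(x^\beta)]_N=[\tau(\red(x^\alpha))]_N=[x^\alpha]_N,
\]
since $x^\alpha$ is reduced of degree $<D$ and so $\tau$ fixes it. Consequently, if $v\in\ker\pi$ then $[v]_N=\psi([v]_J)=\psi(0)=0$, so $v\in N$.

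The main point of care is this last verification, in particular checking that $\psi$ is well defined. Well-definedness requires that the $\bar\Xsf_j$ truly commute, and that the generators of $J$, including the field equations, are evaluated via the operators exactly as the maps $\Hsf_j$ and $\Fsf_i$ were defined. This is exactly why $N$ was closed under every $\Xsf_j$, not only the initial relation space $N_0$.
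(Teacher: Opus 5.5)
Your proposal is correct and follows essentially the same route as the paper: descend the $\Xsf_j$ to commuting operators on $V/N$ annihilated by $J$, define $\psi([p]_J)=p(\overline{\Xsf}_1,\ldots,\overline{\Xsf}_n)[1]_N$, and check $\psi([x^\alpha]_J)=[x^\alpha]_N$ for $x^\alpha\in\calM_{<D}$ to conclude $\ker\pi\subseteq N$. The only addition is your explicit inductive proof of surjectivity of $\pi$. The paper calls this step obvious, whereas your argument obtains it from the surjectivity hypothesis via \Cref{clm:Xsf+J}.
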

\begin{proof}
The surjectivity of $\pi$ is obvious and we focus on the kernel.
By \Cref{clm:N-sequence}, $N\subseteq\ker(\pi)$ and hence $\pi$ induces a surjective map $\overline\pi:V/N\to S/J$.

It remains to prove that $\overline\pi$ is injective, for which we construct its inverse.
Write $[v]_N=v+N$ and let $\overline\Xsf_j\colon V/N\to V/N$ be the map induced by $\Xsf_j$.  These maps are well-defined as $\Xsf_j(N)\subseteq N$ from \Cref{clm:N-sequence}.
Since $N_0\subseteq N$, the commutators, field equations, and sample equations all vanish.  In particular, $\overline\Xsf_j$'s commute, so we can properly define the linear map
\[
 \psi:S/J\longrightarrow V/N,
 \quad\text{by}\quad
 \psi([p]_J)=p(\overline\Xsf_1,\ldots,\overline\Xsf_n)[1]_N.
\]
We will show $\psi\circ\bar\pi$ is the identity on $V/N$, certifying the injectivity of $\bar\pi$.
To see this, for each monomial $x^\alpha\in\calM_{<D}$ we have $\overline\Xsf_1^{\alpha_1}\cdots\overline\Xsf_n^{\alpha_n}[1]_N=[x^\alpha]_N$, which span $V/N$. Hence $\psi(\bar\pi([p]_N))=\psi([p]_J)=p(\overline\Xsf_1,\ldots,\overline\Xsf_n)[1]_N=[p]_N$, proving that $\psi\circ\overline\pi$ is the identity. This completes the whole proof.
\end{proof}

In light of \Cref{prop:truncated-exact-quotient}, we can now address \Cref{itm:def:quotient-decoder_3} of \Cref{def:quotient-decoder}.

\begin{claim}\label{clm:alg-quotient-3}
If $\Phi_{m,D}$ is surjective and $\dim(V)=\dim(N)+1$, there is a unique candidate $s$ for \Cref{prob:pointwise-binary-error-lwe} and the $j$th coordinate of $s$ is determined by $x_j-s_j\in N$ for a unique $s_j\in\F_q$. Moreover, checking the conditions and solving for $s$ can be done in deterministic time $\poly(\dim(V),n,m,\log(q))$.
\end{claim}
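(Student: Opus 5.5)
The plan is to route everything through \Cref{prop:truncated-exact-quotient} and \Cref{cor:decoder-unique-candidate}. Assume $\Phi_{m,D}$ is surjective. Then $V/N\cong S/J$ via the map $\overline\pi$, so
\[
\dim(S/J)=\dim(V)-\dim(N).
\]
The hypothesis $\dim(V)=\dim(N)+1$ therefore gives $\dim(S/J)=1$. By \Cref{cor:decoder-unique-candidate}, this means $\calC=\{s\}$ is a singleton, so the candidate is unique. It also gives $[x_j]_J=s_j[1]_J$ for each $j$.

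Next I translate these relations back to $V$. Note that $x_j$ and $1$ lie in $V$, because $D\ge2$ means both have degree below $D$; hence $x_j-s_j\in V$. Since $\ker(\pi)=N$ and $\pi(x_j-s_j)=[x_j]_J-s_j[1]_J=0$, we get $x_j-s_j\in N$.

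For uniqueness of $s_j$, suppose $x_j-c\in N$ for two values $c,c'$. Subtracting gives $(c'-c)\cdot 1\in N$. But $1\notin N$, because $\pi(1)=[1]_J$ is a basis vector of the one-dimensional space $S/J$ and so is nonzero. Hence $c=c'$.

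For the algorithm, first test surjectivity of $\Phi_{m,D}$ and, if it holds, compute the $r_\alpha$ via \Cref{clm:decoder-lifted-relations}. Then build the maps $\Xsf_j$ via \Cref{clm:Xsf+J} and a basis of $N$ via \Cref{clm:N-sequence}. Each of these steps takes deterministic time $\poly(\dim(V),n,m,\log(q))$. Checking $\dim(V)=\dim(N)+1$ only requires reading off the rank of the computed basis of $N$.

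To extract each $s_j$, the simplest route is to try all $q$ values $c\in\F_q$ and test whether $x_j-c\in N$ by Gaussian elimination against the basis of $N$. That costs a factor $q$ per coordinate, which is fine for fixed $q$. If only $\log(q)$ dependence is wanted, instead compute a linear functional $\lambda$ on $V$ with $\ker\lambda=N$ and $\lambda(1)=1$; this is possible since $N$ has codimension one and $1\notin N$. Then set $s_j=\lambda(x_j)$, computed by a single linear solve.

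I do not expect a real obstacle here. The only point needing care is that $1$ and $x_j$ actually lie in $V$, so that $D\ge2$ is used, together with confirming that the $\log q$ runtime claim is met by the functional approach rather than by enumerating $\F_q$.
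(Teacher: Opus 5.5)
Your proposal is correct and follows the paper's proof. Uniqueness comes from \Cref{prop:truncated-exact-quotient} together with \Cref{cor:decoder-unique-candidate}, and each $s_j$ is recovered by a single Gaussian elimination; your functional $\lambda$ with $\ker\lambda=N$ and $\lambda(1)=1$ is exactly the coordinate functional of $1$ in the paper's basis $v_1,\ldots,v_r,1$ of $V$, so both extraction steps coincide and avoid enumerating $\F_q$.
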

\begin{proof}
The uniqueness follows from \Cref{prop:truncated-exact-quotient} and \Cref{cor:decoder-unique-candidate}. In that case, $[x_j]_J=s_j[1]_J$ is equivalent to $[x_j]_N=s_j[1]_N$, or $x_j-s_j\in N$. The runtime of checking the conditions follows from \Cref{clm:decoder-lifted-relations} and \Cref{clm:N-sequence}. To recover $s_j$, let $v_1,\ldots,v_r$ be the computed basis of $N$ and solve $x_j=\sum_{\ell=1}^r c_\ell v_\ell+s_j\cdot1$ by Gaussian elimination. Since $[1]_N$ is a basis of $V/N$, the family $v_1,\ldots,v_r,1$ is a basis of $V$, so this system has a unique solution and the prescribed runtime; no enumeration over $\F_q$ is needed.
\end{proof}

To summarize, we present the algorithmic version of the degree-$D$ quotient decoder, which should be compared with its algebraic version in \Cref{def:quotient-decoder}.

\begin{definition}[Degree-$D$ quotient decoder, algorithmically]\label{def:quotient-decoder-alg}
On input $(A,b)$ from \Cref{prob:pointwise-binary-error-lwe} and an integer $D\ge2$, perform the following computation.
\begin{enumerate}[label=(\alph*)]
\item\label{itm:def:quotient-decoder-alg_1}
Use \Cref{clm:decoder-lifted-relations} to check the surjectivity of $\Phi_{m,D}$.
If it is not surjective, output the all-zero vector in $\F_q^n$ and halt.
\item\label{itm:def:quotient-decoder-alg_2}
Use \Cref{clm:N-sequence} to compute $N$, which, by \Cref{prop:truncated-exact-quotient}, provides $V/N\cong S/J$.
Then use \Cref{clm:alg-quotient-3} to check if $\dim(S/J)=\dim(V/N)=1$. If not, output the all-zero vector and halt.
\item\label{itm:def:quotient-decoder-alg_3}
Use \Cref{clm:alg-quotient-3} to output the desired secret $s$.
\end{enumerate}
\end{definition}

\subsection{\texorpdfstring{Proofs of \Cref{thm:lwe-uniform} and \Cref{thm:quantum-main}}
{Proofs of the classical and quantum theorems}}
\label{sec:lwe-correctness}

To prove \Cref{thm:lwe-uniform}, we combine the implementation in
\Cref{sec:quotient-dec-alg} with the following surjectivity theorem.

\begin{restatable}{theorem}{uniformtruncatedtop}\label{thm:uniform-truncated-top}
Assume $500n\le m\le n^2$ and $D=\lceil500n^2/m\rceil$.
For a random instance from \Cref{prob:pointwise-binary-error-lwe}, we have
\[
 \Pr[\Phi_{m,D}\text{ is not surjective}]
 \le e^{-m/400}.
\]
\end{restatable}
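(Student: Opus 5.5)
The plan is to reveal the columns $a_1,\dots,a_m$ one at a time. Let $V_t\subseteq R_D$ be the image of $\Phi_{t,D}$ and $h_t=\codim V_t$, so $h_0=\dim R_D=:N_D$ and the goal is $h_m=0$. Note that surjectivity depends only on $A$, not on $(s,e)$, so the pointwise guarantee is automatic. The main lemma is a one-step progress bound in the form of \eqref{eqn:thebest}: for an absolute $c>0$ and any fixed $a_1,\dots,a_t$,
\[
\Pr_{a_{t+1}}\bigl[h_{t+1}\le h_t-c\,h_t^{1-2/D}\bigr]\ge c .
\]
This should hold whenever $h_t\ge1$, with the exponent adapted to reduced monomials of individual degree $<q$.

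Given the lemma, I would run the potential argument from the sketch. Set $\Phi(h)=h^{2/D}$. Concavity turns each successful step into a decrease of $\Phi$ by at least $2c/D$. Since $N_D\le\binom{n+D-1}{D}$, we have $\Phi(h_0)\le (e(n+D)/D)^2=O(n^2/D^2)$. Hence $O(n^2/D)$ successes suffice, and with $D=\lceil 500n^2/m\rceil$ this is at most $m\cdot c/2$ once the constants are checked. Successes dominate a Bernoulli$(c)$ sequence, so Azuma/Chernoff gives failure probability $e^{-\Omega(m)}$. Tracking constants should yield $e^{-m/400}$. The hypothesis $m\ge 500n$ keeps $D\le n^2/n\cdot$const and ensures $D\le n$ where needed. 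The hypothesis $m\le n^2$ ensures $D\ge 500$ is a genuine integer scale.

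To prove the lemma I would follow the Kruskal--Katona route, but with the sharper ingredients. First, row-reduce a basis of $V_t$ in lexicographic order, so the unachievable monomials $\calA=\{x^{A_1},\dots,x^{A_{h_t}}\}$ are the non-pivot columns. Next, let $\calS$ be the set of reduced degree-$(D-2)$ monomials dividing some element of $\calA$, which is the multiset shadow. Then invoke the multiset analogue of Lov\'asz's bound (\Cref{thm:multiset-shadow}) to get $|\calS|\gtrsim h_t^{1-2/D}$ without a $D^2$ loss, using the sharper shifting argument of \Cref{thm:truncated-profile}. Then form $R(z)=Q(z)-P(z)M$ from the new rows $(z\cdot x)^2x^B$, $B\in\calS$. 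Show its symbolic rank is $\Omega(|\calS|)$ by the lex-leading-term argument: the determinant of a well-chosen minor of $Q$ is a single monomial, and correction terms are lex-smaller. Here one needs a better matching than the crude $|\calS|/\binom D2$, so that every $B$ is matched to a distinct $A\supseteq B$ with a nonvanishing coefficient $\binom{\cdot}{\cdot}$ in $\F_q$. Finally, apply the multiplicity Schwartz--Zippel lemma (\Cref{lem:finite-rank-progress}). A random specialization then keeps an expected constant fraction of the symbolic rank, and Markov converts this to a constant probability.

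The main obstacle is getting symbolic rank $\Omega(h_t^{1-2/D})$ with no polynomial loss in $D$. A loss of $D^2$ costs a $\log D$ factor in $m$, as the sketch shows, and would break the $\lceil n^2/m\rceil$ scaling. The difficulty is a combined problem: a shadow bound for multisets of bounded multiplicity, plus a system of distinct representatives from $\calS$ into $\calA$ whose coefficients are nonzero mod $q$. Squared variables produce coefficients like $2$ or $1$ that must survive in characteristic $q$. I would first try a shifted, compressed family, obtained through \Cref{thm:truncated-profile}, where such a matching can be written down explicitly, and then transfer the rank bound back via the initial-ideal comparison.
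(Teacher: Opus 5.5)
Your outer layer (revealing columns one at a time, noting that surjectivity depends only on $A$, an $h^{2/D}$ potential, and multiplicity Schwartz--Zippel to pass from symbolic to actual rank) is essentially the paper's. Your lex-leading-term comparison also validly reduces matters to the symbolic rank of the monomial space spanned by the $e_\alpha$, $x^\alpha\in\calA$. The gap is the step you flag yourself, and the target you set for it is false. You want rank $\Omega(|\mathcal{S}|)$ for $\mathcal{S}=\partial_2\calA$, via a matching of every $B\in\mathcal{S}$ to a distinct $A\supseteq B$. But the rank is at most $|\calA|$. For a single multilinear monomial, $|\mathcal{S}|=\binom D2$ while the rank is $1$; likewise the $D$-subsets of $\{x_1,\dots,x_{D+1}\}$ give $h=D+1$ but $|\mathcal{S}|=\binom{D+1}{3}$. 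Passing to a shifted family, as you suggest at the end, does not rescue this target when $q$ is small. For $q=3$ and $D$ even, the monomial $x_1^2x_2^2\cdots x_{D/2}^2$ spans a $\calB$-invariant line, yet it has $D/2+\binom{D/2}{2}$ divisors of degree $D-2$ and rank $1$. The matching you would write down, $B\mapsto x_1^2B$, does work for a $\calB$-invariant family when $q>D$. With the truncation $x_i^q=0$, however, it fails as soon as $x_1^{q-2}\mid B$. So no lower bound on $|\partial_2\calA|$ converts into the needed rank bound without a $\binom D2$-type loss; the quantity being counted has to change. (The coefficients $1,2$ coming from $L^2$ are nonzero for $q\ge3$ and are not the difficulty.)

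The missing idea is \Cref{lem:truncated-monomial-replacement} together with \Cref{cor:to_shadow}. The locus $\{U\in\operatorname{Gr}(h,\overline R_D^*):\bar\rho(U)\le r\}$ is closed and $\calB$-stable. Borel's fixed-point theorem therefore yields a $\calB$-invariant $U$ with $\dim U=\dim W$ and $\bar\rho(U)\le\bar\rho(W)$; degeneration can only lower rank, which is the direction you need. Diagonal invariance then forces $U=W_\calA$ to be monomial. Specializing to $L=x_1$ gives $\bar\rho(W_\calA)\ge\sum_{d\ge2}|\calA_d|=:a$, where the $\calA_d$ are the $x_1$-slices. Invariance under $x_1\mapsto x_1+x_j$ gives $\partial\calA_d\subseteq\calA_{d+1}$, since the relevant coefficient is $-(d+1)$, nonzero for $d+1\le q-1$. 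The multiset shadow bound is then applied along $\calA_0\to\calA_1\to\calA_2$, not to $\partial_2\calA$. This yields $h\le3a^{D/(D-2)}$, i.e.\ $\bar\rho(W)\ge(h/3)^{1-2/D}$ (\Cref{thm:truncated-profile}).

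Separately, your Bernoulli reformulation does not reach the stated constants. Write $\rho=\rho(W_{t-1})$. From $\E[X_t\mid\calF_{t-1}]\ge\rho/3$ and $X_t\le\rho$, Markov only guarantees $\Pr[X_t\ge\theta\rho]\ge(1/3-\theta)/(1-\theta)$. Meanwhile, with $\gamma(h)=(h/3)^{1-2/D}$ and $\int_0^{H}du/\gamma(u)\le100n^2/D\le m/5$, up to $m/(5\theta)$ such steps remain compatible with failure. That is more than the guaranteed rate supplies, for every $\theta$. The paper avoids this loss by applying Azuma directly to $Y_t=\min\{X_t/\gamma(h_{t-1}),1\}$. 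Its conditional mean is at least $1/3$, while $\sum_tY_t\le\int_0^{H}du/\gamma(u)\le m/5$ on the failure event.
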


The proof of \Cref{thm:uniform-truncated-top} is deferred to \Cref{sec:lwe-success}. We now restate and prove \Cref{thm:lwe-uniform}.

\lweuniform*

\begin{proof}
By \Cref{clm:decoder-lifted-relations}, \Cref{clm:N-sequence}, and
\Cref{clm:alg-quotient-3},
the decoder in \Cref{def:quotient-decoder-alg} runs on every input in time
$\poly(\dim(V),n,m,\log(q))$, where \Cref{fct:dim-V} upper bounds $\dim(V)$.
By \Cref{prop:truncated-exact-quotient} and \Cref{cor:decoder-unique-candidate},
it recovers $s$ whenever $\Phi_{m,D}$ is surjective and $\calC=\{s\}$.

With $C=500$, the complete algorithm in \Cref{thm:lwe-uniform} is as follows.
\begin{enumerate}[label=(\roman*)]
\item\label{itm:thm:lwe-uniform_1}
If $n<C$ or $m<Cn$, enumerate all $u\in\F_q^n$ and test membership in
$\calC$ using \eqref{eq:candidate-sec}.  Return the unique candidate if there
is one, and the all-zero vector otherwise.
\item\label{itm:thm:lwe-uniform_2}
If $n\ge C$ and $Cn\le m\le n^2$, run
\Cref{def:quotient-decoder-alg} with $D=\lceil500n^2/m\rceil$.
\item\label{itm:thm:lwe-uniform_3}
If $n\ge C$ and $m>n^2$, run the same decoder with $D=500$.
\end{enumerate}

\paragraph{Runtime.}
Put $g=m/n\ge\kappa>1$ and $h=\lceil n/g\rceil$.
Since $n\le gh$ and $h\le2^h$, every polynomial factor in $n,m$ is
$g^{O_\kappa(h)}$. It therefore suffices to bound the search space in
\Cref{itm:thm:lwe-uniform_1} and $\dim(V)$ in \Cref{itm:thm:lwe-uniform_2} and \Cref{itm:thm:lwe-uniform_3}.
In \Cref{itm:thm:lwe-uniform_1}, either $n<C$ or $g<C$, so $n\le Ch$ and
$q^n=g^{O_{q,\kappa}(h)}$.
In \Cref{itm:thm:lwe-uniform_2}, $D=\lceil500n/g\rceil\le500h$.  Since $n/D\le g/500$,
\[
 \dim(V)\le\binom{n+D}{D}
 \le\left(\frac{e(n+D)}D\right)^D
 \le\bigl(e(1+g/500)\bigr)^{500h}=g^{O(h)}.
\]
In \Cref{itm:thm:lwe-uniform_3}, $D=500$ and $\dim(V)=O(n^{499})$.  Hence all three cases give the claimed bound.

\paragraph{Failure probability.}
Fix the secret $s$ and error $e$ as in \Cref{prob:pointwise-binary-error-lwe}.
By \Cref{fct:lwe-existence},
we have $\Pr[\calC\ne\{s\}]\le2^{-\Omega_{q,\kappa}(m)}$.
Uniqueness suffices in \Cref{itm:thm:lwe-uniform_1} and the only additional failure event in \Cref{itm:thm:lwe-uniform_2} and \Cref{itm:thm:lwe-uniform_3} is that
$\Phi_{m,D}$ is not surjective.

In \Cref{itm:thm:lwe-uniform_2}, this event has probability at most $e^{-m/400}$ by \Cref{thm:uniform-truncated-top}.
In \Cref{itm:thm:lwe-uniform_3}, take $r=\lfloor m/n^2\rfloor$ disjoint blocks of $n^2$ columns.
\Cref{thm:uniform-truncated-top} bounds each block's surjectivity failure by $e^{-n^2/400}$, which implies
\[
 \Pr[\Phi_{m,500}\text{ is not surjective}]
 \le(e^{-n^2/400})^r\le e^{-m/800}.
\]
A union bound gives the claimed total failure probability in every case.
\end{proof}

The classical decoder with $q=3$ now supplies the hypothesis of
\Cref{thm:coherent-reduction}.  We use this reduction to prove \Cref{thm:quantum-main}, which we restate below.

\quantummain*

\begin{proof}
Put $g=m/n$ and $h=\lceil n/g\rceil$.
For $\kappa n\le m<3n$, exhaustive search over nonempty subsets fails
only when no solution exists.  By \Cref{fct:f3-subset-sum-existence}, its failure probability is
$2^{-\Omega_\kappa(n)}=2^{-\Omega_\kappa(m)}$.
Its runtime is $2^m\poly(m,n)=2^{O(n)}=g^{O_\kappa(h)}$
since $\kappa\le g<3$ and $n\le 3h$.

For every $m\ge 3n$, apply \Cref{thm:lwe-uniform} with $q=3$.  Its decoder runs on every input in time $g^{O(h)}$,
and its stronger pointwise error bound implies average error $2^{-\Omega(m)}$ for \Cref{thm:coherent-reduction}.
Also,
\[
 3^n(2/3)^m
 \le\exp\left[-\left(\ln(3/2)-\frac{\ln(3)}{3}\right)m\right]
 =2^{-\Omega(m)}.
\]
Thus \Cref{thm:coherent-reduction} gives failure probability $2^{-\Omega(m)}$
and runtime $\poly(n,m,g^{O(h)})=g^{O(h)}$.
\end{proof}

\subsection{Proof of the probabilistic surjectivity condition}\label{sec:lwe-success}

The remaining piece of this section is to prove \Cref{thm:uniform-truncated-top}, the probabilistic ingredient of a surjectivity condition used in \Cref{sec:lwe-correctness}.

\uniformtruncatedtop*

We recall the definition of $\Phi_{m,D}$ from \eqref{eq:truncated-top-map} for convenience:
$$
 \Phi_{m,D}:R_{D-2}^{m}\longrightarrow R_D
 \quad\text{by}\quad
 \Phi_{m,D}(Q_1,\ldots,Q_m)=\sum_{i=1}^m L_i^2Q_i,
$$
where each $L_i\in R_1$ is a uniformly random linear function.
The idea of the proof is to track the growth of the image as $m$ increases and use a coupon-collector-type martingale argument to show that each time the remaining codimension shrinks by an appropriate amount until surjectivity.

For $0\le t\le m$, let $V_t$ be the image generated by the first $t$ samples, and let $h_t$ count
the dimensions still missing from that image:
\begin{equation}\label{eq:Vt_ht}
 V_t=\sum_{i=1}^tL_i^2R_{D-2}
 \quad\text{and}\quad
 h_t=\dim(R_D)-\dim(V_t).
\end{equation}
Thus $V_0=\{0\}$, $h_0=\dim(R_D)$, and $\Phi_{m,D}$ is surjective exactly when $h_m=0$.
The number of dimensions contributed by sample $t$ is
\begin{equation}\label{eq:Xt}
 X_t=\dim(V_t)-\dim(V_{t-1})=h_{t-1}-h_t.
\end{equation}
Let $\calF_t$ denote the history of the first $t$ samples. The following \Cref{prop:conditional-image-growth}
gives an expected lower bound on $X_t$ conditioned on any such history, which gives the exponential tail by standard martingale concentration.

\begin{restatable}{proposition}{propconditionalimagegrowth}\label{prop:conditional-image-growth}
Assume $3\le D\le n$ and let $H=\dim(R_D)$.
There is a nondecreasing function $\gamma\colon[0,H]\to[0,\infty)$ such that
$$
 \E\!\left[\min\!\left\{\frac{X_t}{\gamma(h_{t-1})},1\right\}
       \,\middle|\,\calF_{t-1}\right]\ge\frac13
\quad\text{holds for every $h_{t-1}>0$,}
$$
and
$$
 \int_0^H\frac{du}{\gamma(u)}
 \le\frac{100n^2}{D}.
$$
\end{restatable}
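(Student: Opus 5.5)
Fix a history $\calF_{t-1}$ with $h=h_{t-1}>0$, and row-reduce a basis of $V_{t-1}$ into the form $[I\;M]$ of \Cref{eqn:list}, using a monomial order compatible with multiplication (lexicographic on exponent vectors). The columns of $M$ are $h$ reduced degree-$D$ monomials $x^{A_1},\ldots,x^{A_h}$, which we view as a multiset family $\calA$ (exponents below $q$). The plan is to combine two growth bounds into a single threshold function $\gamma$: a ``popular pair'' bound that is effective when $h$ is large, and a shadow/symbolic-rank bound that is effective when $h$ is small. We then define $\gamma(u)$ as a nondecreasing minorant of the better of the two, scaled by a constant.

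\textbf{Shadow bound for small $h$.} Let $\calS=\partial_2\calA$ be the set of reduced degree-$(D-2)$ monomials dividing some $x^{A_k}$. By the multiset shadow bound (\Cref{thm:multiset-shadow}, the analogue of Lov\'asz's bound), $|\calS|\gtrsim h^{1-2/D}$ up to a factor depending only on $q$. We append the rows $L_t^2x^B$ for $B\in\calS$ and row-reduce, which gives $R(z)=Q(z)-P(z)M$. The symbolic rank of $R(z)$ is then at least $\rank Q(z)\ge|\calS|/\binom{D}{2}$. The lower bound on $\rank Q(z)$ comes from the covering argument on the incidence pattern. Here the diagonal rescaling must be redone because coefficients of $(z\cdot x)^2x^B$ include multinomial factors, such as $2$ for $z_iz_j$ and $1$ for $z_i^2$; these are nonzero in $\F_q$ for $q\ge3$, though exponents reaching $q$ are discarded in $R$. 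The lower-triangular (lex-order) argument shows that the correction $P(z)M$ cannot cancel the leading determinant monomial. The multiplicity Schwartz--Zippel step (\Cref{lem:finite-rank-progress}) then passes from symbolic to numerical rank. It gives $\E[\mathrm{mult}_a\det T]\le 2r/q$, so $\E[X_t]\ge r(1-2/q)\ge r/3$. Since $X_t\le$ something like $h$, this converts into $\E[\min\{X_t/\gamma,1\}]\ge1/3$ with $\gamma_1(h)\approx c\,h^{1-2/D}/D^2$. To get the clean form with the truncated minimum, I would instead set $\gamma=r$ and use $\min\{X_t/r,1\}\ge 1-\mathrm{mult}/r$. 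Then $\E[\min\{X_t/r,1\}]\ge 1-2/q\ge1/3$, and monotonicity is handled by taking $\gamma(u)$ to be the infimum over $h\ge u$ of the guaranteed rank lower bound.

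\textbf{Popular-pair bound for large $h$.} When $h\gtrsim (D^2/n)N_D$, most remaining monomials contain two distinct variables. An averaging argument finds a pair $(x_i,x_j)$ in a $\gtrsim D^2/n^2$ fraction of them. The second-difference trick over $a,a+e_i,a+e_j,a+e_i+e_j$ then yields $\gamma_2(h)\approx cD^2h/n^2$ with success probability $1/4$. That probability is too weak for the constant $1/3$, so I expect to instead absorb this regime into the shadow bound, which is what the sharper shifting argument of \Cref{thm:truncated-profile} presumably provides. Concretely, it should give $|\calS|/\binom{D}{2}$-type rank $\gtrsim \max\{h^{1-2/D}, D^2h/n^2\}$ up to constants. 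I would aim to prove a truncated-profile lower bound of the form: the symbolic rank is at least $c\,h\cdot\min\{(D/n)^2,\,h^{-2/D}\}$, by shifting $\calA$ to a compressed family and comparing with initial segments.

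\textbf{Integral and main obstacle.} With $\gamma(u)=c\min\{D^2u/n^2,\ u^{1-2/D}\}$ made nondecreasing, the integral splits at the crossover $u_0=(n/D)^D$. Below $u_0$, $\int du/u^{1-2/D}=(D/2)u_0^{2/D}=Dn^2/(2D^2)=n^2/(2D)$. Above $u_0$, $\int_{u_0}^{H}n^2/(D^2u)\,du=(n^2/D^2)\ln(H/u_0)$. Since $H\le (e n/D)^D$, we get $\ln(H/u_0)\le D$, so this part is $\le n^2/D$. Total $\le 100n^2/D$ after constants. The main obstacle is establishing the unified rank lower bound, both the multiset shadow estimate and its sharp large-$h$ behavior (losing no $\log D$), uniformly over arbitrary, possibly non-multilinear $\calA$. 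I would attack it via compression/shifting to reduce to initial lex segments, where shadows can be counted explicitly.
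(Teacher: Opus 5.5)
\textbf{Gap: the rank lower bound you can actually prove loses a factor $D^2$, and that is fatal.} The only rank bound you derive is the covering estimate $\rank Q(z)\ge|\partial_2\calA|/\binom{D}{2}$, and it genuinely loses this factor. For $\calA=\binom{[k]}{D}$ with $k\gg D$ one has $|\partial_2\calA|\approx e^2h^{1-2/D}$, so the covering count yields only about $h^{1-2/D}/D^2$.

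With $\gamma\approx u^{1-2/D}/D^2$, the integral is $\frac{D^3}{2}H^{2/D}$, which is of order $Dn^2$. Grafting on the popular-pair regime does not rescue it. That step only succeeds with probability $1/4<1/3$, and even granting it, the integral is at best of order $n^2\log D/D$. This violates $100n^2/D$, since $D$ can be as large as $n$ in \Cref{thm:uniform-truncated-top}.

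You correctly flag the missing rank bound as the main obstacle, but the proposed fix (compress $\calA$ to initial segments) is not justified. You give no reason why replacing $\calA$ by a combinatorially compressed family cannot increase the generic rank of $Q(z)$, and that monotonicity is precisely the nontrivial content. Separately, you write $\gamma=c\min\{D^2u/n^2,u^{1-2/D}\}$, but then $1/\gamma\sim n^2/(D^2u)$ near $0$ is not integrable; your computation actually uses the max.

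\textbf{Missing idea: algebraic shifting via Borel's fixed-point theorem (\Cref{lem:truncated-monomial-replacement}).} The set $\{U:\bar\rho(U)\le r\}$ is a closed, $\calB$-invariant subvariety of the Grassmannian. Borel's theorem therefore gives a $\calB$-fixed $U$ with $\dim U=\dim W$ and $\bar\rho(U)\le\bar\rho(W)$. Being $\calB$-fixed forces three things:
\begin{itemize}
\item $U=W_\calA$ is spanned by monomial functionals;
\item contraction by $x_1$ already attains the generic rank;
\item using $x_1\mapsto x_1+x_j$, the $x_1$-slices satisfy $\partial\calA_d\subseteq\calA_{d+1}$ (\Cref{cor:to_shadow}).
\end{itemize}
The rank is then exactly $a=\sum_{d\ge2}|\calA_d|$. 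Two applications of \Cref{thm:multiset-shadow} give $|\calA_1|\le a^{(D-1)/(D-2)}$ and $|\calA_0|\le a^{D/(D-2)}$, hence $\bar\rho(W)\ge(h/3)^{1-2/D}$ with no $\binom{D}{2}$ loss (\Cref{thm:truncated-profile}).

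\textbf{How the proposition then follows.} With this bound, $\gamma(u)=(u/3)^{1-2/D}$ alone suffices. Since $\gamma\le\rho$ and $X_t\le\rho$, we have $\min\{X_t/\gamma,1\}\ge X_t/\rho$, and \Cref{lem:finite-rank-progress} gives expectation at least $1/3$. Using $H\le(6n/D)^D$, the integral is $\int_0^H du/\gamma\le\frac{3D}{2}H^{2/D}\le 54n^2/D$.

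No large-$h$ regime is needed. On $[(n/D)^D,H]$ the popular-pair rate $D^2u/n^2$ is at most $36$ times $u^{1-2/D}$, so it could only ever gain a constant factor. The $\log D$ loss in the overview comes entirely from the $1/D^2$ in the covering bound.
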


We first deduce \Cref{thm:uniform-truncated-top}, and then prove \Cref{prop:conditional-image-growth}.

\begin{proof}[Proof of \Cref{thm:uniform-truncated-top}]
The assumptions $500n\le m\le n^2$ and $D=\lceil500n^2/m\rceil$ imply $500\le D\le n$.
For each $t\in[m]$, define a random variable $Y_t\in[0,1]$ by
\begin{equation}\label{eq:thm:uniform-truncated-top_1}
Y_t=\begin{cases}
\min\left\{\frac{X_t}{\gamma(h_{t-1})},1\right\} & h_{t-1}>0,\\
1 & h_{t-1}=0.
\end{cases}
\end{equation}
Then by \Cref{prop:conditional-image-growth} and for every $t\in[m]$, we have
\begin{equation}\label{eq:thm:uniform-truncated-top_2}
\E[Y_t\mid\calF_{t-1}]\ge1/3.
\end{equation}

Recall that the surjectivity of $\Phi_{m,D}$ is equivalent to $h_m=0$.
Hence on the failure event $h_m>0$, we know $h_t>0$ for every $t\in[m]$, which means $Y_t$ in \eqref{eq:thm:uniform-truncated-top_1} always takes the first branch and thus
\begin{align*}
\sum_{t=1}^mY_t
&\le\sum_{t=1}^m\frac{X_t}{\gamma(h_{t-1})}=\sum_{t=1}^m\frac{h_{t-1}-h_t}{\gamma(h_{t-1})}
\tag{assuming $\Phi_{m,D}$ is not surjective}\\
&\le\sum_{t=1}^m\int_{h_t}^{h_{t-1}}\frac{du}{\gamma(u)}
\tag{since $h_{t-1}\ge h_t$ and $\gamma$ is nondecreasing}\\
&\le\int_0^H\frac{du}{\gamma(u)}
\tag{since $h_0=H$ and $h_m\ge0$}\\
&\le\frac{100n^2}{D}
\tag{by \Cref{prop:conditional-image-growth}}\\
&\le\frac{100m}{500}<\frac m4=m\cdot\left(\frac13-\frac1{12}\right).
\tag{since $D=\lceil500n^2/m\rceil$}
\end{align*}
Hence by \eqref{eq:thm:uniform-truncated-top_2}, we have
$$
\frac1m\sum_{t=1}^m\left(Y_t-\E[Y_t\mid\calF_{t-1}]\right)<\frac{-1}{12}.
$$
Finally recall \eqref{eq:thm:uniform-truncated-top_1} that $Y_t\in[0,1]$. By the Azuma-Hoeffding inequality, we have
$$
 \Pr[\Phi_{m,D}\text{ is not surjective}]
 =\Pr[h_m>0]\le\Pr\left[\frac1m\sum_{t=1}^m\left(Y_t-\E[Y_t\mid\calF_{t-1}]\right)<\frac{-1}{12}\right]\le e^{-m/400}
 $$
as claimed.
\end{proof}

The proof of \Cref{prop:conditional-image-growth} has three steps: relate $X_t$ to a symbolic rank in \Cref{sec:lwe-progress-rank}, reduce symbolic rank into shadows in a multiset system in \Cref{sec:lwe-progress-shifting}, and bound the shadows in \Cref{sec:lwe-progress-shadows}.

\subsubsection{From martingale step to symbolic rank}\label{sec:lwe-progress-rank}

Recall that $X_t$ measures the dimension of new images contributed by $L_t$. This is better formulated in the dual space.

Let $R_D^*$ be the dual space of $R_D$, i.e., $R_D^*$ contains all linear functionals on $R_D$.
Define $W_t=V_t^\bot\subseteq R_D^*$ as the annihilator of $V_t$, i.e., $W_t$ contains linear functionals vanishing on the entire $V_t$.
Then $\dim(W_{t-1})=h_{t-1}$ is the number of missing dimensions given the first $t-1$ samples.

Given the $t$th sample $L_t$, it contributes $L_t^2R_{D-2}$ to the existing $V_{t-1}$ and forms $V_t=V_{t-1}+L_t^2R_{D-2}$. This inspires us to define a contraction map for every linear function $L$ as
\[
 C_L:R_D^*\longrightarrow R_{D-2}^*
 \quad\text{by}\quad
 C_L(\lambda)(Q)=\lambda(L^2Q),
\]
which is the dual of multiplication by $L^2$.

\begin{claim}\label{clm:contraction_CL}
Let $C_{L_t}|_{W_{t-1}}\colon W_{t-1}\to R_{D-2}^*$ be the restriction of $C_{L_t}$ to $W_{t-1}$.
Then
\[
 W_t=\ker(C_{L_t}|_{W_{t-1}})
 \quad\text{and}\quad
 X_t=\rank(C_{L_t}|_{W_{t-1}}).
\]
\end{claim}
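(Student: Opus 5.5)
The plan is to use finite-dimensional duality between $R_D$ and $R_D^*$. Annihilators reverse inclusions and turn sums into intersections. Since $V_t=V_{t-1}+L_t^2R_{D-2}$, we get
\[
 W_t=V_t^\perp=V_{t-1}^\perp\cap(L_t^2R_{D-2})^\perp=W_{t-1}\cap(L_t^2R_{D-2})^\perp .
\]
So the first identity reduces to showing $(L_t^2R_{D-2})^\perp=\ker(C_{L_t})$. This is immediate from the definition: $\lambda$ vanishes on every $L_t^2Q$ with $Q\in R_{D-2}$ exactly when $C_{L_t}(\lambda)(Q)=\lambda(L_t^2Q)=0$ for all $Q$, that is, when $C_{L_t}(\lambda)=0$. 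Intersecting with $W_{t-1}$ gives $W_t=\ker(C_{L_t}|_{W_{t-1}})$.

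For the second identity, apply rank--nullity to the restricted map $C_{L_t}|_{W_{t-1}}\colon W_{t-1}\to R_{D-2}^*$:
\[
 \rank(C_{L_t}|_{W_{t-1}})=\dim W_{t-1}-\dim W_t .
\]
Over a finite-dimensional space, $\dim U^\perp=\dim R_D-\dim U$ for any subspace $U$. Hence $\dim W_{t-1}=h_{t-1}$ and $\dim W_t=h_t$. By \eqref{eq:Xt}, the difference $h_{t-1}-h_t$ is $X_t$, which gives the rank formula.

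There is no real obstacle here. The only point requiring care is keeping all maps inside the graded truncated ring $R$, so that multiplication by $L^2$ is a well-defined linear map $R_{D-2}\to R_D$ and $C_L$ is its transpose. Both spaces are finite-dimensional over $\F_q$, so the annihilator dimension formula applies.
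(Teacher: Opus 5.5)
Your proposal is correct and follows essentially the same argument as the paper: write $W_t=W_{t-1}\cap\ker(C_{L_t})$ using $V_t=V_{t-1}+L_t^2R_{D-2}$, then apply rank--nullity to the restricted map. Your one addition is the explicit annihilator dimension formula $\dim U^\perp=\dim R_D-\dim U$, which justifies $\dim W_t=h_t$; the paper uses this fact without stating it.
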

\begin{proof}
Since $V_t=V_{t-1}+L_t^2R_{D-2}$, a functional $\lambda\in R_D^*$ lies in $W_t$ if and only if $\lambda\in W_{t-1}$ and $\lambda(L_t^2Q)=0$ for every $Q\in R_{D-2}$. By the definition of $C_{L_t}$, this gives $W_t=W_{t-1}\cap\ker(C_{L_t})=\ker(C_{L_t}|_{W_{t-1}})$.
Hence $\rank(C_{L_t}|_{W_{t-1}})
 =\dim(W_{t-1})-\dim(W_t)
 =h_{t-1}-h_t=X_t$.
\end{proof}

By \Cref{clm:contraction_CL}, one can define symbolic rank
\begin{equation}\label{eq:rhoW}
 \rho(W_{t-1})=\rank_{\F_q(z_1,\ldots,z_n)}(C_{L_z}|_{W_{t-1}})
 \quad\text{where}\quad
 L_z=\sum_{i=1}^n z_ix_i,
\end{equation}
where $\F_q(z_1,\ldots,z_n)$ is the field of rational functions with base field $\F_q$ and variables $z_1,\ldots,z_n$.
The point of this definition is that $C_{L_z}|_{W_{t-1}}$ specializes to $C_{L_t}|_{W_{t-1}}$ when we fix $L_z=L_t$.
While $\rho(W_{t-1})$ \emph{upper bounds} $X_t=\rank(C_{L_t}|_{W_{t-1}})$ in the worst case and we need to \emph{lower bound} the latter, we can indeed show that \emph{on average} $X_t$ is not too far from the former.

\begin{lemma}\label{lem:finite-rank-progress}
For uniformly random $L_t$, we have $\E[X_t\mid W_{t-1}]\ge\rho(W_{t-1})/3$.
\end{lemma}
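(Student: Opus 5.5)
Fix $W=W_{t-1}$ with $h=\dim W$, write $\rho=\rho(W)$, and work with the matrix of $C_{L_z}|_W$ in a fixed basis $\lambda_1,\ldots,\lambda_h$ of $W$ and the dual monomial basis of $R_{D-2}^*$. The entry in row $Q=x^\beta$ and column $\lambda_k$ is $\lambda_k(L_z^2x^\beta)$, which is a homogeneous quadratic form in $z_1,\ldots,z_n$. The coefficients of $L_z^2=\sum_{j}z_j^2x_j^2+2\sum_{j<k}z_jz_kx_jx_k$ are themselves degree-two monomials in $z$. So the matrix $M(z)$ has entries in $\F_q[z]$ of degree exactly $2$ (or zero), and $M(a)$ is the matrix of $C_{L_t}|_W$ when $L_t=\sum_ia_ix_i$. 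By \Cref{clm:contraction_CL}, $X_t=\rank_{\F_q}M(a)$ with $a$ uniform in $\F_q^n$ and independent of $W$.

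Since $\rank_{\F_q(z)}M(z)=\rho$, I choose a $\rho\times\rho$ submatrix $T(z)$ with $p(z)=\det T(z)\neq0$. Then $\deg p\le 2\rho$, and $X_t\ge\rank T(a)$. The key step is the deficiency--multiplicity inequality
\[
 \rho-\rank T(a)\le\mult_a(p),
\]
where $\mult_a(p)$ is the vanishing multiplicity of $p$ at $a$. I prove it as follows. Let $k=\rho-\rank T(a)$. Change bases by invertible constant matrices $U,V$ over $\F_q$ so that $UT(a)V$ has $k$ zero columns. The entries of those $k$ columns of $UT(z)V$ are polynomials vanishing at $a$, so they lie in the maximal ideal $\mathfrak m_a=(z_1-a_1,\ldots,z_n-a_n)$. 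Expanding the determinant multilinearly in these columns shows $\det(UT(z)V)\in\mathfrak m_a^k$. Since $\det(UT(z)V)$ is a nonzero scalar multiple of $p$, we get $\mult_a(p)\ge k$.

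It remains to average. The multiplicity Schwartz--Zippel lemma~\cite{DKSS13} gives $\sum_{a\in\F_q^n}\mult_a(p)\le\deg(p)\,q^{n-1}$, i.e. $\E_a[\mult_a(p)]\le\deg(p)/q\le 2\rho/q$. Hence
\[
 \E[X_t\mid W]\ge\rho-\frac{2\rho}{q}\ge\rho\Bigl(1-\frac23\Bigr)=\frac{\rho}{3},
\]
using $q\ge3$. Note that $L_t$ is independent of $\calF_{t-1}$, and conditioning on $W_{t-1}$ only fixes $W$, so the averaging over $a$ is valid.

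The main obstacle is the deficiency--multiplicity step, since it is the one place the argument must beat ordinary Schwartz--Zippel. Plain Schwartz--Zippel only gives $\Pr[p(a)=0]\le 2\rho/q$, which is useless when $2\rho/q\ge1$. The multiplicity version is what makes the bound scale linearly in $\rho$ regardless of size, but it needs the rank drop to be controlled by the order of vanishing. I will check two details there: that the change of basis to expose the zero columns uses constant matrices only (so multiplicity is unaffected), and that the determinant expansion really lands in $\mathfrak m_a^k$ because each term picks one entry from each of the $k$ degenerate columns.
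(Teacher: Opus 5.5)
Your proposal is correct and is essentially the paper's proof. Both pick a nonsingular $\rho\times\rho$ submatrix whose determinant has degree at most $2\rho$, bound the rank deficiency at $a$ by $\mult_a$ of that determinant via a constant change of basis exposing zero lines, and average with the multiplicity Schwartz--Zippel lemma of \cite{DKSS13} to get $\rho-2\rho/q\ge\rho/3$. The only cosmetic difference is that you expose zero columns and spell out the membership in $(z_1-a_1,\ldots,z_n-a_n)^k$, where the paper exposes zero rows via $U_a$.
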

\begin{proof}
The definition of $\rho:=\rho(W_{t-1})$ implies that the evaluation matrix $C_{L_z}|_{W_{t-1}}$ contains a $\rho\times\rho$ submatrix $B_z$ with nonzero determinant $\det(B_z)$. By the definition of $C_{L_z}$, each entry of $B_z$ has degree at most $2$ in $z$ and hence $\det(B_z)$ is a polynomial of degree at most $2\rho$.

For an $n$-variate nonzero polynomial $p(z)$ over the base field $\F_q$ and $a\in\F_q^n$, we use $\mult_a(p(z))$ to denote the multiplicity of the root $z=a$. Formally, $\mult_a(p(z))$ is the least total degree of a nonzero monomial in $p(z+a)$. The multiplicity Schwartz-Zippel lemma \cite{DKSS13} gives $\E_a\left[\mult_a(p(z))\right]\le\deg(p)/q$.
Applying this to $p(z)=\det(B_z)$ gives
\begin{equation}\label{eq:lem:finite-rank-progress_1}
\E_a\left[\mult_a(\det(B_z))\right]\le2\rho/q\le2\rho/3.
\end{equation}
On the other hand at any $a\in\F_q^n$, we can choose an invertible matrix $U_a$ such that $\rho-\rank(B_a)$ rows of $U_aB_a$ are zero. Hence every monomial of $\det(U_aB_{a+z})$ vanishes to at least this order, which means
\begin{equation}\label{eq:lem:finite-rank-progress_2}
 \rho-\rank(B_a)\le\mult_a(\det(U_aB_z))=\mult_a(\det(B_z)).
\end{equation}
Finally by \Cref{clm:contraction_CL}, we have 
\begin{align*}
\E[X_t\mid W_{t-1}]
&=\E\left[\rank(C_{L_t}|_{W_{t-1}})\mid W_{t-1}\right]=\E_a\left[\rank(C_{L_a}|_{W_{t-1}})\mid W_{t-1}\right]\\
&\ge\E_a\left[\rank(B_a)\mid W_{t-1}\right]
\ge\rho/3,
\tag{by \eqref{eq:lem:finite-rank-progress_1} and \eqref{eq:lem:finite-rank-progress_2}}
\end{align*}
which completes the proof.
\end{proof}

\subsubsection{From symbolic rank to multiset shadow}\label{sec:lwe-progress-shifting}

Given \Cref{lem:finite-rank-progress}, our next goal is to understand $\rho(W_{t-1})$ in \eqref{eq:rhoW}, for which, to use tools from algebraic geometry, we need to work over the algebraic closure $\overline{\F_q}$ of $\F_q$.

Define $\overline{R}_D$ as the homogeneous degree-$D$ part of $\overline{R}:=\overline{\F_q}[x_1,\ldots,x_n]/(x_1^q,\ldots,x_n^q)$.
Let $L_z=\sum_{i=1}^nz_ix_i$ and define $\overline{C}_{L_z}$ by
$$
\overline{C}_{L_z}(\lambda)(Q)=\lambda(L_z^2Q)
$$
with scalars extended to $\overline{\F_q}(z_1,\ldots,z_n)$.
Then for every $W\subseteq\overline R_D^*$, define
\begin{equation}\label{eq:rhoW+}
\bar\rho(W)=\rank_{\overline{\F_q}(z_1,\ldots,z_n)}(\overline{C}_{L_z}|_W).
\end{equation}
We note that the field extension does not change the rank.
\begin{fact}\label{fct:rhorho+}
Let $\overline W_{t-1}\subseteq\overline R_D^*$ be the span of $W_{t-1}$ over the base field $\overline{\F_q}$.
Then $\dim(\overline W_{t-1})=\dim(W_{t-1})$ and $\rho(W_{t-1})=\bar\rho(\overline W_{t-1})$.
\end{fact}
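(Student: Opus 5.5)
The plan has two independent parts: a dimension statement about extension of scalars, and a rank statement that follows because the relevant matrices have entries in the small field.

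For the dimension statement, fix an $\F_q$-basis $\lambda_1,\ldots,\lambda_h$ of $W_{t-1}\subseteq R_D^*$. Here $h=h_{t-1}$. The reduced degree-$D$ monomials form an $\F_q$-basis of $R_D$ and also an $\overline{\F_q}$-basis of $\overline R_D$, so $R_D^*\otimes_{\F_q}\overline{\F_q}=\overline R_D^*$. With respect to the dual monomial basis, each $\lambda_k$ is a coordinate vector with entries in $\F_q$. Stack these vectors as the rows of an $h\times\dim(R_D)$ matrix $\Lambda$ over $\F_q$; it has rank $h$. Rank is computed by Gaussian elimination, which stays inside the field generated by the entries, so it does not change under field extension. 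Hence the rows of $\Lambda$ remain linearly independent over $\overline{\F_q}$, and therefore $\dim(\overline W_{t-1})=h=\dim(W_{t-1})$.

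For the rank statement, use the same basis $\lambda_1,\ldots,\lambda_h$ of $\overline W_{t-1}$ together with the dual monomial basis of $R_{D-2}^*$. The map $C_{L_z}|_{W_{t-1}}$ then becomes a matrix $M(z)$. Its $(k,\beta)$ entry is $\lambda_k(L_z^2x^\beta)$. Expanding $L_z^2=\sum_{i,j}z_iz_jx_ix_j$ inside $R$ shows that each entry is a polynomial in $\F_q[z_1,\ldots,z_n]$. The map $\overline C_{L_z}|_{\overline W_{t-1}}$ is given by exactly the same matrix $M(z)$, now regarded over $\overline{\F_q}(z_1,\ldots,z_n)$. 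This holds because $\overline C_{L_z}$ is defined by the same formula and multiplication in $\overline R$ restricts to multiplication in $R$.

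So the claim reduces to the fact that the rank of a matrix with entries in $\F_q(z)$ is the same over the larger field $\overline{\F_q}(z)$. To see this, note that rank is the largest size of a nonvanishing minor. A minor of $M(z)$ is a polynomial in $\F_q[z]$, and it is nonzero in $\F_q[z]$ exactly when it is nonzero in $\overline{\F_q}[z]\supseteq\F_q[z]$. Together with the first part this gives $\rho(W_{t-1})=\bar\rho(\overline W_{t-1})$. The only point needing care is checking that the extended contraction map, restricted to the $\overline{\F_q}$-span, is given by the same matrix. That follows from $\overline{\F_q}$-linearity in $\lambda$ and the fact that the bases are shared.
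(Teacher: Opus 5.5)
Your proof is correct and follows essentially the same route as the paper: an $\F_q$-basis of $W_{t-1}$ remains a basis of $\overline W_{t-1}$, and in these bases both contraction maps have the same matrix with entries in $\F_q[z]$. Its minors, and hence its rank, are unchanged under $\F_q(z_1,\ldots,z_n)\subseteq\overline{\F_q}(z_1,\ldots,z_n)$. You only add detail the paper leaves implicit, such as justifying linear independence over $\overline{\F_q}$ via invariance of the rank of the coordinate matrix $\Lambda$.
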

\begin{proof}
A basis of $W_{t-1}$ over $\F_q$ becomes a basis of $\overline W_{t-1}$ over $\overline{\F_q}$, giving the dimension equality. In these bases, $C_{L_z}|_{W_{t-1}}$ and $\overline C_{L_z}|_{\overline W_{t-1}}$ have the same matrix. Its minors are unchanged under the field inclusion $\F_q(z_1,\ldots,z_n)\subseteq\overline{\F_q}(z_1,\ldots,z_n)$, so its rank is unchanged.
\end{proof}

In general a subspace $W\subseteq\overline R_D^*$ is spanned by \emph{polynomial coefficient functionals}, i.e., linear functionals that assign coefficients to multiple monomials.
In the following \Cref{lem:truncated-monomial-replacement}, we will show that, in terms of lower bounding $\bar\rho(W)$, one can assume $W$ is spanned by \emph{monomial coefficient functionals}, i.e., linear functionals that assign coefficients to a single monomial.
This additional structural property will be helpful later.

Let $\calB$ be the group of invertible triangular changes of variables: every $g\in\calB$ is defined by
\[
 g(x_j)=a_jx_j+\sum_{i>j}a_{ij}x_i,
 \quad\text{where}\quad a_j\in\overline{\F_q}\setminus\{0\}
 \quad\text{and}\quad
 a_{ij}\in\overline{\F_q}.
\]
These substitutions preserve the ideal $(x_1^q,\ldots,x_n^q)$, since in characteristic $q$ we have $(\sum_i b_ix_i)^q=\sum_i b_i^qx_i^q$. Thus they act on each $\overline R_d$, and on its dual by $(g\lambda)(f)=\lambda(g^{-1}f)$. The contraction maps satisfy
\[
 \overline C_{gL}(g\lambda)=g\,\overline C_L(\lambda).
\]
For each reduced monomial $x^\alpha$, let
$e_\alpha\in\overline R_{|\alpha|}^*$ be the functional that extracts the coefficient
of $x^\alpha$. These are the monomial coefficient functionals.

\begin{lemma}\label{lem:truncated-monomial-replacement}
For every nonzero subspace $W\subseteq\overline R_D^*$, there is a family
$\calA$ of reduced degree-$D$ monomials such that
$W_\calA=\left\langle e_\alpha:x^\alpha\in\calA\right\rangle$ is
$\calB$-invariant and
$$
 \dim(W_\calA)=\dim(W),\qquad
 \rank(\overline C_{x_1}|_{W_\calA})
 =\bar\rho(W_\calA)\le\bar\rho(W).
$$
Here we use $\overline C_L$ for the contraction map obtained by substituting coefficients of $L$ into $\overline C_{L_z}$.
\end{lemma}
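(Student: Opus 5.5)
The plan is a Borel-fixed degeneration, in the spirit of generic initial ideals. We view the Grassmannian $G$ of $\dim(W)$-dimensional subspaces of $\overline R_D^*$ as a projective variety via Plücker coordinates. The group $\calB$ acts on $G$, and so does its torus of diagonal substitutions. The function $W'\mapsto\bar\rho(W')$ is lower semicontinuous on $G$: the condition $\bar\rho(W')\le r$ is closed, since it says that all $(r+1)$-minors of the matrix of $\overline C_{L_z}|_{W'}$, as polynomials in $z$, vanish identically, and these coefficients are polynomial in the Plücker data. So $\bar\rho$ can only drop when we pass to limits in the closure $\overline{\calB W}$. It is also constant along $\calB$-orbits, because $\overline C_{gL}(g\lambda)=g\,\overline C_L(\lambda)$ and $L_z\mapsto gL_z$ is just an invertible linear change of the indeterminates $z$, which preserves the rank over the rational function field. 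Hence every point of $\overline{\calB W}$ has $\bar\rho\le\bar\rho(W)$.

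By the Borel fixed point theorem, $\calB$ is connected and solvable and $\overline{\calB W}$ is a nonempty projective $\calB$-stable variety, so it contains a $\calB$-fixed point $W'$. A torus-fixed subspace is a sum of weight spaces. The torus acts on $e_\alpha$ by the character $\alpha$, and distinct reduced monomials have distinct exponent vectors, so all weight spaces are one-dimensional. Therefore $W'=W_\calA$ for some family $\calA$ of reduced degree-$D$ monomials. This family is $\calB$-invariant, and $\dim(W_\calA)=\dim(W)$. For rigor in the fixed-point step, the field $\overline{\F_q}$ is algebraically closed, so Borel's theorem applies directly. An alternative would be to take iterated one-parameter limits $\lim_{t\to0}$, as in the proof of generic initial ideals via weight orderings.

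It remains to show $\rank(\overline C_{x_1}|_{W_\calA})=\bar\rho(W_\calA)$. Generic specialization gives the inequality $\le$. For the reverse, the rank of $\overline C_L|_{W_\calA}$ attains $\bar\rho$ on a nonempty Zariski-open set of $L\in\overline R_1$. Since $W_\calA$ is $\calB$-invariant, $\rank(\overline C_{gL}|_{W_\calA})=\rank(\overline C_L|_{W_\calA})$ for all $g\in\calB$. The $\calB$-orbit of $x_1$ is the set of $L=\sum_i c_ix_i$ with $c_1\ne0$, because the substitution $x_1\mapsto a_1x_1+\sum_{i>1}a_{i1}x_i$ is allowed. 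This orbit is open and dense, so it meets the open set, and the rank at $x_1$ equals $\bar\rho(W_\calA)$.

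The main obstacle is the semicontinuity and limit step: we must check carefully that $\bar\rho$ is lower semicontinuous in the closure over the Grassmannian and invariant along orbits. The dual action in characteristic $q$ on the truncated ring also needs care, though the paper already notes that $\calB$ preserves $(x_1^q,\ldots,x_n^q)$. The $\calB$-invariance of the resulting $\calA$ will be recorded as a shifting-closed property of the monomial family, for use in the later shadow bounds.
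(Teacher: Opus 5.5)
Your proposal is correct and follows essentially the same route as the paper: Borel's fixed-point theorem applied to a closed $\calB$-stable subset of the Grassmannian on which $\bar\rho\le\bar\rho(W)$, the diagonal torus forcing the fixed point to be a monomial span $W_\calA$, and the open orbit $\{c_1\ne0\}$ of $x_1$ showing that contraction by $x_1$ attains $\bar\rho(W_\calA)$. The differences are cosmetic. You apply Borel to the orbit closure $\overline{\calB W}$, which lies inside the paper's closed set $\calX=\{U:\bar\rho(U)\le r\}$. You also invoke the one-dimensional weight-space decomposition directly, whereas the paper gives an explicit minimal-support argument using a single torus element with distinct weights.
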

\begin{proof}
The proof is essentially a shifting argument using operations from $\calB$.
Put $h=\dim(W)$, $r=\bar\rho(W)$, and consider
\[
 \calX=\{U\in\operatorname{Gr}(h,\overline R_D^*):\bar\rho(U)\le r\}.
\]
Here the Grassmannian $\operatorname{Gr}(h,\overline R_D^*)$ consists of $h$-dimensional subspaces of $\overline R_D^*$ and is a nonempty projective
variety:\footnote{A projective variety is a nonempty geometric shape defined by the common zeros of a finite set of homogeneous polynomials inside a projective space.}
the set $\calX$ is nonempty as it contains $W$.
The condition $\bar\rho(U)\le r$ is equivalent to the condition that the determinant of every $(r+1)\times(r+1)$ minor of
$\overline C_{L_z}|_U$ vanishes in $\overline{\F_q}(z)$. Their vanishing defines a closed subset, so $\calX$ is closed in the projective Grassmannian and is itself projective.

Since every $g\in\calB$ is a linear invertible change of coordinates, we always have $\bar\rho(gU)=\bar\rho(U)$ and thus $\calX$ is $\calB$-invariant.
The triangular group $\calB$ is connected and solvable, and its induced action on $\calX$ is algebraic. By Borel's fixed-point theorem~\cite[Theorem 10.4]{Borel91}, which states that a connected solvable algebraic group acting algebraically on a nonempty projective variety has a fixed point, there is $U\in\calX$ with $gU=U$ for every $g\in\calB$.

Next, we show that $U$ must be spanned by monomial coefficient functionals.
Assume this is false. Take the smallest set $\calK$ of reduced monomials such that there is a linear functional $\ell=\sum_{\alpha\in\calK}c_\alpha\cdot e_\alpha\in U$ where $c_\alpha\ne0$ for every $\alpha\in\calK$, but for some $\alpha\in\calK$ the monomial coefficient functional $e_\alpha$ is not in $U$. By assumption, we have $|\calK|\ge2$.
Now we consider a diagonal change $g\in\calB$ where $g(x_j)=a_jx_j$ for each $j\in[n]$.
The dual action of such a $g$ on $e_\alpha$ is a scaling by $a^{-\alpha}:=\prod_ja_j^{-\alpha_j}$. Since $\overline{\F_q}$ is infinite, one can pick $g$ such that $a^{-\alpha}$'s are all distinct.
Since $gU=U$ and $\ell\in U$, we know that $g\ell\in U$, which means $\ell'=c\cdot g\ell-\ell\in U$ for every $c\in\overline{\F_q}$. Choose $\alpha_0\in\calK$ with $e_{\alpha_0}\notin U$, choose $\beta\in\calK\setminus\{\alpha_0\}$, and set $c=a^\beta$. Since the weights are distinct, $\ell'$ has support $\calK\setminus\{\beta\}$ and retains a nonzero coefficient at $e_{\alpha_0}$, contradicting the minimality of $\calK$.
In conclusion, $U=W_\calA$ for a family $\calA$ of reduced monomials, where $\dim(W_\calA)=\dim(U)=h=\dim(W)$ and $\bar\rho(W_\calA)=\bar\rho(U)\le r=\bar\rho(W)$. 

Finally, we show that contraction by $L=x_1$ attains the symbolic rank $\bar\rho(U)$. The $\calB$-orbit of $x_1$ consists of all linear forms $L=g(x_1)=\sum_j a_jx_j$ with $a_1\ne0$. Since $U$ is $\calB$-invariant, we have $\rank(\overline C_L|_U)=\rank(\overline C_L|_{gU})=\rank(\overline C_{g^{-1}L}|_U)=\rank(\overline C_{x_1}|_U)$ for every such $L$. If $\bar\rho(U)=0$, the claim is immediate. Otherwise, a minor witnessing $\bar\rho(U)$ is a nonzero polynomial $p(a_1,\ldots,a_n)$. So over the infinite field $\overline{\F_q}$ we can choose $L$ with $a_1\ne0$ and $p(a_1,\ldots,a_n)\ne0$. Consequently, $\rank(\overline C_{x_1}|_{W_\calA})=\bar\rho(U)=\bar\rho(W_\calA)$, which completes the proof.
\end{proof}

The replacement in \Cref{lem:truncated-monomial-replacement} reduces our task to counting monomials: we need to show that a large family $\calA$ with $\calB$-invariant span $W_\calA$ has many members divisible by $x_1^2$, which turns out to be purely combinatorial.

For a family $\calA$ of monomials, its \emph{shadow} is defined as
\[
 \partial\calA=\{x^\alpha/x_i\colon x^\alpha\in\calA,\alpha_i\ge1\}.
\]
In other words, each monomial in $\calA$ is identified by a multiset $\alpha$ and $\partial\calA$ contains all $\alpha'\subset\alpha$ with $|\alpha'|=|\alpha|-1$.

\begin{corollary}\label{cor:to_shadow}
Define $c=\min\{q-1,D\}$.
For every nonzero subspace $W\subseteq\overline R_D^*$, there are families $\calA_d,0\le d\le c$ of reduced degree-$(D-d)$ monomials such that 
\begin{enumerate} 
\item\label{itm:cor:to_shadow_1} $\dim(W)=\sum_{d=0}^c|\calA_d|$,
\item\label{itm:cor:to_shadow_2} $\bar\rho(W)\ge\sum_{d=2}^c|\calA_d|$,
\item\label{itm:cor:to_shadow_3} $\partial\calA_d\subseteq\calA_{d+1}$ holds for every $0\le d\le c-1$.
\end{enumerate}
\end{corollary}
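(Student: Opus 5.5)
We apply \Cref{lem:truncated-monomial-replacement} to $W$ and obtain a family $\calA$ of reduced degree-$D$ monomials with $\calB$-invariant span $W_\calA$, $\dim(W_\calA)=\dim(W)$, and $\bar\rho(W)\ge\bar\rho(W_\calA)=\rank(\overline C_{x_1}|_{W_\calA})$. All remaining work concerns the explicit map $\overline C_{x_1}$ on monomial functionals.

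First we compute $\overline C_{x_1}(e_\alpha)$. For a reduced $x^\beta$ of degree $D-2$, the product $x_1^2x^\beta$ is nonzero in $\overline R$ only if $\beta_1+2\le q-1$, and then it equals $x^{\beta+2\epsilon_1}$. Hence $\overline C_{x_1}(e_\alpha)=e_{\alpha-2\epsilon_1}$ if $\alpha_1\ge2$, and $0$ otherwise. Since $\alpha\mapsto\alpha-2\epsilon_1$ is injective, the rank equals $\#\{x^\alpha\in\calA:\alpha_1\ge2\}$.

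This suggests stratifying by the exponent of $x_1$. For $0\le d\le c$, let $\calA_d$ be the set of reduced degree-$(D-d)$ monomials $x^\gamma$, with $\gamma_1=0$ (so only $x_2,\dots,x_n$ appear), such that $x_1^dx^\gamma\in\calA$. Every monomial of $\calA$ has $\alpha_1\le\min\{q-1,D\}=c$, so $\calA$ is the disjoint union of the sets $x_1^d\calA_d$. This gives item \ref{itm:cor:to_shadow_1}, and the rank count above gives item \ref{itm:cor:to_shadow_2}.

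Item \ref{itm:cor:to_shadow_3} is the main obstacle, and it must come from $\calB$-invariance. It holds as stated only if the shadow is taken in the variables $x_2,\dots,x_n$, i.e.\ within the monomials free of $x_1$, which is the natural reading since $\calA_{d+1}$ lives there. The plan is to use the substitution $g(x_1)=x_1+cx_i$ for $i>1$, fixing all other variables, which lies in $\calB$. Take $x_1^dx^\gamma\in\calA$ with $\gamma_i\ge1$ and $d<c$. Expanding $g^{-1}$ shows that $ge_{(d,\gamma)}$ has a component along $e_{(d+1,\gamma-\epsilon_i)}$ with coefficient a nonzero multiple of $(d+1)c$. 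That multiple is nonzero in $\overline{\F_q}$ because $d+1\le q-1$. The other components are along $e_{(d+k,\gamma-k\epsilon_i)}$.

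$\calB$-invariance of a monomial span gives $ge_\alpha\in W_\calA$. A monomial span contains a vector only if it contains every $e_\beta$ in its support, so $e_{(d+1,\gamma-\epsilon_i)}\in W_\calA$. This yields $x^{\gamma-\epsilon_i}\in\calA_{d+1}$, as required. The delicate point is checking the direction of the dual action: we need to confirm that $g^{-1}$ moves exponent from $x_1$ to $x_i$ in the primal, so that the functional moves from $x_i$ toward $x_1$. If the direction comes out reversed, we would instead use the lower-triangular convention or relabel variables so that $x_1$ is the last variable. We also need to confirm that a substitution involving the $x_1$ coefficient is allowed by the triangular form $g(x_j)=a_jx_j+\sum_{i>j}a_{ij}x_i$ with $j=1$; it is.
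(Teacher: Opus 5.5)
Your proposal is correct and follows the paper's proof essentially step for step. It uses the same reduction via the monomial-replacement lemma, the same stratification of $\calA$ by the exponent of $x_1$ (with the rank of $\overline C_{x_1}$ counted on the monomials with $\alpha_1\ge2$), and the same substitution $x_1\mapsto x_1+(\text{scalar})\cdot x_i$, whose dual action puts the nonzero coefficient $-(d+1)\cdot(\text{scalar})$ on $e_{(d+1,\gamma-\epsilon_i)}$.

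Your two hedges are unnecessary. With $(g\lambda)(f)=\lambda(g^{-1}f)$ and $g^{-1}(x_1)=x_1-c\,x_i$, the dual action moves in exactly the direction you need. Also, every member of $\calA_d$ is $x_1$-free, so $\partial\calA_d$ automatically consists of $x_1$-free monomials and item~3 needs no reinterpretation. One small fix: rename your scalar, since $c$ already denotes $\min\{q-1,D\}$.
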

\begin{proof}
By \Cref{lem:truncated-monomial-replacement}, we obtain $\calA$ such that $W_\calA$ is $\calB$-invariant, $\dim(W)=\dim(W_\calA)=|\calA|$, and $\bar\rho(W)\ge\rank(\overline C_{x_1}|_{W_\calA})$.
Partition $\calA$ as $\calA_0,\calA_1,\ldots,\calA_c$ based on the exponent of $x_1$. That is, for $d=0,1,\ldots,c$, define
$$
\calA_d=\left\{x^\alpha/x_1^d\colon x^\alpha\in\calA,\alpha_1=d\right\}.
$$
Note that each $\calA_d$ is a family of reduced degree-$(D-d)$ monomials and $|\calA|=\sum_{d=0}^c|\calA_d|$. This proves \Cref{itm:cor:to_shadow_1}.

Recall the definition of $\overline C_{x_1}$. Its action on $e_{\alpha}\in\overline R_D^*$ produces one of the following:
\begin{itemize}
    \item if $\alpha_1\ge2$, $\overline C_{x_1}(e_\alpha)=e_{\alpha'}$ where $\alpha'_1=\alpha_1-2$ and $\alpha'_j=\alpha_j$ for $2\le j\le n$;
    \item if $\alpha_1\le1$, $\overline C_{x_1}(e_\alpha)=0$.
\end{itemize}
Since each $\alpha'$ in the first case is distinct, $\rank(\overline C_{x_1}|_{W_\calA})=\sum_{d=2}^c|\calA_d|$. This proves \Cref{itm:cor:to_shadow_2}.

Finally we prove \Cref{itm:cor:to_shadow_3} using $\calB$-invariance.
Let $d\le c-1$ and $x^\alpha\in\calA_d$. Define $x^\beta=x^\alpha\cdot x_1^d$; then $x^\beta\in\calA$ by the definition of $\calA_d$.
It remains to show that $x^{\alpha'}\in\calA_{d+1}$ for every $\alpha'\subset\alpha$ of size $|\alpha'|=|\alpha|-1$. Fix an arbitrary $2\le j\le n$ such that $\alpha'$ comes from $\alpha$ by removing one $j$, i.e., $x^{\alpha'}=x^\alpha/x_j$.
Since $d+1\le q-1$, we know that $x^\beta\cdot x_1/x_j=x^\alpha\cdot x_1^{d+1}/x_j$ is a reduced monomial.
Then $x^{\alpha'}\in\calA_{d+1}$ is equivalent to $x^\beta\cdot x_1/x_j\in\calA$. 
To prove this, consider $g\in\calB$ where $g(x_1)=x_1+x_j$ and $g(x_i)=x_i$ for $i\ge2$. 
Then
\begin{align*}
(ge_\beta)\left(x^\beta\cdot x_1/x_j\right)
&=(ge_\beta)\left(x^\alpha\cdot x_1^{d+1}/x_j\right)
\tag{since $x^\beta=x^\alpha\cdot x_1^d$}\\
&=e_\beta\left((x_1-x_j)^{d+1}x^\alpha/x_j\right)
\tag{since $g$ only changes $x_1$ to $x_1+x_j$}\\
&=-(d+1)\ne0
\tag{since $e_\beta$ only isolates the coefficient of $x^\beta=x^\alpha\cdot x_1^d$}
\end{align*}
Since $e_\beta\in W_\calA$ and $W_\calA$ is $\calB$-invariant, $ge_\beta\in W_\calA$.
Since $W_\calA$ is spanned by monomial coefficient functionals and $ge_\beta$ has a nonzero coefficient at the monomial $x^\beta\cdot x_1/x_j$, the monomial $x^\beta\cdot x_1/x_j$ must also lie in $\calA$, which completes the proof.
\end{proof}

\subsubsection{Shadow bounds and putting things together}\label{sec:lwe-progress-shadows}

\Cref{cor:to_shadow} reduces the symbolic rank to a shadow estimation
problem. We use the following consequence of the multiset analogue of
Lov\'asz's shadow bound \cite{BBCH03}.

\begin{theorem}\label{thm:multiset-shadow}
Let $\calA$ be a family of degree-$r$ monomials where $r\ge2$.
Then $|\partial\calA|\ge|\calA|^{1-1/r}$.
\end{theorem}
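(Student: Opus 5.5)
The plan is to derive the bound from the discrete Loomis--Whitney inequality, by encoding each monomial as a sorted tuple of variable indices. I do not plan to use a multiset Kruskal--Katona theorem.

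\emph{Encoding.} Identify each degree-$r$ monomial $x^\alpha\in\calA$ with its unique nondecreasing index sequence $(i_1\le i_2\le\cdots\le i_r)\in[n]^r$, in which index $j$ appears exactly $\alpha_j$ times. Let $T\subseteq[n]^r$ be the set of these sequences. The map $\calA\to T$ is a bijection, so $|T|=|\calA|$. For $k\in[r]$, let $\pi_k\colon[n]^r\to[n]^{r-1}$ delete the $k$th coordinate. Deleting one entry of a sorted sequence leaves a sorted sequence. Moreover, $\pi_k(i_1,\ldots,i_r)$ is exactly the sorted sequence of $x^\alpha/x_{i_k}$, and this monomial lies in $\partial\calA$ because $\alpha_{i_k}\ge1$. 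Since a monomial determines its sorted sequence, this gives an injection $\pi_k(T)\hookrightarrow\partial\calA$, so $|\pi_k(T)|\le|\partial\calA|$ for every $k$.

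\emph{Loomis--Whitney.} For any finite $T\subseteq X_1\times\cdots\times X_r$ with $r\ge2$,
\[
 |T|^{r-1}\le\prod_{k=1}^r|\pi_k(T)|.
\]
I would cite this, or prove it quickly with Shearer's lemma. Take $\mathbf{Y}$ uniform on $T$; each coordinate lies in exactly $r-1$ of the sets $[r]\setminus\{k\}$. Shearer then gives
\[
 (r-1)\log|T|=(r-1)H(\mathbf{Y})\le\sum_kH(\pi_k\mathbf{Y})\le\sum_k\log|\pi_k(T)|.
\]
Combining this with the previous step gives $|\calA|^{r-1}\le|\partial\calA|^r$, which is exactly $|\partial\calA|\ge|\calA|^{1-1/r}$.

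\emph{Main obstacle.} The step needing the most care is the encoding: one must check that the projections of sorted tuples land injectively in $\partial\calA$ and that no multiplicity issues arise. Sorting settles both points, since the multiset of indices remaining after a deletion determines the resulting sorted sequence uniquely. Note that the argument never uses reducedness, so it applies to any family of degree-$r$ monomials. The hypothesis $r\ge2$ is used only so that the projections map to a nonempty product and Loomis--Whitney applies.
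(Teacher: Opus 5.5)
Your proof is correct, and it takes a different route from the paper. The paper does not prove the bound directly. It cites the multiset analogue of Lov\'asz's form of Kruskal--Katona from \cite{BBCH03}: if $|\calA|=\binom{t+r-1}{r}$ with real $t\ge1$, then $|\partial\calA|\ge\binom{t+r-2}{r-1}=\frac{r|\calA|}{t+r-1}$. It then uses the estimate $\binom{t+r-1}{r}\ge\bigl(\frac{t+r-1}{r}\bigr)^r$ to turn this into $|\calA|^{1-1/r}$.

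You instead encode each monomial as a sorted index tuple and observe that each of the $r$ coordinate-deletion projections of $T$ injects into the single set $\partial\calA$. Loomis--Whitney, via Shearer's lemma, then gives the result. The sorting trick is exactly what lets all $r$ projections land in the same shadow. Your argument is self-contained and elementary, needs neither the cited multiset Kruskal--Katona theorem nor the binomial estimate, and produces exactly the power form.

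What you give up is sharpness. On the family of all degree-$r$ monomials in $t$ variables, the Lov\'asz-type bound is exact. Loomis--Whitney is tight only for product sets, so there your bound is off by a factor tending to $r/(r!)^{1/r}<e$ as $t\to\infty$. The paper's proof of \Cref{thm:truncated-profile} only ever uses the form $|\partial\calA|\ge|\calA|^{1-1/r}$, so this constant-factor loss costs nothing downstream.

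One trivial detail: the Shearer step takes $\mathbf{Y}$ uniform on $T$, which needs $T\neq\emptyset$. The case $\calA=\emptyset$ should be dispatched separately.
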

\begin{proof}
The statement is trivial if $\calA=\emptyset$.
Now assume $N=|\calA|\ge1$. Choose the unique real $t\ge1$ such that $N=\binom{t+r-1}{r}$. Since $\binom{t+r-1}r\ge\left(\frac{t+r-1}r\right)^r$, we have $t\le r\cdot N^{1/r}-r+1$.
By \cite[Theorem~1.7]{BBCH03},
\[
 |\partial\calA|\ge\binom{t+r-2}{r-1}
 =\frac{r\cdot N}{t+r-1}\ge N^{1-1/r}
\]
as claimed.
\end{proof}

The following \Cref{thm:truncated-profile} bounds the symbolic rank in terms of the dimension alone.

\begin{theorem}\label{thm:truncated-profile}
For $D\ge3$, every nonzero subspace $W\subseteq\overline R_D^*$ satisfies
\[
 \bar\rho(W)\ge\left(\frac{\dim(W)}3\right)^{1-2/D}.
\]
\end{theorem}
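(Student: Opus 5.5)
We will apply \Cref{cor:to_shadow} to $W$ and obtain families $\calA_0,\ldots,\calA_c$ with $c=\min\{q-1,D\}\ge2$. Here $\calA_d$ consists of reduced monomials of degree $D-d$, the families satisfy $\dim(W)=\sum_{d=0}^c|\calA_d|$, and $\partial\calA_d\subseteq\calA_{d+1}$ holds for every $d<c$. It therefore suffices to prove the purely combinatorial inequality
\[
 \sum_{d=2}^c|\calA_d|\ge\Bigl(\tfrac{1}{3}\textstyle\sum_{d=0}^c|\calA_d|\Bigr)^{1-2/D}.
\]
Put $N=\dim(W)$. By pigeonhole, one of the three quantities $|\calA_0|$, $|\calA_1|$ and $\sum_{d\ge2}|\calA_d|$ is at least $N/3$. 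We split into cases according to which one it is.

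\emph{Case $\sum_{d\ge2}|\calA_d|\ge N/3$.} The left side is then at least $N/3$. Since $N/3\ge1$ whenever it is the largest term and $W\neq0$ (an integer count that is nonzero), we get $N/3\ge(N/3)^{1-2/D}$. If instead $N/3<1$, then the nonempty sum is at least $1$, which is at least $(N/3)^{1-2/D}$. Either way the bound holds.

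\emph{Case $|\calA_1|\ge N/3$.} We use $\partial\calA_1\subseteq\calA_2$ together with \Cref{thm:multiset-shadow}, applied in degree $r=D-1\ge2$. This gives $|\calA_2|\ge|\calA_1|^{1-1/(D-1)}$. Since $1-1/(D-1)\ge1-2/D$ and $|\calA_1|\ge1$, the right side is at least $(N/3)^{1-2/D}$, provided $|\calA_1|\ge1$. That proviso holds because if $\calA_1$ were empty we would have $N=0$ in this case.

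\emph{Case $|\calA_0|\ge N/3$.} We apply the shadow bound twice. First, $|\calA_1|\ge|\partial\calA_0|\ge|\calA_0|^{1-1/D}$. Second, $|\calA_2|\ge|\partial\calA_1|\ge|\calA_1|^{1-1/(D-1)}$. Combining these gives
\[
|\calA_2|\ge|\calA_0|^{(1-1/D)(1-1/(D-1))}=|\calA_0|^{(D-2)/D}=|\calA_0|^{1-2/D},
\]
where the telescoping uses $\tfrac{D-1}{D}\cdot\tfrac{D-2}{D-1}=\tfrac{D-2}{D}$. Monotonicity then gives the claim. The second application of the shadow bound needs $D-1\ge2$, which is exactly the hypothesis $D\ge3$.

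The main point to check is that the shadow inclusions actually feed into $\calA_2$, which requires $c\ge2$. This holds because $q\ge3$ and $D\ge3$. We must also confirm that \Cref{thm:multiset-shadow} applies to reduced monomials: a family of reduced monomials is in particular a family of monomials, and the shadow is a lower bound on $|\calA_{d+1}|$ regardless of reducedness, so no truncation issue arises. Beyond this, the only care needed is the exponent arithmetic and the small-$N$ edge cases above. There is no substantive obstacle, since the hard work sits in \Cref{cor:to_shadow} and \Cref{thm:multiset-shadow}.
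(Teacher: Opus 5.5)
Your proof is correct and is essentially the paper's argument: both apply \Cref{cor:to_shadow} and then \Cref{thm:multiset-shadow} twice (in degrees $D$ and $D-1$), with the constant $3$ coming from the three pieces $|\calA_0|$, $|\calA_1|$, $a=\sum_{d\ge2}|\calA_d|$. The paper organizes it by inverting the shadow bounds to get $\dim(W)\le a^{D/(D-2)}+a^{(D-1)/(D-2)}+a\le 3a^{D/(D-2)}$ (implicitly using $a\ge1$), whereas you pigeonhole on the largest piece and handle the small-count edge cases explicitly.
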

\begin{proof}
Take $c=\min\{q-1,D\}$ and the families $\calA_0,\ldots,\calA_c$
from \Cref{cor:to_shadow}. Write
\[
 h=\dim(W)
 \quad\text{and}\quad
 a=\sum_{d=2}^c|\calA_d|.
\]
Then \Cref{cor:to_shadow} shows $h=|\calA_0|+|\calA_1|+a$ and $\bar\rho(W)\ge a$.
Since $D\ge3$, \Cref{thm:multiset-shadow} gives
$|\calA_1|\ge|\calA_0|^{1-\frac1D}$ and $|\calA_2|\ge|\calA_1|^{1-\frac1{D-1}}$.
Since $|\calA_2|\le a$, we have
\[
 |\calA_1|\le a^{(D-1)/(D-2)}
 \quad\text{and}\quad 
 |\calA_0|\le a^{D/(D-2)},
\]
which implies
\[
 h=|\calA_0|+|\calA_1|+a
 \le a^{D/(D-2)}+a^{(D-1)/(D-2)}+a
 \le3\cdot a^{D/(D-2)}.
\]
Thus $\bar\rho(W)\ge a\ge(h/3)^{1-2/D}$, which completes the proof.
\end{proof}

Now we are ready to finish the proof of \Cref{prop:conditional-image-growth}.

\begin{proof}[Proof of \Cref{prop:conditional-image-growth}]
Let $H=\dim(R_D)$ and define for $h\in[0,H]$
$$
 \gamma(h)=\left(\frac h3\right)^{1-2/D}.
$$
We first prove the conditional expectation bound.
Fix a history $\calF_{t-1}$ with $h_{t-1}=\dim(W_{t-1})>0$.
Applying \Cref{thm:truncated-profile} to $\overline W_{t-1}$ and using \Cref{fct:rhorho+}, we have
$\rho(W_{t-1})\ge\gamma(h_{t-1})$.
By \Cref{clm:contraction_CL} and \Cref{lem:finite-rank-progress}, we have
\[
 0\le X_t\le\rho(W_{t-1})
 \quad\text{and}\quad
 \E[X_t\mid\calF_{t-1}]\ge\rho(W_{t-1})/3.
\]
Since $\gamma(h_{t-1})\le\rho(W_{t-1})$ and $X_t\le\rho(W_{t-1})$, we also have
\begin{align*}
\E\!\left[\min\!\left\{\frac{X_t}{\gamma(h_{t-1})},1\right\}
       \,\middle|\,\calF_{t-1}\right]
       \ge\E\!\left[\frac{X_t}{\rho(W_{t-1})}
       \,\middle|\,\calF_{t-1}\right]
\ge\frac{\E[X_t\mid\calF_{t-1}]}{\rho(W_{t-1})}\ge\frac13.
\end{align*}
Finally for the integral, observe that
\begin{align*}
\int_0^H\frac{dh}{\gamma(h)}
&\le3\int_0^H h^{-1+2/D}\,dh
 =\frac{3D}{2}\cdot H^{2/D}
 \le\frac{100n^2}{D},
\end{align*}
where for the last inequality we use \Cref{fct:dim-V} to obtain $H\le\binom{n+D}{D}\le\left(\frac{6n}{D}\right)^D$.
\end{proof}

\section{Classical average-case \texorpdfstring{\FqSubsetSum}{Fq-Subset-Sum}}\label{sec:general-p-average}

In this section, we prove \Cref{thm:general-p-average}, which gives a classical algorithm for solving average-case \Cref{prob:fp-subset-sum} and quadratically improves the results in \cite{KOW26}.

\generalpaverage*

The proof of \Cref{thm:general-p-average} relies on three ingredients: (1) an \emph{affine} reducibility notion that upgrades the reducibility notion from \cite{KOW26}, (2) the algorithm from \cite{KOW26} that solves worst-case signed-\FqSubsetSum, and (3) a martingale analysis to control the failure probability.

\paragraph{Section organization.}
\Cref{sec:affine-Fp} defines an affine reducible vector and gives its algorithmic construction.
\Cref{sec:classical-ag-Fq-algorithm} explains the recursive algorithm and proves \Cref{thm:general-p-average}.

\subsection{Affine reducible vector}\label{sec:affine-Fp}

We start with the notion of an affine reducible vector, which generalizes the notion of a reducible vector in \cite{KOW26}. We highlight its mathematical definition and omit the algorithmic feature, which will be apparent during the analysis.

\begin{definition}[Affine reducible vector]\label{def:subset-reducible-line}
Let $v_1,\ldots,v_\ell\in\F_q^d$ be a set of vectors.
For $c\in\F_q^d$ and $0\ne u\in\F_q^d$, we say $(c,u)$ is affine reducible on $v_1,\ldots,v_\ell$ if, for every $a\in\F_q$, we can express $c+a\cdot u$ as a subset-sum of $v_1,\ldots,v_\ell$.
\end{definition}

We remark that if $c=0$, then affine reducibility falls back to the reducibility notion in \cite{KOW26}.
To construct an affine reducible vector, we need the following theorem from \cite{KOW26}.

\begin{theorem}[{\cite{KOW26}}]\label{thm:kow-signed-relation}
Assume $q\ge5$ and let $k=\lfloor(q+3)/4\rfloor$.
Given
\[
 m\ge m_d=(k-1)^{(k-1)(k-2)/2}
 (d+k)^k
\]
vectors $v_1,\ldots,v_m\in\F_q^d$, one can in deterministic $\poly(m,\log q)$ time find $\alpha_1,\ldots,\alpha_m\in\{0,\pm1\}$ such that $\alpha_i$'s are not all zero and $\sum_{i\in[m]}\alpha_i\cdot v_i=0$.
\end{theorem}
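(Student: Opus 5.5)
The plan is to reduce the signed-relation problem to Gaussian elimination over ``fully reducible'' vectors, in the sense of \Cref{def:subset-reducible-line} with $c=0$. Call $u\ne0$ \emph{reducible on a group} $G$ of the input vectors if every multiple $a\cdot u$, $a\in\F_q$, is a signed ($\{0,\pm1\}$-coefficient) combination of the vectors in $G$. Signed combinations suffice here, since the target statement permits $\alpha_i\in\{0,\pm1\}$.

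Suppose we have produced $d+1$ vectors $u_1,\ldots,u_{d+1}$ that are reducible on pairwise disjoint groups $G_1,\ldots,G_{d+1}$. Gaussian elimination then gives a nontrivial relation $\sum_j c_ju_j=0$ with $c_j\in\F_q$. Replacing each $c_ju_j$ by its signed expansion over $G_j$ produces a signed relation among the $v_i$. Because the groups are disjoint, no coefficient exceeds $1$ in absolute value. The relation is nonzero provided each expansion of a nonzero multiple is nonzero, and this holds automatically since the expanded vector $c_ju_j$ itself is nonzero. The whole task is therefore to build one reducible vector from about $m_d/(d+1)\approx(k-1)^{(k-1)(k-2)/2}(d+k)^{k-1}$ input vectors, and then repeat this on fresh vectors $d+1$ times.

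To build a reducible vector I would induct on a level $j=1,\ldots,k$. A level-$j$ object is a vector $u$ together with a disjoint group $G$ whose signed sums realize every element of a coefficient window $T_j\cdot u$, with $T_1=\{0,\pm1\}$ and $|T_j|$ growing roughly like $4j$. The choice $k=\lfloor(q+3)/4\rfloor$ is exactly what makes $T_k=\F_q$, because the window satisfies $2(2k-1)+1\ge q$. The merge step is as follows. Take about $d+j$ level-$(j-1)$ objects; after adjoining one auxiliary coordinate, the dimension count leaves a free direction. Gaussian elimination (again in $\poly(m,\log q)$ time) then finds two disjoint subcollections whose realizable windows, combined, yield a common direction $u'$ on which both $u'$ and a second scalar multiple $\lambda u'$ are available. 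The sumset $T_{j-1}+\lambda T_{j-1}$, with $\lambda$ chosen suitably, or a shifted-window ``affine'' trick, then gives the enlarged window $T_j$.

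Each level multiplies the required number of vectors by roughly $(d+k)$ times a factor of order $(j-1)^{j-1}$. The latter factor comes from needing several independent copies in order to select a good scalar $\lambda$ by pigeonhole. Multiplying out over $j\le k-1$ gives the factor $(k-1)^{(k-1)(k-2)/2}$ and the power $(d+k)^{k-1}$. The final Gaussian-elimination stage described above contributes one further factor of $(d+k)$, for the total $(k-1)^{(k-1)(k-2)/2}(d+k)^k$. All steps are linear algebra over $\F_q$ on $\poly(m)$-size matrices, so the algorithm is deterministic and polynomial time.

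The main obstacle is the window-growth step. I must show that from $d+j$ level-$(j-1)$ objects one can always, deterministically, extract a direction on which the combined signed sums cover a window of size about $|T_{j-1}|+4$, rather than merely getting the size doubled or staying stuck. I would first try to prove the growth by the Cauchy--Davenport inequality: choosing the ratio $\lambda$ among the at most $q$ possibilities so that $T_{j-1}\cup(T_{j-1}+\lambda\{\pm1\})$ grows by at least $4$. Here $\lambda$ is found by testing $O(q)$ candidates, each with a linear solve. If that fails, I would fall back on keeping arithmetic-progression windows $\{-r,\ldots,r\}\cdot u$ and adjoining two new signed generators per level.
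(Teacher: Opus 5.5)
The paper does not prove this statement. It quotes it from \cite{KOW26} and uses it only as a black box in \Cref{prop:general-p-affine-line}, so your proposal has to stand on its own. Your outer layer is sound: $d+1$ nonzero vectors that are fully reducible on pairwise disjoint groups, followed by one Gaussian elimination, do give a nontrivial $\{0,\pm1\}$-relation. The gap is in the only step with real content, the construction of a single fully reducible vector.

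\textbf{The merge step does not work as described.} It needs a direction $u'$ such that $u'$ and $\lambda u'$ are \emph{both} available with the full old window on disjoint subcollections. Gaussian elimination only returns a dependency $\sum_i c_iu_i=0$ whose coefficients are arbitrary elements of $\F_q$, typically spread over all $(q-1)/2$ classes up to sign. A combination $\sum_i y_iu_i$ of level-$(j-1)$ objects naively carries only the window $\{t:ty_i\in T_{j-1}\text{ for all }i\}$. That window collapses as soon as some $y_i\notin\{0,\pm1\}$. Producing $u'$ and $\lambda u'$ as disjoint $\pm1$-combinations amounts to finding a relation with coefficients in $\{0,\pm1,\pm\lambda\}$. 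That is a restricted-coefficient relation of exactly the kind you are trying to construct.

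\textbf{The fallbacks do not repair this.}
\begin{itemize}
\item Even granting such a pair, Cauchy--Davenport only gives $|T+\{0,\pm\lambda\}|\ge\min\{q,|T|+2\}$, not $|T|+4$. That would need about $q/2$ levels and give exponent about $q/2$ rather than $k$.
\item The scalar $\lambda$ is dictated by the relation you find; it is not selected among $q$ candidates. If it could be chosen, $\lambda=2r+1$ would triple an interval $[-r,r]$, far more than $+4$.
\item The cost accounting, with factors $(j-1)^{j-1}$ from ``pigeonhole copies'', is fitted to the target formula rather than derived. A pigeonhole over $\lambda\in\F_q$ would cost a factor depending on $q$, not only on $k$.
\end{itemize}

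\textbf{The missing idea is to exploit the \emph{whole} dependency rather than isolate two classes.} Set $u=\sum_i f(c_i)u_i$ for a function $f$ of the coefficient class. Realize each target multiple as $tu=tu-\lambda_t\sum_i c_iu_i$, choosing the shift $\lambda_t$ so that every per-object coefficient $tf(c_i)-\lambda_tc_i$ lands in the old window.

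For $q=7$, starting from the input vectors (window $\{0,\pm1\}$), one step already suffices. Let $w_a$ be the sum of $\pm v_i$ over the indices with $c_i=\pm a$, signs taken from $c_i$. The dependency then reads $w_1+2w_2+3w_3=0$. The vector $u=w_2+w_3$ satisfies $2u=-w_1-w_3$ and $3u=-w_1+w_2$, so it is fully reducible. If $u=0$, then either $w_2+w_3=0$ is already a nontrivial signed relation, or the original dependency was signed. Hence $(d+1)^2$ vectors suffice for $q=7$, and likewise for $q=5$ with $u=w_2$.

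For larger $q$, the real work is to show that suitable $f$ and $\lambda_t$ exist so the window radius grows by $2$ per level, possibly using several dependencies at once. You must also count the vectors this consumes. Your proposal names this as ``the main obstacle'' but supplies neither the mechanism nor the count, so as written it does not prove the theorem.
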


Using \Cref{thm:kow-signed-relation}, one can obtain an affine reducible vector $(c,u)$ from random vectors by quotienting by $u$ first, then combining disjoint signed-subset-zero-sums in the smaller space, and finally piecing them together. This is achieved in the following \Cref{prop:general-p-affine-line}.

\begin{proposition}\label{prop:general-p-affine-line}
Let $V=\langle u\rangle\oplus W$ be a vector space over $\F_q$ of dimension $d+1$, where $u\ne0$. Recall $m_d$ from \Cref{thm:kow-signed-relation}.
For every integer $B\ge1$ and given $B\cdot m_d$ independent uniform vectors in $V$, one can in deterministic $\poly(B,m_d,q)$ time construct $c\in V$ such that $(c,u)$ is affine reducible on the input vectors with failure probability at most $q^2/2^B$.
\end{proposition}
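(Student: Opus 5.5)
Write each input vector as $v_i=\beta_i u+w_i$ with $\beta_i\in\F_q$ and $w_i\in W$; the $\beta_i$ and $w_i$ are independent and uniform. Split the $B\cdot m_d$ inputs into $B$ disjoint blocks of $m_d$ vectors each. On block $b$, apply \Cref{thm:kow-signed-relation} to the projections $w_i$ (which live in $W\cong\F_q^d$). This gives coefficients $\alpha^{(b)}\in\{0,\pm1\}$, not all zero, with $\sum_i\alpha^{(b)}_iw_i=0$. Hence
\[
\sum_i\alpha^{(b)}_iv_i=\lambda_b u,\qquad \lambda_b:=\sum_i\alpha^{(b)}_i\beta_i .
\]
Let $P_b$ and $N_b$ be the indices with coefficient $+1$ and $-1$. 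Put $c_b=\sum_{i\in N_b}v_i$. Then block $b$ can realize either $c_b$ (take $N_b$) or $c_b+\lambda_b u$ (take $P_b$), as subset sums over disjoint index sets. Choosing a subset $T$ of blocks and summing gives $c+\big(\sum_{b\in T}\lambda_b\big)u$ with $c=\sum_b c_b$. So $(c,u)$ is affine reducible once the subset sums of the $\lambda_b$ cover all of $\F_q$. If $q-1$ of the $\lambda_b$ are nonzero, this holds by the Cauchy--Davenport argument: each new nonzero element enlarges the subset-sum set by at least one until it is all of $\F_q$. One can also find the needed $T$ for each $a$ by a trivial dynamic program in $\poly(B,q)$ time.

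The probabilistic step is to show that each $\lambda_b$ is nonzero with probability at least $1/2$, independently across blocks. The plan is to condition on all the $w_i$, which fixes $\alpha^{(b)}$ because the algorithm is deterministic and sees only $w$. The $\beta_i$ are then still independent and uniform. Since $\alpha^{(b)}\neq0$, the sum $\lambda_b$ is a nonzero linear form in independent uniform $\beta_i$, so it is uniform on $\F_q$. It is therefore nonzero with probability $1-1/q\ge1/2$. Independence across blocks holds because the blocks use disjoint variables.

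The failure event is that fewer than $q-1$ of the $B$ independent trials succeed. Its probability is at most
\[
\sum_{j<q-1}\binom{B}{j}q^{-(B-j)}\le \binom{B}{\le q-2}2^{-B}\cdot(\text{small}),
\]
and this needs to be bounded by $q^2/2^B$. This counting is where I expect the constant to be delicate, since $\binom{B}{q-2}$ is not bounded by $q^2$ for large $B$. A cleaner route is to use that each $\lambda_b$ is uniform on $\F_q$. The set of reachable subset sums is then a random walk that stays small only if it keeps missing. Alternatively, group the blocks into $q-1$ super-groups of about $B/q$ blocks each. It then suffices that each super-group contains some nonzero $\lambda_b$. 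A union bound gives failure at most $q\cdot q^{-B/q}$, and I would tune the grouping or adjust the claimed bound to reach $q^2/2^B$. Either way, the runtime is $B$ calls to \Cref{thm:kow-signed-relation} plus $\poly(B,q)$ bookkeeping.
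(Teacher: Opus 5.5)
Your setup and your construction of $c$ and the witnesses match the paper's. You split into blocks, apply \Cref{thm:kow-signed-relation} to the $W$-components, and observe that $\alpha^{(b)}$ depends only on the $w_i$, so the $\lambda_b$ are independent and uniform on $\F_q$. You then take $c=\sum_b\sum_{i\in N_b}v_i$ with witnesses $\bigcup_{b\in T}P_b\cup\bigcup_{b\notin T}N_b$.

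The gap is in the covering step. Your sufficient condition, ``at least $q-1$ of the $\lambda_b$ are nonzero'', can never hold when $B<q-1$. Yet the claimed bound $q^2/2^B$ is already below $1$ once $B>2\log_2 q$. This is exactly the regime where the proposition is used. In \Cref{sec:classical-ag-Fq-algorithm}, $B=\lceil\log_2(4q^2)\rceil$, so for instance $q=11$ gives $B=9<10=q-1$. There your argument yields only the trivial bound $1$, while a failure probability of at most $1/4$ is needed. Even when $B\ge q-1$ the bound is far off: at $B=q-1$ your failure bound is $1-(1-1/q)^{q-1}\approx1-1/e$ for large $q$, against the target $q^2/2^{q-1}$. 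The super-group variant still needs $B\ge q-1$, and $q\cdot q^{-B/q}$ decays much more slowly than $2^{-B}$. The random-walk idea is not carried out. The underlying problem is that the Cauchy--Davenport growth argument uses only $\lambda_b\ne0$ and gains one element per block. It ignores that $B$ blocks produce $2^B$ subset sums.

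The missing idea is to use the uniformity of the $\lambda_b$ on all nonempty $T\subseteq[B]$ simultaneously, via a second-moment argument.
\begin{itemize}
\item Each $\sum_{b\in T}\lambda_b$ is uniform on $\F_q$.
\item For distinct nonempty $T,T'$ the indicator vectors are linearly independent over $\F_q$, so these sums are pairwise independent.
\item For $a\ne0$, let $X_a$ count the nonempty $T$ with $\sum_{b\in T}\lambda_b=a$. Then $\E[X_a]=(2^B-1)/q$ and $\Var[X_a]\le\E[X_a]$.
\item Chebyshev gives $\Pr[X_a=0]\le q/(2^B-1)$. A union bound over the $q-1$ nonzero values of $a$ gives at most $q^2/2^B$ whenever $2^B>q^2$; otherwise the claim is trivial.
\item The value $a=0$ is covered by $T=\emptyset$.
\end{itemize}
With this replacing your Cauchy--Davenport step, the rest of your proof goes through as in the paper.
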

\begin{proof}
Let $J_1,\ldots,J_B$ be the disjoint blocks of input vectors, each of size $m_d$.
Write each input vector $v_i$ as
\[
 v_i=\lambda_i u+w_i\quad\text{where}\quad \lambda_i\in\F_q,\quad w_i\in W.
\]
For each $j\in[B]$, apply \Cref{thm:kow-signed-relation} to $(w_i)_{i\in J_j}$ and obtain coefficients $(\alpha_i)_{i\in J_j}\in\{0,\pm1\}^{J_j}$ that are not all zero and satisfy $\sum_{i\in J_j}\alpha_iw_i=0$.  Define
\[
 P_j=\{i\in J_j:\alpha_i=1\},\qquad N_j=\{i\in J_j:\alpha_i=-1\},\qquad
 \beta_j=\sum_{i\in J_j}\alpha_i\lambda_i.
\]
Then
\begin{equation}
 \sum_{i\in P_j}v_i=\sum_{i\in N_j}v_i+\beta_j u,
 \label{eq:general-p-two-point-segment}
\end{equation}
where $\beta_1,\ldots,\beta_B$ are independent and uniform since $\alpha_i$'s depend only on $w_i$'s.

We next show that, with high probability, the subset-sums of the $\beta_j$'s cover all $\F_q$.  For each $0\ne a\in\F_q$, let $X_a$ count $\emptyset\ne T\subseteq[B]$ such that $\sum_{j\in T}\beta_j=a$.
Then a direct calculation shows
$$
\E[X_a]=\frac{2^B-1}{q}
\quad\text{and}\quad
\Var[X_a]=\left(2^B-1\right)\cdot\left(\frac1q-\frac1{q^2}\right)\le\frac{2^B}q.
$$
Therefore by Chebyshev's inequality and a union bound, we have
$$
\Pr\left[\text{subset-sums of $\beta_j$'s cover all }\F_q\right]\ge1-\frac{q^2}{2^B}.
$$

Now it suffices to construct $c\in V$ for affine reducibility, assuming the event above.
To this end, define for each $a\in\F_q$ a set $T_a\subseteq[B]$ such that $\sum_{j\in T_a}\beta_j=a$, where $T_0=\emptyset$.
Let
\[
 c=\sum_{j=1}^B\sum_{i\in N_j}v_i
 \quad\text{and}\quad
 R_a=\bigcup_{j\in T_a}P_j\;\cup\!\bigcup_{j\notin T_a}N_j.
\]
Note that these $P_j,N_j$'s are disjoint.
Observe that 
\begin{align*}
 \sum_{i\in R_a}v_i
 &=\sum_{j\in T_a}\sum_{i\in P_j}v_i+\sum_{j\notin T_a}\sum_{i\in N_j}v_i
 =\sum_{j\in T_a}\left(\beta_ju+\sum_{i\in N_j}v_i\right)+\sum_{j\notin T_a}\sum_{i\in N_j}v_i
 \tag{by \eqref{eq:general-p-two-point-segment}}\\
 &=\left(\sum_{j\in T_a}\beta_j\right)u+\sum_{j=1}^B\sum_{i\in N_j}v_i
 =a\cdot u+c.
 \tag{by the definition of $T_a$ and $c$}
\end{align*}
Thus $(c,u)$ is affine reducible with witnesses $R_a$'s and the runtime is obvious.
\end{proof}

\subsection{The recursive algorithm}\label{sec:classical-ag-Fq-algorithm}

Just like \cite{KOW26}, our algorithm is recursive and sequentially reduces the dimension using (affine) reducible vectors.
Recall $m_d$ from \Cref{thm:kow-signed-relation} and \Cref{prop:general-p-affine-line}.
Set
\begin{equation}
 B=\left\lceil\log_2(4q^2)\right\rceil,
 \qquad T=4n+2.
 \label{eq:general-p-stopping-parameters}
\end{equation}
We allow $T$ construction attempts in total across all recursive levels and assume
\begin{equation}
 m\ge 2n+1+T\cdot Bm_{n-1},
 \label{eq:general-p-adaptive-budget}
\end{equation}
which is dominated by the bound on $m$ in \Cref{thm:general-p-average}.

Now we describe the algorithm.
\begin{definition}[Recursive \FqSubsetSum algorithm]\label{def:general-p-recursive-algorithm}
Use the first $2n$ input vectors to obtain a basis $b_1,\ldots,b_n$ of $\F_q^n$, and report failure if they do not span.
For each $d=0,1,\ldots,n$, define $V_d=\langle b_1,\ldots,b_d\rangle$.
The algorithm defines a subroutine $\Solve(d,t,\{u_i\}_i,K)$, where it looks for a subset-sum of $\{u_i\}_i\subseteq V_d$ that equals $t\in V_d$, where $K$ is a budget allowing repeated calls to \Cref{prop:general-p-affine-line}.

The base case is $d=0$, in which case we simply return the subset $\{1\}$ indicating $u_1=t=0$.
The answer to \FqSubsetSum is precisely $\Solve(n,0,\{v_{2n+1},\ldots,v_m\},T)$, where $v_{2n+1},\ldots,v_m$ are the $m-2n$ random input vectors and $T$ from \Cref{eq:general-p-stopping-parameters} sets the total budget.

For $d\ge1$, $\Solve(d,t,\{u_i\}_i,K)$ is executed as follows.
\begin{enumerate}[label=(\alph*)]
\item\label{itm:def:general-p-recursive-algorithm_1}
Set $u=b_d$ and $W=V_{d-1}$. Then apply \Cref{prop:general-p-affine-line} using the next $Bm_{d-1}$ fresh vectors from $\{u_i\}_i$.
Repeat this up to $K$ times to obtain an affine reducible vector $(c,u)\in V_d\times V_d$.
Report failure and halt if no affine reducible vector is produced even after $K$ attempts.
\item\label{itm:def:general-p-recursive-algorithm_2}
Let $\pi_d:V_d\to V_{d-1}$ be the quotient map induced by $u$.
If the successful attempt is the $r$th, set $K'=K-r$ and execute
$$
\Solve\left(d-1,\pi_d(t-c),\{\pi_d(u_i)\}_{i>r\cdot Bm_{d-1}},K'\right),
$$
which is supposed to return a subset $S'$, supported on indices beyond $r\cdot Bm_{d-1}$, such that $\sum_{i\in S'}\pi_d(u_i)=\pi_d(t-c)$.
Report failure and halt if the recursive call fails.
We find the unique $a\in\F_q$ such that 
\[
t-\sum_{i\in S'}u_i=c+a\cdot u
\]
and a subset $R_a$, supported on indices at most $r\cdot Bm_{d-1}$ and witnessing the affine reducibility of $c+a\cdot u$, such that
$$
\sum_{i\in R_a}u_i=c+a\cdot u.
$$
Then return $S'\cup R_a$.
\end{enumerate}
\end{definition}

Now we prove \Cref{thm:general-p-average}.

\begin{proof}[Proof of \Cref{thm:general-p-average}]
If $q=3$, then the theorem follows from \cite[Theorem 1.2]{KOW26}.
From now on, we assume $q\ge5$.
We first verify that the bound on $m$ in \eqref{eq:general-p-adaptive-budget} ensures that the recursive solver in \Cref{def:general-p-recursive-algorithm} does not run out of vectors.
To see this, the initialization uses $2n$ input vectors, the $d=0$ base case uses $1$ vector, and each of the at most $T$ attempts uses at most $Bm_{n-1}$ vectors. Hence the total amount is at most $2n+1+T\cdot Bm_{n-1}$, upper bounded by the RHS of \eqref{eq:general-p-adaptive-budget}.

Next, we need to show that the RHS of \eqref{eq:general-p-adaptive-budget} is a lower bound for the one in \Cref{thm:general-p-average}. This is a direct calculation using \eqref{eq:general-p-stopping-parameters} and \Cref{thm:kow-signed-relation}, which we omit here.

The runtime is straightforward from the descriptions above and it remains to show that \Cref{def:general-p-recursive-algorithm} fails with probability at most $2^{-n}$.
To this end, it is easy to see that every successful call returns a subset summing to its target. Hence we only need to upper bound (1) the failure of finding a basis in the first $2n$ input vectors and (2) the failure of obtaining all $n$ affine reducible vectors within the total budget of $T$ in \Cref{itm:def:general-p-recursive-algorithm_1}.

For (1), a direct union bound over the $(q^n-1)/(q-1)$ nonzero linear functionals up to scaling gives
\begin{equation}\label{eq:thm:general-p-average_1}
 \Pr[\text{the first $2n$ vectors do not span}]
 \le\frac{q^n-1}{q-1}\cdot q^{-2n}<\frac{q^{-n}}{q-1}\le2^{-n-1}.
\end{equation}
For (2), observe that, conditional on the history and by \Cref{prop:general-p-affine-line}, each repetition in \Cref{itm:def:general-p-recursive-algorithm_1} receives independent uniform vectors and fails with probability at most
\begin{equation}\label{eq:affine-recursive-single-step}
q^2/2^B\le1/4.
\end{equation}
Condition on successful initialization. 
Let $Y_j$ indicate success on the $j$th attempt in \Cref{itm:def:general-p-recursive-algorithm_1}, and set $Y_j=1$ if the algorithm never reaches the $j$th attempt.  By
\eqref{eq:affine-recursive-single-step}, we have
\[
 \E\left[2^{-Y_j}\mid\calF_{j-1}\right]\le\frac34\cdot\frac12+\frac14\cdot1=\frac58
 \quad\text{and}\quad
 \E\!\left[2^{-\sum_{j=1}^T Y_j}\right]\le(5/8)^T\le2^{-T/2},
\]
where $\calF_{j-1}$ records the history before the $j$th attempt.
Markov's inequality then gives
\[
 \Pr\!\left[\sum_{j=1}^T Y_j<n\right]
 \le2^{n-T/2}=2^{-n-1}.
\]
Combining this with \Cref{eq:thm:general-p-average_1} completes the whole proof.
\end{proof}

\section{Classical worst-case \texorpdfstring{\FqSubsetSum}{Fq-Subset-Sum}}
\label{sec:worst}

In this section, we prove \Cref{thm:worst-main}, which is a classical algorithm for solving worst-case \Cref{prob:fp-subset-sum} and improves the $n^{q\log q}$ bound in \cite{IS17,II24,KOW26} to $n^q$.

\worstmain*

Let $\G$ be the set of nonzero elements in $\F_q$. 
If $a=(a_1,\ldots,a_\ell)\in\G^\ell$ and $s\in\G$, let $a(j\gets s)$ denote the tuple obtained by replacing coordinate $j$ with $s$: 
$$
a(j\gets s)=(a_1,\ldots,a_{j-1},s,a_{j+1},\ldots,a_\ell).
$$
The algorithm in \Cref{thm:worst-main} has two steps which we now describe.

First, we arrange the input vectors into pairwise disjoint subset-sums $w_a$, indexed by $a\in\G^\ell$, such that their \emph{weighted} sum equals zero along every line. This is a specialization of the equation array in \cite{IS17} and a simplification of the equation tree in \cite{KOW26}. 

\begin{lemma}\label{lem:axis-array}
For each $\ell\ge1$, $b\in\G^\ell$, and
\[
 m_\ell=(q-1)^{\ell(\ell-1)/2}(n+1)^\ell
\]
input vectors in $\F_q^n$, one can in deterministic $\poly(m_\ell,q)$ time construct pairwise disjoint sets $J_a$ for all $a\in\G^\ell$ such that $J_b\ne\emptyset$ and subset-sums $w_a=\sum_{i\in J_a}v_i$ satisfy 
$$
\sum_{s\in\G}s\cdot w_{a(j\leftarrow s)}=0
$$
for every $j\in[\ell]$ and every fixing of the other coordinates $a_1,\ldots,a_{j-1},a_{j+1},\ldots,a_\ell$.
\end{lemma}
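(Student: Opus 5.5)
We argue by induction on $\ell$, treating the last coordinate as a ``weighted zero-sum'' over $\G$ of blocks that are themselves $(\ell-1)$-dimensional arrays. The basic gadget is linear algebra. Given $n+1$ vectors in $\F_q^n$, Gaussian elimination yields a nontrivial dependency $\sum_i c_iu_i=0$ with $c_i\in\F_q$. Grouping the indices by the value $c_i=s\in\G$ gives disjoint sets $K_s$ with $\sum_{s\in\G}s\sum_{i\in K_s}u_i=0$, and at least one $K_s$ is nonempty. By multiplying the whole dependency by a suitable nonzero scalar, we can force any prescribed $b\in\G$ to satisfy $K_b\neq\emptyset$. This is exactly the case $\ell=1$, with $m_1=n+1$ and $J_s=K_s$.

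For the inductive step, we use $m_\ell=(q-1)^{\ell-1}(n+1)\cdot m_{\ell-1}$ and proceed as follows.

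\begin{enumerate}
\item Split the $m_\ell$ input vectors into $(q-1)^{\ell-1}(n+1)$ disjoint blocks, each of size $m_{\ell-1}$.
\item Apply the induction hypothesis to each block $p$, with target $b'=(b_1,\ldots,b_{\ell-1})$. This produces disjoint sets $J^{(p)}_{a'}$ for $a'\in\G^{\ell-1}$, with $J^{(p)}_{b'}\ne\emptyset$, satisfying the line relations in every coordinate $j\le\ell-1$.
\item Group the blocks into $n+1$ super-blocks $P_1,\dots,P_{n+1}$, each consisting of $(q-1)^{\ell-1}$ blocks.
\end{enumerate}

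The difficulty is producing the new coordinate $\ell$. We need a single scalar vector $c\in\F_q^{n+1}$ (one coefficient per super-block) such that, for every fixed $a'$, the values $\sum_{i}c_i(\text{$a'$-entry of super-block }i)$ all vanish simultaneously. This is too many constraints for $n+1$ unknowns.

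The fix is to use linearity of the line relations. Instead of running one dependency per $a'$, we form, for each block, the single combined vector
\[
y^{(p)}=\sum_{a'}\Bigl(\prod_{j<\ell}a'_j\Bigr)w^{(p)}_{a'}.
\]
We find one dependency among $n+1$ such vectors, using one block per index $i$, and set
\[
J_{(a',s)}=\bigcup_{i:\,c_i=s}J^{(i)}_{a'}.
\]
The relation in the new coordinate $\ell$ then reads $\sum_s s\,w_{(a',s)}=\sum_i c_i w^{(i)}_{a'}$, and this must vanish for each $a'$ separately. A single combined vector does not give that. This is the main obstacle, and the $(q-1)^{\ell-1}$ multiplicity of blocks must be what resolves it.

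My first attempt would be to choose each block's array so that it is ``rank one'' up to the line relations. That is, for each fixed $a'$ we would want all entries $w_{a'}$ to be determined as scalar multiples of a single vector, for instance $w_{a'}=(\prod_j a'_j)^{-1}w_{\mathbf 1}$. The point is that the line relations in coordinates $j<\ell$ with weights $s$ are automatically satisfied by entries of the form $s^{-1}\cdot(\text{common vector})$, since $\sum_{s\in\G}s\cdot s^{-1}=q-1=-1\ne0$ is problematic. So I would instead exploit $\sum_{s\in\G}1=-1$ together with the extra factor $(q-1)^{\ell-1}$ to realize each vector-valued ``slice'' as a $\G$-weighted zero combination. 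Concretely, I would apply the $\ell=1$ gadget independently to each of the $(q-1)^{\ell-1}$ positions $a'$, using one block per position. This explains the exponent $\ell(\ell-1)/2=\sum_{k<\ell}k$ in the count. The recursion would then assemble position-wise dependencies, each on fresh vectors, so that every coordinate-$\ell$ line vanishes by construction.

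The remaining checks are routine: verifying the coordinate-$j$ relations for $j<\ell$ are preserved when taking unions scaled by the coordinate-$\ell$ dependency, disjointness of all sets, and $J_b\ne\emptyset$ via the normalization of each dependency. Runtime is $\poly(m_\ell,q)$ by Gaussian elimination.
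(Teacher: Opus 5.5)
There is a genuine gap: you never complete the inductive step, and the construction you finally settle on would not work. Your base case and your block count $(q-1)^{\ell-1}(n+1)$ are correct. The missing idea is in how you count unknowns. You group the blocks into $n+1$ super-blocks with one coefficient each, which leaves $n+1$ unknowns against $(q-1)^{\ell-1}n$ scalar constraints, as you noticed.

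The fix is to not group at all. Give every block $p$ its own coefficient $\alpha_p$, and concatenate its whole array into one long vector $W^{(p)}=(w^{(p)}_{a'})_{a'\in\G^{\ell-1}}\in\F_q^{(q-1)^{\ell-1}n}$. There are $(q-1)^{\ell-1}(n+1)>(q-1)^{\ell-1}n$ such vectors, so Gaussian elimination gives a nontrivial dependency $\sum_p\alpha_pW^{(p)}=0$. This is a single coefficient vector with $\sum_p\alpha_pw^{(p)}_{a'}=0$ for every $a'$ simultaneously, and it is exactly where the factor $(q-1)^{\ell-1}$ is consumed. Rescale so that some $\alpha_{p_0}=b_\ell$, and set $J_{(a',s)}=\bigcup_{p:\alpha_p=s}J^{(p)}_{a'}$. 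This is the paper's argument. The new relation is then $\sum_s s\,w_{(a',s)}=\sum_p\alpha_pw^{(p)}_{a'}=0$. Disjointness is automatic, and $J_b\supseteq J^{(p_0)}_{b'}\neq\emptyset$.

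Your proposed fix, a separate $\ell=1$ dependency for each position $a'$, breaks the old coordinates $j<\ell$. At fixed last coordinate $t$, that relation reads $\sum_s s\,w_{(a'(j\gets s),t)}$. It reduces to the block-wise relations $\sum_s s\,w^{(p)}_{a'(j\gets s)}=0$ only if the index set $\{p:\alpha_p=t\}$ is the same for every position along the line. With position-dependent coefficients, or fresh blocks per position, these index sets vary with $s$, and nothing forces the sum to vanish. The single shared coefficient vector from the concatenation is what makes the new coordinate-$\ell$ relations and the preserved old ones hold together. The rank-one digression is unnecessary and, as you observed, does not work, since $\sum_{s\in\G}s\cdot s^{-1}=-1$.
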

\begin{proof}
We prove by induction on $\ell$.  
For $\ell=1$, take $m_1=n+1$ input vectors and find a nontrivial linear dependence $\sum_i\alpha_iv_i=0$. Choose any $i_0$ with $\alpha_{i_0}\ne0$ and rescale so that $\alpha_{i_0}=b$. Then set
$J_s=\{i:\alpha_i=s\}$ for $s\in\G$ and the statement follows.

Suppose the lemma holds for $\ell$ and now we prove it for $\ell+1$.
Partition the input vectors into
\[
 E:=(q-1)^\ell(n+1)\quad\text{blocks of size}\quad m_\ell.
\]
Define $b'=(b_1,\ldots,b_\ell)$.
On block $k$, construct its corresponding $w_{a'}^{(k)},J_{a'}^{(k)}$ for $a'\in\G^\ell$, where $J_{b'}^{(k)}\ne\emptyset$, and concatenate all $(q-1)^\ell$ many $w_{a'}^{(k)}$ into a long vector $W^{(k)}\in\F_q^{(q-1)^\ell n}$. 
Since $E>(q-1)^\ell n$, we can find a nontrivial linear dependence
\begin{equation}\label{eq:lem:axis-array_1}
\sum_k\alpha_kW^{(k)}=0,
\end{equation}
which, after rescaling, has $\alpha_{k_0}=b_{\ell+1}$ for some $k_0$.
Now for each $a=(a',s)\in\G^{\ell+1}$ where $a'\in\G^\ell$ and $s\in\G$, define
\[
 w_a=\sum_{k:\alpha_k=s}w_{a'}^{(k)},
\]
whose support $J_a$ is the union of the corresponding block supports.
Note that disjointness of $J_a$'s is preserved by the construction and $J_b=J_{(b',b_{\ell+1})}\ne\emptyset$ is due to $\alpha_{k_0}=b_{\ell+1}$ and $J_{b'}^{(k)}\ne\emptyset$ for all $k$.
To check the vanishing condition, we remark that old ones (i.e., the axes selected by $j\le\ell$) are guaranteed by the induction hypothesis and the concatenation operation; the new ones (i.e., the axis selected by $j=\ell+1$) are guaranteed by \eqref{eq:lem:axis-array_1} coordinate-wise.
Finally, the runtime is straightforward from the analysis above.
\end{proof}

The $J_b\ne\emptyset$ condition in \Cref{lem:axis-array} is to ensure the subset-sums are not all empty.

Given \Cref{lem:axis-array}, the second step is to carefully select cells $a\in\G^{q-1}\cong\G^{\G}$ whose sum vanishes.
The signed version of this was done in \cite{KOW26} by enumerating permutations over $\G$ and using their signs, whose vanishing condition is guaranteed by certain exterior-algebra-type cancellation.
For the unsigned version here, we instead enumerate spanning trees over $\F_q$ and directly add them up, whose vanishing condition is guaranteed by the matrix-tree theorem \cite{Chaiken82}.

\begin{lemma}\label{lem:tree-identity}
Fix a rooted tree $\calT=(V,E)$ with vertices $V=\F_q$, root $0\in V$, and edges $E$. For each $0\ne v\in V$, we use $E(v)\in V$ to denote its parent in $\calT$.
Define $a(\calT)\in\G^{\G}$ by setting $a(\calT)_v=v-E(v)$ for each $0\ne v\in V$.

Let $\Tbb$ be the set of all $\calT$ rooted at $0$. Then
$$
\sum_{\calT\in\Tbb}w_{a(\calT)}=0
$$
holds for any subset-sums $w_a$ from \Cref{lem:axis-array} with $\ell=q-1$.
\end{lemma}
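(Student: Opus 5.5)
The plan is to express the indicator of "$a$ comes from a rooted tree" as a determinant whose rows each depend on a single coordinate $a_v$. Multilinearity in the rows then lets us discard terms that the line relations of \Cref{lem:axis-array} annihilate.

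\textbf{Step 1: reduce to a function identity.} For $a\in\G^{\G}$, consider the map $v\mapsto v-a_v$ on the nonzero vertices. This is a functional digraph on $\F_q$ in which $0$ has no outgoing edge, and $a=a(\calT)$ for a tree $\calT\in\Tbb$ exactly when this digraph has no cycle. In that case $\calT$ is uniquely determined, since the parent of $v$ is $E(v)=v-a_v$. Hence
\[
\sum_{\calT\in\Tbb}w_{a(\calT)}=\sum_{a\in\G^{\G}}F(a)\,w_a,
\]
where $F(a)\in\{0,1\}$ indicates acyclicity; we regard $F$ as $\F_q$-valued.

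The relations in \Cref{lem:axis-array} say that for every $j$ and every function $g$ of the remaining coordinates, $\sum_a a_j\, g(a_{-j})\,w_a=0$. Summing over $j$, the functional $f\mapsto\sum_a f(a)w_a$ kills every function of the form $\sum_j a_j g_j(a_{-j})$. So it suffices to show that $F$ is such a combination.

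\textbf{Step 2: determinant formula (matrix-tree).} Index rows and columns by $\G$, and let $e_0:=0$. Let $M(a)$ be the $\G\times\G$ matrix whose row $v$ is
\[
r_v(a_v)=e_v-e_{v-a_v}.
\]
I claim $F(a)=\det M(a)$.
\begin{itemize}
\item If the digraph has a cycle, it avoids $0$, and the rows along the cycle sum to zero, so $\det M(a)=0$.
\item If it is acyclic, order the vertices by decreasing depth. Then $M(a)$ is triangular with unit diagonal, so $\det M(a)=1$.
\end{itemize}
This is the matrix-tree statement in the form of \cite{Chaiken82}, and it holds already over $\mathbb{Z}$.

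\textbf{Step 3: modify rows and kill the determinant.} The determinant is multilinear in the rows, and row $v$ depends only on $a_v$. Replace each row by $r'_v(a_v)=r_v(a_v)-a_v e_1$. Expanding multilinearly, every term that uses a correction $-a_v e_1$ in row $v$ is $a_v$ times a function of the other coordinates $a_{-v}$. Such terms are annihilated by Step 1. Hence
\[
\sum_a F(a)\,w_a=\sum_a\det\bigl(r'_v(a_v)\bigr)_{v}\,w_a.
\]

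Now take the vector $y\in\F_q^{\G}$ with $y_u=u$; this is consistent with the convention $e_0=0$, since $y_0=0$. For every $s\in\G$,
\[
\langle r'_v(s),y\rangle=v-(v-s)-s\,y_1=s-s=0,
\]
and this also covers the case $v-s=0$. So every row of the modified matrix is orthogonal to the nonzero vector $y$. The modified determinant is therefore identically $0$ over $\F_q$, which proves the lemma.

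\textbf{Main obstacle.} The main obstacle is Step 3: finding the row correction $s\mapsto s\beta_v$ that makes all rows orthogonal to a common vector. The weighted relation $\sum_s s\,w_{a(j\gets s)}=0$ is exactly what makes the identity vector $y_u=u$ work, because $y_v-y_{v-s}=s$. Checking the matrix-tree claim of Step 2 and the bookkeeping of the multilinear expansion should be routine.
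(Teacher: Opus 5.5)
Your proof is correct, and it takes a different route from the paper's.

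\textbf{The paper's route.} The paper passes to the quotient $\bar U=U/\langle\br\rangle$ and reduces the lemma to the vanishing of the tree tensor $\sum_{\calT\in\Tbb}\bigotimes_{v\in\G}\bar e_{a(\calT)_v}$ in $\bar U^{\otimes\G}$. It checks this vanishing by pairing with the dual basis $\bigotimes_{v}\bar\lambda_{\ksf_v}$, where $\lambda_k(e_s)=s^k$ for $0\le k\le q-3$. The weighted matrix-tree theorem turns each pairing into $\det\bigl(-(v-u)^{\ksf_v}\bigr)_{v,u\in\G}$. That matrix is singular because its $q-1$ rows are evaluations of polynomials of degree at most $q-3$. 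The route needs some supporting facts:
\begin{itemize}
\item the power sums $\sum_{s\in\G}s^{k+1}=0$, so that each $\lambda_k$ descends to $\bar U$;
\item a dimension count showing the $\bar\lambda_k$ form a basis of $\bar U^*$;
\item the diagonal computation $\sum_{z\ne v}(v-z)^{\ksf_v}$ inside the weighted Laplacian.
\end{itemize}

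\textbf{Your route.} You stay at the level of functions on $\G^{\G}$ and use only the unweighted identity $F=\det M$, which you prove directly from the functional-digraph structure. The key point is that row $v$ depends only on $a_v$. Subtracting $a_ve_1$ from each row therefore changes the determinant only by terms of the form $a_v\,g(a_{-v})$, and the line relations of \Cref{lem:axis-array} annihilate these. Meanwhile the additive potential $y_u=u$ lies in the kernel of the corrected matrix, because $y_v-y_{v-s}=s$, including when $v-s=0$. The corrected determinant is therefore identically zero.

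\textbf{Comparison.} Your argument is shorter and more elementary. It also gives an explicit certificate that the tree indicator lies in the span of the relations, rather than testing against every dual functional. The singularity comes from one common kernel vector rather than a polynomial-degree count across rows. The paper's computation is a uniform dual-basis test that fits its tensor-quotient framing. The price is the extra bookkeeping about which power functionals factor through $\bar U$.
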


The proof of \Cref{lem:tree-identity} is deferred to \Cref{sec:tree}.
Now we put together \Cref{lem:axis-array} and \Cref{lem:tree-identity} to prove \Cref{thm:worst-main}. 

\begin{proof}[Proof of \Cref{thm:worst-main}]
By \Cref{lem:axis-array} with $\ell=q-1=|\G|$ and $b=\bone$, where $\bone\in\G^\ell$ is the all-one vector, we obtain disjoint subset-sums $w_a$.
Then we apply \Cref{lem:tree-identity} and obtain a zero-sum. Note that this zero-sum is a subset-sum since $a(\calT)\ne a(\calT')$ for all distinct $\calT,\calT'\in\Tbb$.
To see that the subset-sum is nonempty, we observe that $J_{\bone}\ne\emptyset$ from \Cref{lem:axis-array} and $\bone=a(\calT_\star)$ where $\calT_\star$ is the tree of the ordered path: $q-1\to q-2\to\cdots\to2\to1\to0$.
Finally, the runtime is straightforward.
\end{proof}

\subsection{A zero-sum identity over rooted trees}\label{sec:tree}

To prove \Cref{lem:tree-identity}, we need some additional algebraic definitions.

Let $U=\F_q^{\G}$ with standard basis $(e_s)_{s\in[q-1]}$.
Define
\[
 \br=\sum_{s\in\G}s\cdot e_s,
 \qquad \bar U=U/\langle\br\rangle,
\]
and write $\bar e_s$ for the image of $e_s$ in $\bar U$.
The point of quotienting by $\langle\br\rangle$ is that it represents the vanishing conditions in \Cref{lem:axis-array}. This abstraction allows us to omit algorithmic details in \Cref{lem:axis-array} and focus purely on the algebraic conditions. In particular, it suffices to prove the following result.

\begin{proposition}\label{prop:tree}
$\sum_{\calT\in\Tbb}\bigotimes_{v\in\G}\bar e_{a(\calT)_v}=0$ in ${\bar U}^{\otimes\G}$.
\end{proposition}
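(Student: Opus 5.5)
The plan is to dualize, so that the tensor identity in Ū^{\otimes\G} becomes a scalar identity for each product of functionals, and then to recognize that scalar identity as a matrix-tree determinant with an explicit kernel vector. The quotient by $\langle\br\rangle$ will supply exactly the linear relation that puts a nonzero vector in the kernel of a reduced Laplacian.

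\emph{Step 1 (reduction to functionals).} Since $\bar U$ is finite-dimensional, the dual of ${\bar U}^{\otimes\G}$ is spanned by pure tensors $\bigotimes_{v\in\G}\varphi_v$ with $\varphi_v\in\bar U^*$. It therefore suffices to show that every such pure tensor of functionals annihilates the sum. A functional on $\bar U$ is the same as a function $\varphi\colon\G\to\F_q$, via $\varphi(s)=\varphi(\bar e_s)$, satisfying $\sum_{s\in\G}s\,\varphi(s)=0$. This condition is just the requirement that $\varphi$ kill $\br$. So we must prove the following: for every family of functions $\varphi_v\colon\G\to\F_q$ with $\sum_s s\,\varphi_v(s)=0$ for each $v\in\G$,
\[
 \sum_{\calT\in\Tbb}\ \prod_{v\in\G}\varphi_v\bigl(v-E(v)\bigr)=0.
\]

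\emph{Step 2 (matrix-tree theorem).} For distinct $v\in\G$ and $p\in\F_q$, put edge weights $w(v,p)=\varphi_v(v-p)$; this is well defined because $v-p\in\G$ whenever $p\ne v$. Rooted trees on $\F_q$ with root $0$ are exactly the parent maps $E\colon\G\to\F_q$ without cycles. The weighted in-tree version of the matrix-tree theorem~\cite{Chaiken82} then expresses the left-hand side as $\det L$, where $L$ is the $\G\times\G$ matrix with
\[
 L_{vv}=\sum_{p\in\F_q,\,p\ne v}w(v,p)\quad\text{and}\quad L_{vp}=-w(v,p)\ \ (p\in\G,\ p\ne v).
\]
This is the Laplacian with the root's row and column deleted, and the root's edges $w(v,0)$ appear only in the diagonal entries. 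The identity is a polynomial identity in the weights, so it holds over $\F_q$ as well.

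\emph{Step 3 (explicit kernel vector).} Take $x\in\F_q^{\G}$ with $x_p=p$, which is nonzero. For each $v\in\G$,
\[
 (Lx)_v=\sum_{p\ne v}w(v,p)\,v-\sum_{p\in\G,\,p\ne v}w(v,p)\,p=\sum_{p\in\F_q,\,p\ne v}w(v,p)(v-p).
\]
The two sums combine into a single sum over $p\in\F_q$ because the $p=0$ term carries the factor $v-0=v$, matching the diagonal contribution. Substituting $s=v-p$, this equals $\sum_{s\in\G}s\,\varphi_v(s)=0$. Hence $L$ is singular and $\det L=0$, which is the required identity.

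The main obstacle is Step 2. One must fix the orientation convention, with edges pointing toward the root, each non-root vertex choosing its parent, and weights indexed by (child, parent), so that the row-sum form of the Laplacian matches the products $\prod_v\varphi_v(v-E(v))$. One must also check that the root's column is correctly absorbed into the diagonal. If invoking~\cite{Chaiken82} in this form is inconvenient, a fallback is the standard sign-reversing proof: expand $\det L$ over permutations, and observe that each cycle in a parent map cancels against the corresponding off-diagonal cycle term, leaving exactly the acyclic parent maps, i.e.\ the trees.
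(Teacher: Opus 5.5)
Your proof is correct. It shares the paper's skeleton (dualize, then use the directed matrix-tree theorem to turn the tree sum into $\det(L')$), but the two key choices differ.

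\textbf{The paper's route.} It tests against the specific basis $\bar\lambda_0,\ldots,\bar\lambda_{q-3}$ of $\bar U^*$, where $\lambda_k(e_s)=s^k$. This requires two checks:
\begin{itemize}
\item via the power-sum identity $\sum_{s\in\G}s^{k+1}=0$ for $k+1\le q-2$, that each $\lambda_k$ factors through $\bar U$;
\item by a separate argument, that these functionals are linearly independent and hence span $\bar U^*$.
\end{itemize}
With weights $c_{v,u}=(v-u)^{\ksf_v}$, it evaluates the diagonal by another power sum and finds $(L')_{v,u}=-f_v(u)$ with $\deg f_v\le q-3$. It then concludes by a dimension count: the $q-1$ rows lie in the $(q-2)$-dimensional space of evaluations of such polynomials, so they are dependent.

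\textbf{Your route.} You test against arbitrary pure tensors of functionals, characterized only by $\sum_s s\,\varphi_v(s)=0$. You then exhibit the explicit right-kernel vector $x_p=p$; after substituting $s=v-p$, the $v$th entry of $Lx$ is exactly the defining relation for $\varphi_v$.

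\textbf{Comparison.} Your argument avoids the basis verification and all power-sum computations. It also shows directly why quotienting by exactly $\br=\sum_s s\,e_s$ makes the tree sum vanish. The paper's route instead works with a concrete coordinate description of $\bar U^*$ and certifies singularity by dimension counting rather than by an explicit null vector.

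Your Step~2 conventions (child-to-parent weights, root column absorbed into the diagonal, root row and column deleted) match the paper's statement of the theorem exactly, so the sign-reversing fallback is unnecessary.
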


Indeed, given \Cref{prop:tree}, one can directly obtain \Cref{lem:tree-identity}.

\begin{proof}[Proof of \Cref{lem:tree-identity}]
Define a linear map $\Psi:U^{\otimes\G}\to\F_q^n$ by $\Psi\!\left(\bigotimes_{v\in\G}e_{a_v}\right)=w_a$ on basis tensors and extend it linearly.
\Cref{lem:axis-array} implies that $\Psi$ annihilates every tensor with \(\br\) in any coordinate, i.e., for every $j\in\G$ and $a\in\G^{\G\setminus\{j\}}$, we have
$$
\Psi\left(\bigotimes_{v\in\G}x_v\right)=0,
\qquad\text{where }x_j=\br\text{ and }x_v=e_{a_v}\ \text{for }v\ne j.
$$
Hence $\Psi$ factors through ${\bar U}^{\otimes\G}$: there exists a linear map $\bar\Psi\colon{\bar U}^{\otimes\G}\to\F_q^n$ such that $\Psi=\bar\Psi\circ\Pi$, where $\Pi=\pi^{\otimes\G}$ and $\pi\colon U\to\bar U$ is the quotient map that linearly takes $e_s$ to $\bar e_s$.
Consequently, we have
\begin{align*}
\sum_{\calT\in\Tbb}w_{a(\calT)}
&=\sum_{\calT\in\Tbb}\Psi\left(\bigotimes_{v\in\G}e_{a(\calT)_v}\right)
=\sum_{\calT\in\Tbb}\bar\Psi\circ\Pi\left(\bigotimes_{v\in\G}e_{a(\calT)_v}\right)\\
&=\sum_{\calT\in\Tbb}\bar\Psi\left(\bigotimes_{v\in\G}\bar e_{a(\calT)_v}\right)=\bar\Psi\left(\sum_{\calT\in\Tbb}\bigotimes_{v\in\G}\bar e_{a(\calT)_v}\right)
\tag{by linearity}\\
&=\bar\Psi(0)=0
\tag{by \Cref{prop:tree}}
\end{align*}
as desired.
\end{proof}

It remains to prove the pure algebraic formula \Cref{prop:tree}.

\begin{proof}[Proof of \Cref{prop:tree}]
To prove the identity, we test it against an explicit basis of ${\bar U}^*$, the dual space of $\bar U$, and the directed matrix-tree theorem \cite{Chaiken82} will turn each resulting weighted tree sum into a singular determinant.
We recall the precise weighted form of the theorem used below.

\begin{theorem}[{\cite{Chaiken82}}]\label{thm:matrixtree}
Let $V$ be a finite vertex set with a distinguished root $\rtsf\in V$.
For each $u,v\in V$, assign a weight $c_{v,u}$ in a commutative ring $R$.  Define the out-Laplacian $L$ by
\[
 L_{v,u}=\begin{cases}
 \displaystyle\sum_{z\in V\setminus\{v\}}c_{v,z}&u=v,\\
 -c_{v,u}&u\ne v.
 \end{cases}
\]
Let $L'$ be obtained by deleting row $\rtsf$ and column $\rtsf$ from $L$.
Let\footnote{Each $\calT=(V,E)\in\Tbb$ has the following property. The root $\rtsf$ has no outgoing edge. Each $v\ne\rtsf$ has one outgoing edge $E(v)$, and repeatedly following these edges reaches $\rtsf$.} $\Tbb$ be the set of spanning trees of $V$ rooted at $\rtsf$. Then
\[
 \det(L')
 =\sum_{\calT=(V,E)\in\Tbb}\prod_{v\in V\setminus\{\rtsf\}}c_{v,E(v)}.
\]
\end{theorem}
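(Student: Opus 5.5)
I would prove \Cref{thm:matrixtree} by expanding $\det(L')$ multilinearly in its rows, rather than by induction on $|V|$. Throughout, $L'$ has rows and columns indexed by $V\setminus\{\rtsf\}$. Let $\varepsilon_u$ denote the standard row vector indexed by $u\in V\setminus\{\rtsf\}$, and set $\varepsilon_{\rtsf}:=0$ to account for the deleted column.

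\textbf{Step 1: rewrite the rows.} From the definition of the out-Laplacian, row $v$ of $L'$ is
\[
 \sum_{z\in V\setminus\{v\}}c_{v,z}\,(\varepsilon_v-\varepsilon_z).
\]
This holds because the diagonal entry collects $\sum_{z\ne v}c_{v,z}$. Each off-diagonal entry $-c_{v,u}$ with $u\ne\rtsf$ is produced by the term $z=u$. The terms with $z=\rtsf$ contribute only to the diagonal, which is exactly what deleting column $\rtsf$ requires.

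\textbf{Step 2: expand.} The determinant is multilinear in rows over any commutative ring $R$. Expanding each row separately gives
\[
 \det(L')=\sum_{f}\Bigl(\prod_{v\ne\rtsf}c_{v,f(v)}\Bigr)\det(M_f),
\]
where $f$ ranges over all maps $V\setminus\{\rtsf\}\to V$ with $f(v)\ne v$. Here $M_f$ is the integer matrix whose row $v$ is $\varepsilon_v-\varepsilon_{f(v)}$. Such a map $f$ is exactly a choice of one outgoing edge $v\to f(v)$ at every non-root vertex. So it suffices to show that $\det(M_f)=1$ when these edges form a spanning tree rooted at $\rtsf$, and $\det(M_f)=0$ otherwise.

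\textbf{Step 3: evaluate $\det(M_f)$.} Iterating $f$ from any vertex either reaches $\rtsf$, which has no outgoing edge, or enters a cycle avoiding $\rtsf$.
\begin{itemize}
\item If some cycle $v_1\to v_2\to\cdots\to v_k\to v_1$ exists with all $v_i\ne\rtsf$, then the rows of $M_f$ indexed by $v_1,\ldots,v_k$ sum telescopically to $0$. Hence $\det(M_f)=0$ over $\mathbb{Z}$, and therefore in $R$.
\item Otherwise every vertex reaches $\rtsf$, so $f$ is precisely a rooted tree $\calT\in\Tbb$ with $E=f$. Order the non-root vertices by decreasing depth. Since $f(v)$ is strictly shallower than $v$ (or equals $\rtsf$, where $\varepsilon_{\rtsf}=0$), the nonzero off-diagonal entry of row $v$ lies in a column appearing later in this order. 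Thus $M_f$ is, after a simultaneous permutation of rows and columns, unitriangular with ones on the diagonal, and $\det(M_f)=1$. A simultaneous permutation does not change the sign.
\end{itemize}
Summing over $f$ yields the claimed identity.

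I expect no real obstacle. The only points needing care are the bookkeeping in Step 1, namely that edges into $\rtsf$ appear only on the diagonal, and the observation that the cycle cancellation is an integer identity, so it transfers to an arbitrary commutative ring.
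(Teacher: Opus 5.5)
Your proof is correct and complete. The paper itself gives no proof of this theorem: it imports the statement as the principal-minor case of Chaiken's all-minors matrix-tree theorem. That result covers arbitrary minors, expressed as signed sums over spanning forests, and Chaiken proves it combinatorially.

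You instead give a self-contained, elementary argument tailored to exactly the case the paper uses. You write row $v$ of $L'$ as $\sum_{z\ne v}c_{v,z}(\varepsilon_v-\varepsilon_z)$ and expand by multilinearity. This reduces everything to the $\{0,\pm1\}$-matrices $M_f$ indexed by out-edge choices $f$. Choices containing a cycle are killed by the telescoping row relation, and acyclic choices become unitriangular after ordering by depth.

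Your route buys transparency and the right generality with no sign bookkeeping. Since each $\det(M_f)$ is an integer identity, the argument holds verbatim over any commutative ring. That is exactly what the application in \Cref{prop:tree} needs, with weights $(v-u)^{\ksf_v}$ in $\F_q$. The citation buys the far more general all-minors statement, which the paper never actually uses.
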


For $0\le k\le q-3$, define a linear map $\lambda_k\colon U\to\F_q$ by $\lambda_k(e_s)=s^k$ on basis vectors and extend it linearly.
These maps annihilate $\br$:
\[
 \lambda_k(\br)=\sum_{s\in\G}s^{k+1}=0
 \quad\text{holds for every $1\le k+1\le q-2$.}
\]
To see this, observe that for every $t\in\G$, we have $t^{k+1}\lambda_k(\br)=\sum_{s\in\G}(st)^{k+1}=\sum_{s\in\G}s^{k+1}=\lambda_k(\br)$, which means $(t^{k+1}-1)\lambda_k(\br)=0$; when $k+1\le q-2$, one can always pick $t\in\G$ such that $t^{k+1}-1\ne0$ and hence $\lambda_k(\br)=0$.
Consequently, each $\lambda_k$ factors through $\bar U$: there exists a linear map $\bar\lambda_k\colon\bar U\to\F_q$ such that $\lambda_k=\bar\lambda_k\circ\pi$ where $\pi\colon U\to\bar U$ is the quotient map induced by $\br$.

On the other hand, these $\bar\lambda_k$'s are linearly independent.
This is because any linear combination of them lifts to a linear combination of $\lambda_k$, which corresponds to a polynomial of degree at most $q-3$ and cannot vanish on all $s\in\G$ unless it is zero.\footnote{We remark that one can define $\lambda_k$ for $k=q-2$ analogously and show that $\lambda_0,\lambda_1,\ldots,\lambda_{q-2}$ are linearly independent. However $\lambda_{q-2}$ does not factor through $\bar U$: $\sum_{s\in\G}s^{q-1}=q-1\ne0$; hence we do not include it to study $\bar U$. In other words, $\lambda_0,\lambda_1,\ldots,\lambda_{q-2}$ form a basis of $U^*$, but since we care about ${\bar U}^*$ for \Cref{prop:tree}, we should use $\bar\lambda_0,\bar\lambda_1,\ldots,\bar\lambda_{q-3}$.} Since $\dim(\bar U)=\dim(U)-1=|\G|-1=q-2$, these $\bar\lambda_k$ form a basis of
${\bar U}^*$.

Going back to \Cref{prop:tree}, it therefore suffices to prove for every $\ksf=(\ksf_v)_{v\in\G}\in\{0,\ldots,q-3\}^{\G}$
$$
M(\ksf):=\left(\bigotimes_{v\in\G}\lambda_{\ksf_v}\right)\left(\sum_{\calT\in\Tbb}\bigotimes_{v\in\G}e_{a(\calT)_v}\right)=\sum_{\calT=(V,E)\in\Tbb}\prod_{v\in\G}
 (v-E(v))^{\ksf_v}=0.
$$
Apply \Cref{thm:matrixtree} with $V=\F_q$, $\rtsf=0$, $R=\F_q$, and $c_{v,u}=(v-u)^{\ksf_v}$, setting $\ksf_0=0$. Then $M(\ksf)=\det(L')$, whose rows and columns are indexed by $\G$ and
\[
 (L')_{v,u}=\begin{cases}
 \sum_{z\in\F_q\setminus\{v\}}(v-z)^{\ksf_v}&u=v\\
 -(v-u)^{\ksf_v}&u\ne v
 \end{cases}
 =-(v-u)^{\ksf_v},
\]
where we used the fact that 
$$
\sum_{z\in\F_q\setminus\{v\}}(v-z)^{\ksf_v}=\begin{cases}
0&1\le\ksf_v\le q-3,\\
-1&\ksf_v=0.
\end{cases}
$$
It remains to show that $L'$ is not full rank. To this end, for each $v\in\G$, define a polynomial $f_v(x)=(v-x)^{\ksf_v}$. Then $(L')_{v,u}=-f_v(u)$.
Since each $f_v$ has degree at most $q-3$ yet $|\G|=q-1$, these $f_v$'s are linearly dependent, which implies linear dependence of rows of $L'$. This completes the whole proof.
\end{proof}

\bibliographystyle{alphaurl}
\bibliography{ref}

\newcommand{\etalchar}[1]{$^{#1}$}
\begin{thebibliography}{BBWZW26}

\bibitem[ACF{\etalchar{+}}15a]{ACFP15}
Martin~R. Albrecht, Carlos Cid, Jean-Charles Faug{\`e}re, Robert Fitzpatrick,
  and Ludovic Perret.
\newblock Algebraic algorithms for {LWE} problems.
\newblock {\em ACM Communications in Computer Algebra}, 49(2):62, 2015.
\newblock \href {https://doi.org/10.1145/2815111.2815158}
  {\path{doi:10.1145/2815111.2815158}}.

\bibitem[ACF{\etalchar{+}}15b]{ACFFP15BKW}
Martin~R. Albrecht, Carlos Cid, Jean-Charles Faug{\`e}re, Robert Fitzpatrick,
  and Ludovic Perret.
\newblock On the complexity of the {BKW} algorithm on {LWE}.
\newblock {\em Designs, Codes and Cryptography}, 74(2):325--354, 2015.
\newblock URL: \url{https://eprint.iacr.org/2012/636}, \href
  {https://doi.org/10.1007/s10623-013-9864-x}
  {\path{doi:10.1007/s10623-013-9864-x}}.

\bibitem[ACFP14]{ACFP14}
Martin~R. Albrecht, Carlos Cid, Jean-Charles Faug{\`e}re, and Ludovic Perret.
\newblock Algebraic algorithms for {LWE}.
\newblock Cryptology ePrint Archive, Report 2014/1018, 2014.
\newblock URL: \url{https://eprint.iacr.org/2014/1018}.

\bibitem[AG11]{AG11}
Sanjeev Arora and Rong Ge.
\newblock New algorithms for learning in presence of errors.
\newblock In {\em Automata, Languages and Programming -- 38th International
  Colloquium, ICALP 2011}, volume 6755 of {\em Lecture Notes in Computer
  Science}, pages 403--415. Springer, 2011.
\newblock \href {https://doi.org/10.1007/978-3-642-22006-7_34}
  {\path{doi:10.1007/978-3-642-22006-7_34}}.

\bibitem[BBCH03]{BBCH03}
Airat Bekmetjev, Graham~R. Brightwell, Andrzej Czygrinow, and Glenn Hurlbert.
\newblock Thresholds for families of multisets, with an application to graph
  pebbling.
\newblock {\em Discrete Mathematics}, 269(1--3):21--34, 2003.
\newblock \href {https://arxiv.org/abs/math/0406068}
  {\path{arXiv:math/0406068}}, \href
  {https://doi.org/10.1016/S0012-365X(02)00745-8}
  {\path{doi:10.1016/S0012-365X(02)00745-8}}.

\bibitem[BB{\O}24]{BBO24}
Ward Beullens, Pierre Briaud, and Morten {\O}ygarden.
\newblock A security analysis of restricted syndrome decoding problems.
\newblock {\em {IACR} Communications in Cryptology}, 1(3), 2024.
\newblock URL: \url{https://eprint.iacr.org/2024/611}, \href
  {https://doi.org/10.62056/a06cy7qiu} {\path{doi:10.62056/a06cy7qiu}}.

\bibitem[BBWZW26]{BBWW26}
Sebastian Bitzer, Michele Battagliola, Antonia Wachter-Zeh, and Violetta Weger.
\newblock {TCitH}- and {VOLEitH}-based signatures from restricted decoding.
\newblock In {\em {IEEE} International Symposium on Information Theory, {ISIT}
  2026}, pages 1--6. {IEEE}, 2026.
\newblock \href {https://arxiv.org/abs/2510.11224} {\path{arXiv:2510.11224}},
  \href {https://doi.org/10.1109/ISIT62367.2026.11654135}
  {\path{doi:10.1109/ISIT62367.2026.11654135}}.

\bibitem[BCC{\etalchar{+}}23]{Wave23}
Gustavo Banegas, K{\'e}vin Carrier, Andr{\'e} Chailloux, Alain Couvreur, Thomas
  Debris-Alazard, Philippe Gaborit, Pierre Karpman, Johanna Loyer, Ruben
  Niederhagen, Nicolas Sendrier, Benjamin Smith, and Jean-Pierre Tillich.
\newblock {WAVE} specification document, June 2023.
\newblock Round 1 submission, Version 1, released June 1, 2023.
\newblock URL:
  \url{https://csrc.nist.gov/csrc/media/Projects/pqc-dig-sig/documents/round-1/spec-files/wave-spec-web.pdf}.

\bibitem[BCDAL20]{BCDL19}
R{\'e}mi Bricout, Andr{\'e} Chailloux, Thomas Debris-Alazard, and Matthieu
  Lequesne.
\newblock Ternary syndrome decoding with large weight.
\newblock In {\em Selected Areas in Cryptography -- SAC 2019}, volume 11959 of
  {\em Lecture Notes in Computer Science}, pages 437--466. Springer, 2020.
\newblock URL: \url{https://eprint.iacr.org/2019/304}, \href
  {https://doi.org/10.1007/978-3-030-38471-5_18}
  {\path{doi:10.1007/978-3-030-38471-5_18}}.

\bibitem[BCT26]{BCT25}
Agathe Blanvillain, Andr{\'e} Chailloux, and Jean-Pierre Tillich.
\newblock The quantum decoding problem: Tight achievability bounds and
  application to {Regev}'s reduction.
\newblock {\em IEEE Transactions on Information Theory}, 72(7):4980--4999,
  2026.
\newblock \href {https://arxiv.org/abs/2509.24796} {\path{arXiv:2509.24796}},
  \href {https://doi.org/10.1109/TIT.2026.3689737}
  {\path{doi:10.1109/TIT.2026.3689737}}.

\bibitem[BGG{\etalchar{+}}16]{BGGOP16}
Johannes Buchmann, Florian G{\"o}pfert, Tim G{\"u}neysu, Tobias Oder, and
  Thomas P{\"o}ppelmann.
\newblock High-performance and lightweight lattice-based public-key encryption.
\newblock In {\em Proceedings of the 2nd ACM International Workshop on IoT
  Privacy, Trust, and Security}, pages 2--9. ACM, 2016.
\newblock \href {https://doi.org/10.1145/2899007.2899011}
  {\path{doi:10.1145/2899007.2899011}}.

\bibitem[BGPW16]{BGPW16}
Johannes Buchmann, Florian G{\"o}pfert, Rachel Player, and Thomas Wunderer.
\newblock On the hardness of {LWE} with binary error: Revisiting the hybrid
  lattice-reduction and meet-in-the-middle attack.
\newblock In {\em Progress in Cryptology -- AFRICACRYPT 2016}, volume 9646 of
  {\em Lecture Notes in Computer Science}, pages 24--43. Springer, 2016.
\newblock URL: \url{https://eprint.iacr.org/2016/089}, \href
  {https://doi.org/10.1007/978-3-319-31517-1_2}
  {\path{doi:10.1007/978-3-319-31517-1_2}}.

\bibitem[BKW03]{BKW03}
Avrim Blum, Adam Kalai, and Hal Wasserman.
\newblock Noise-tolerant learning, the parity problem, and the statistical
  query model.
\newblock {\em Journal of the ACM}, 50(4):506--519, 2003.
\newblock \href {https://doi.org/10.1145/792538.792543}
  {\path{doi:10.1145/792538.792543}}.

\bibitem[BL24]{baily2024strength}
Benjamin Baily and Amichai Lampert.
\newblock Strength is bounded linearly by {Birch} rank.
\newblock {\em arXiv preprint arXiv:2410.00248}, 2024.
\newblock \href {https://arxiv.org/abs/2410.00248} {\path{arXiv:2410.00248}}.

\bibitem[Bor91]{Borel91}
Armand Borel.
\newblock {\em Linear Algebraic Groups}, volume 126 of {\em Graduate Texts in
  Mathematics}.
\newblock Springer, second edition, 1991.
\newblock \href {https://doi.org/10.1007/978-1-4612-0941-6}
  {\path{doi:10.1007/978-1-4612-0941-6}}.

\bibitem[BPW{\etalchar{+}}23]{BBPSWW23}
Sebastian Bitzer, Alessio Pavoni, Violetta Weger, Paolo Santini, Marco Baldi,
  and Antonia Wachter-Zeh.
\newblock Generic decoding of restricted errors.
\newblock In {\em 2023 {IEEE} International Symposium on Information Theory},
  pages 246--251. {IEEE}, 2023.
\newblock \href {https://arxiv.org/abs/2303.08882} {\path{arXiv:2303.08882}},
  \href {https://doi.org/10.1109/ISIT54713.2023.10206983}
  {\path{doi:10.1109/ISIT54713.2023.10206983}}.

\bibitem[CDAE21]{CDE21}
Andr{\'e} Chailloux, Thomas Debris-Alazard, and Simona Etinski.
\newblock Classical and quantum algorithms for generic syndrome decoding
  problems and applications to the {Lee} metric.
\newblock In {\em Post-Quantum Cryptography -- {PQCrypto} 2021}, volume 12841
  of {\em Lecture Notes in Computer Science}, pages 44--62. Springer, 2021.
\newblock \href {https://arxiv.org/abs/2104.12810} {\path{arXiv:2104.12810}},
  \href {https://doi.org/10.1007/978-3-030-81293-5_3}
  {\path{doi:10.1007/978-3-030-81293-5_3}}.

\bibitem[Cha82]{Chaiken82}
Seth Chaiken.
\newblock A combinatorial proof of the all minors matrix tree theorem.
\newblock {\em SIAM Journal on Algebraic and Discrete Methods}, 3(3):319--329,
  1982.
\newblock \href {https://doi.org/10.1137/0603033} {\path{doi:10.1137/0603033}}.

\bibitem[CKPS00]{CKPS00}
Nicolas Courtois, Alexander Klimov, Jacques Patarin, and Adi Shamir.
\newblock Efficient algorithms for solving overdefined systems of multivariate
  polynomial equations.
\newblock In {\em Advances in Cryptology -- EUROCRYPT 2000}, volume 1807 of
  {\em Lecture Notes in Computer Science}, pages 392--407. Springer, 2000.
\newblock \href {https://doi.org/10.1007/3-540-45539-6_27}
  {\path{doi:10.1007/3-540-45539-6_27}}.

\bibitem[CLZ22]{CLZ22}
Yilei Chen, Qipeng Liu, and Mark Zhandry.
\newblock Quantum algorithms for variants of average-case lattice problems via
  filtering.
\newblock In {\em Advances in Cryptology -- EUROCRYPT 2022, Part III}, volume
  13277 of {\em Lecture Notes in Computer Science}, pages 372--401. Springer,
  2022.
\newblock \href {https://arxiv.org/abs/2108.11015} {\path{arXiv:2108.11015}},
  \href {https://doi.org/10.1007/978-3-031-07082-2_14}
  {\path{doi:10.1007/978-3-031-07082-2_14}}.

\bibitem[CT24]{CT23}
Andr{\'e} Chailloux and Jean-Pierre Tillich.
\newblock The quantum decoding problem.
\newblock In {\em 19th Conference on the Theory of Quantum Computation,
  Communication and Cryptography (TQC 2024)}, volume 310 of {\em Leibniz
  International Proceedings in Informatics (LIPIcs)}, pages 6:1--6:14. Schloss
  Dagstuhl -- Leibniz-Zentrum f{\"u}r Informatik, 2024.
\newblock \href {https://arxiv.org/abs/2310.20651} {\path{arXiv:2310.20651}},
  \href {https://doi.org/10.4230/LIPIcs.TQC.2024.6}
  {\path{doi:10.4230/LIPIcs.TQC.2024.6}}.

\bibitem[CT25]{CT24}
Andr{\'e} Chailloux and Jean-Pierre Tillich.
\newblock Quantum advantage from soft decoders.
\newblock In {\em Proceedings of the 57th Annual ACM Symposium on Theory of
  Computing}, pages 738--749. ACM, 2025.
\newblock \href {https://arxiv.org/abs/2411.12553} {\path{arXiv:2411.12553}},
  \href {https://doi.org/10.1145/3717823.3718319}
  {\path{doi:10.1145/3717823.3718319}}.

\bibitem[DART24]{DRT24}
Thomas Debris-Alazard, Maxime Remaud, and Jean-Pierre Tillich.
\newblock Quantum reduction of finding short code vectors to the decoding
  problem.
\newblock {\em IEEE Transactions on Information Theory}, 70(7):5323--5342,
  2024.
\newblock \href {https://arxiv.org/abs/2106.02747} {\path{arXiv:2106.02747}},
  \href {https://doi.org/10.1109/TIT.2023.3327759}
  {\path{doi:10.1109/TIT.2023.3327759}}.

\bibitem[DAST19]{Wave19}
Thomas Debris-Alazard, Nicolas Sendrier, and Jean-Pierre Tillich.
\newblock Wave: A new family of trapdoor one-way preimage sampleable functions
  based on codes.
\newblock In {\em Advances in Cryptology -- ASIACRYPT 2019, Part I}, volume
  11921 of {\em Lecture Notes in Computer Science}, pages 21--51. Springer,
  2019.
\newblock \href {https://doi.org/10.1007/978-3-030-34578-5_2}
  {\path{doi:10.1007/978-3-030-34578-5_2}}.

\bibitem[DEL25]{DEL25}
L{\'e}o Ducas, Lynn Engelberts, and Johanna Loyer.
\newblock Wagner's algorithm provably runs in subexponential time for
  {$\mathrm{SIS}^{\infty}$}.
\newblock In {\em Advances in Cryptology -- CRYPTO 2025, Part I}, volume 16000
  of {\em Lecture Notes in Computer Science}, pages 353--384. Springer, 2025.
\newblock URL: \url{https://eprint.iacr.org/2025/575}, \href
  {https://doi.org/10.1007/978-3-032-01855-7_12}
  {\path{doi:10.1007/978-3-032-01855-7_12}}.

\bibitem[DKSS13]{DKSS13}
Zeev Dvir, Swastik Kopparty, Shubhangi Saraf, and Madhu Sudan.
\newblock Extensions to the method of multiplicities, with applications to
  {Kakeya} sets and mergers.
\newblock {\em SIAM Journal on Computing}, 42(6):2305--2328, 2013.
\newblock \href {https://doi.org/10.1137/100783704}
  {\path{doi:10.1137/100783704}}.

\bibitem[DM73]{diderrich1973combinatorial}
George~T Diderrich and Henry~B Mann.
\newblock Combinatorial problems in finite abelian groups.
\newblock In {\em A survey of combinatorial theory}, pages 95--100.
  North-Holland, 1973.
\newblock \href {https://doi.org/10.1016/B978-0-7204-2262-7.50015-6}
  {\path{doi:10.1016/B978-0-7204-2262-7.50015-6}}.

\bibitem[FIM{\etalchar{+}}14]{FIMS03}
Katalin Friedl, G{\'a}bor Ivanyos, Fr{\'e}d{\'e}ric Magniez, Miklos Santha, and
  Pranab Sen.
\newblock Hidden translation and {Translating Coset} in quantum computing.
\newblock {\em SIAM Journal on Computing}, 43(1):1--24, 2014.
\newblock \href {https://arxiv.org/abs/quant-ph/0211091}
  {\path{arXiv:quant-ph/0211091}}, \href {https://doi.org/10.1137/130907203}
  {\path{doi:10.1137/130907203}}.

\bibitem[Fr{\"o}85]{froberg1985inequality}
Ralf Fr{\"o}berg.
\newblock An inequality for {Hilbert} series of graded algebras.
\newblock {\em Mathematica Scandinavica}, 56(2):117--144, 1985.
\newblock \href {https://doi.org/10.7146/math.scand.a-12092}
  {\path{doi:10.7146/math.scand.a-12092}}.

\bibitem[GJLS21]{GJLS21}
Romain Gay, Aayush Jain, Huijia Lin, and Amit Sahai.
\newblock Indistinguishability obfuscation from simple-to-state hard problems:
  New assumptions, new techniques, and simplification.
\newblock In {\em Advances in Cryptology -- EUROCRYPT 2021, Part III}, volume
  12698 of {\em Lecture Notes in Computer Science}, pages 97--126. Springer,
  2021.
\newblock URL: \url{https://eprint.iacr.org/2020/764}, \href
  {https://doi.org/10.1007/978-3-030-77883-5_4}
  {\path{doi:10.1007/978-3-030-77883-5_4}}.

\bibitem[II24]{II24}
Muhammad Imran and G{\'a}bor Ivanyos.
\newblock Zero sum subsequences and hidden subgroups.
\newblock {\em Quantum Information Processing}, 23(1):14, 2024.
\newblock \href {https://arxiv.org/abs/2304.08376} {\path{arXiv:2304.08376}},
  \href {https://doi.org/10.1007/s11128-023-04228-2}
  {\path{doi:10.1007/s11128-023-04228-2}}.

\bibitem[IN96]{IN96}
Russell Impagliazzo and Moni Naor.
\newblock Efficient cryptographic schemes provably as secure as subset sum.
\newblock {\em Journal of Cryptology}, 9(4):199--216, 1996.
\newblock URL: \url{https://www.wisdom.weizmann.ac.il/~naor/PAPERS/subset.pdf},
  \href {https://doi.org/10.1007/BF00189260} {\path{doi:10.1007/BF00189260}}.

\bibitem[IPS18]{IPS18}
G{\'a}bor Ivanyos, Anupam Prakash, and Miklos Santha.
\newblock On learning linear functions from subset and its applications in
  quantum computing.
\newblock In {\em 26th Annual European Symposium on Algorithms (ESA 2018)},
  volume 112 of {\em Leibniz International Proceedings in Informatics
  (LIPIcs)}, pages 66:1--66:14. Schloss Dagstuhl -- Leibniz-Zentrum f{\"u}r
  Informatik, 2018.
\newblock \href {https://arxiv.org/abs/1806.09660} {\path{arXiv:1806.09660}},
  \href {https://doi.org/10.4230/LIPIcs.ESA.2018.66}
  {\path{doi:10.4230/LIPIcs.ESA.2018.66}}.

\bibitem[IS17]{IS17}
G{\'a}bor Ivanyos and Miklos Santha.
\newblock Solving systems of diagonal polynomial equations over finite fields.
\newblock {\em Theoretical Computer Science}, 657:73--85, 2017.
\newblock \href {https://arxiv.org/abs/1503.09016} {\path{arXiv:1503.09016}},
  \href {https://doi.org/10.1016/j.tcs.2016.04.045}
  {\path{doi:10.1016/j.tcs.2016.04.045}}.

\bibitem[ISS12]{ISS12}
G{\'a}bor Ivanyos, Luc Sanselme, and Miklos Santha.
\newblock An efficient quantum algorithm for the hidden subgroup problem in
  nil-2 groups.
\newblock {\em Algorithmica}, 62(1--2):480--498, 2012.
\newblock \href {https://arxiv.org/abs/0707.1260} {\path{arXiv:0707.1260}},
  \href {https://doi.org/10.1007/s00453-010-9467-0}
  {\path{doi:10.1007/s00453-010-9467-0}}.

\bibitem[Iva08]{Ivanyos07}
G{\'a}bor Ivanyos.
\newblock On solving systems of random linear disequations.
\newblock {\em Quantum Information and Computation}, 8(6--7):579--594, 2008.
\newblock \href {https://arxiv.org/abs/0704.2988} {\path{arXiv:0704.2988}},
  \href {https://doi.org/10.26421/QIC8.6-7-2} {\path{doi:10.26421/QIC8.6-7-2}}.

\bibitem[JSW{\etalchar{+}}25]{DQI}
Stephen~P Jordan, Noah Shutty, Mary Wootters, Adam Zalcman, Alexander
  Schmidhuber, Robbie King, Sergei~V Isakov, Tanuj Khattar, and Ryan Babbush.
\newblock Optimization by decoded quantum interferometry.
\newblock {\em Nature}, 646(8086):831--836, 2025.

\bibitem[KF15]{KF15}
Paul Kirchner and Pierre-Alain Fouque.
\newblock An improved {BKW} algorithm for {LWE} with applications to
  cryptography and lattices.
\newblock In {\em Advances in Cryptology -- CRYPTO 2015, Part I}, volume 9215
  of {\em Lecture Notes in Computer Science}, pages 43--62. Springer, 2015.
\newblock \href {https://arxiv.org/abs/1506.02717} {\path{arXiv:1506.02717}},
  \href {https://doi.org/10.1007/978-3-662-47989-6_3}
  {\path{doi:10.1007/978-3-662-47989-6_3}}.

\bibitem[KL22]{KL22}
Pierre Karpman and Charlotte Lefevre.
\newblock Time-memory tradeoffs for large-weight syndrome decoding in ternary
  codes.
\newblock In {\em Public-Key Cryptography -- {PKC} 2022, Part I}, volume 13177
  of {\em Lecture Notes in Computer Science}, pages 82--111. Springer, 2022.
\newblock URL:
  \url{https://www.iacr.org/archive/pkc2022/131770046/131770046.pdf}, \href
  {https://doi.org/10.1007/978-3-030-97121-2_4}
  {\path{doi:10.1007/978-3-030-97121-2_4}}.

\bibitem[KOW26]{KOW26}
Robin Kothari, Ryan O'Donnell, and Kewen Wu.
\newblock No exponential quantum speedup for {$\mathrm{SIS}^{\infty}$} anymore.
\newblock In {\em Proceedings of the 58th Annual ACM Symposium on Theory of
  Computing}, pages 101--105. ACM, 2026.
\newblock Full version available on arXiv.
\newblock \href {https://arxiv.org/abs/2510.07515} {\path{arXiv:2510.07515}},
  \href {https://doi.org/10.1145/3798129.3800731}
  {\path{doi:10.1145/3798129.3800731}}.

\bibitem[KS99]{KS99}
Aviad Kipnis and Adi Shamir.
\newblock Cryptanalysis of the {HFE} public key cryptosystem by
  relinearization.
\newblock In {\em Advances in Cryptology -- CRYPTO '99}, volume 1666 of {\em
  Lecture Notes in Computer Science}, pages 19--30. Springer, 1999.
\newblock \href {https://doi.org/10.1007/3-540-48405-1_2}
  {\path{doi:10.1007/3-540-48405-1_2}}.

\bibitem[Lov93]{Lovasz93}
L{\'a}szl{\'o} Lov{\'a}sz.
\newblock {\em Combinatorial Problems and Exercises}.
\newblock North-Holland, Amsterdam, 1993.

\bibitem[MP13]{MP13}
Daniele Micciancio and Chris Peikert.
\newblock Hardness of {SIS} and {LWE} with small parameters.
\newblock In {\em Advances in Cryptology -- CRYPTO 2013, Part I}, volume 8042
  of {\em Lecture Notes in Computer Science}, pages 21--39. Springer, 2013.
\newblock \href {https://doi.org/10.1007/978-3-642-40041-4_2}
  {\path{doi:10.1007/978-3-642-40041-4_2}}.

\bibitem[Nen17]{nenashev2017note}
Gleb Nenashev.
\newblock A note on fr{\"o}berg's conjecture for forms of equal degrees.
\newblock {\em Comptes Rendus Mathematique}, 355(3):272--276, 2017.

\bibitem[NMS{\"U}25]{CMSU25}
Miguel~Cueto Noval, Simon-Philipp Merz, Patrick St{\"a}hlin, and Ak{\i}n
  {\"U}nal.
\newblock On the soundness of algebraic attacks against code-based assumptions.
\newblock In {\em Advances in Cryptology -- EUROCRYPT 2025, Part VI}, volume
  15606 of {\em Lecture Notes in Computer Science}, pages 385--415. Springer,
  2025.
\newblock URL: \url{https://eprint.iacr.org/2025/415}, \href
  {https://doi.org/10.1007/978-3-031-91095-1_14}
  {\path{doi:10.1007/978-3-031-91095-1_14}}.

\bibitem[Ols69]{Olson69}
John~E. Olson.
\newblock A combinatorial problem on finite abelian groups, {I}.
\newblock {\em Journal of Number Theory}, 1(1):8--10, 1969.
\newblock \href {https://doi.org/10.1016/0022-314X(69)90021-3}
  {\path{doi:10.1016/0022-314X(69)90021-3}}.

\bibitem[Reg09]{Reg09}
Oded Regev.
\newblock On lattices, learning with errors, random linear codes, and
  cryptography.
\newblock {\em Journal of the ACM}, 56(6):34:1--34:40, 2009.
\newblock \href {https://doi.org/10.1145/1568318.1568324}
  {\path{doi:10.1145/1568318.1568324}}.

\bibitem[Sal23]{salizzoni2023upper}
Flavio Salizzoni.
\newblock An upper bound for the solving degree in terms of the degree of
  regularity.
\newblock {\em arXiv preprint arXiv:2304.13485}, 2023.

\bibitem[SSTX09]{SSTX09}
Damien Stehl{\'e}, Ron Steinfeld, Keisuke Tanaka, and Keita Xagawa.
\newblock Efficient public key encryption based on ideal lattices.
\newblock In {\em Advances in Cryptology -- ASIACRYPT 2009}, volume 5912 of
  {\em Lecture Notes in Computer Science}, pages 617--635. Springer, 2009.
\newblock URL: \url{https://eprint.iacr.org/2009/285}, \href
  {https://doi.org/10.1007/978-3-642-10366-7_36}
  {\path{doi:10.1007/978-3-642-10366-7_36}}.

\bibitem[ST21]{SemaevTenti21}
Igor Semaev and Andrea Tenti.
\newblock Probabilistic analysis on {Macaulay} matrices over finite fields and
  complexity of constructing {Gr{\"o}bner} bases.
\newblock {\em Journal of Algebra}, 565:651--674, 2021.
\newblock \href {https://doi.org/10.1016/j.jalgebra.2020.08.035}
  {\path{doi:10.1016/j.jalgebra.2020.08.035}}.

\bibitem[STA20]{STA20}
Chao Sun, Mehdi Tibouchi, and Masayuki Abe.
\newblock Revisiting the hardness of binary error {LWE}.
\newblock In {\em Information Security and Privacy -- ACISP 2020}, volume 12248
  of {\em Lecture Notes in Computer Science}, pages 425--444. Springer, 2020.
\newblock URL: \url{https://eprint.iacr.org/2020/666}, \href
  {https://doi.org/10.1007/978-3-030-55304-3_22}
  {\path{doi:10.1007/978-3-030-55304-3_22}}.

\bibitem[Ste24]{Steiner24}
Matthias~Johann Steiner.
\newblock The complexity of algebraic algorithms for {LWE}.
\newblock In {\em Advances in Cryptology -- EUROCRYPT 2024, Part III}, volume
  14653 of {\em Lecture Notes in Computer Science}, pages 375--403. Springer,
  2024.
\newblock URL: \url{https://eprint.iacr.org/2024/313}, \href
  {https://doi.org/10.1007/978-3-031-58734-4_13}
  {\path{doi:10.1007/978-3-031-58734-4_13}}.

\bibitem[Wag02]{Wagner02}
David Wagner.
\newblock A generalized birthday problem.
\newblock In {\em Advances in Cryptology -- CRYPTO 2002}, volume 2442 of {\em
  Lecture Notes in Computer Science}, pages 288--304. Springer, 2002.
\newblock \href {https://doi.org/10.1007/3-540-45708-9_19}
  {\path{doi:10.1007/3-540-45708-9_19}}.

\bibitem[WANT26]{WANT26}
Takeshi Wakao, Yusuke Aikawa, Shintaro Narisada, and Tsuyoshi Takagi.
\newblock Revisiting {ISD} algorithms and new decoding records for large weight
  syndrome decoding over {$\mathbb{F}_q$}.
\newblock In {\em Proceedings of the 12th International Conference on
  Information Systems Security and Privacy -- Volume 2: {ICISSP}}, pages
  13--24. SciTePress, 2026.
\newblock URL: \url{https://www.scitepress.org/Papers/2026/142486/142486.pdf},
  \href {https://doi.org/10.5220/0014248600004061}
  {\path{doi:10.5220/0014248600004061}}.

\bibitem[YZ24]{yamakawa2024verifiable}
Takashi Yamakawa and Mark Zhandry.
\newblock Verifiable quantum advantage without structure.
\newblock {\em Journal of the ACM}, 71(3):1--50, 2024.

\bibitem[ZvGZ{\etalchar{+}}26]{GDM26}
Daniel Zheng, Ingrid von Glehn, Yori Zwols, Iuliya Beloshapka, Lars Buesing,
  Daniel~M. Roy, Martin Wattenberg, Bogdan Georgiev, Tatiana Schmidt, Andrew
  Cowie, Fernanda Viegas, Dimitri Kanevsky, Vineet Kahlon, Hartmut Maennel,
  Sophia Alj, George Holland, Alex Davies, and Pushmeet Kohli.
\newblock {AI} co-mathematician: Accelerating mathematicians with agentic {AI},
  2026.
\newblock \href {https://arxiv.org/abs/2605.06651} {\path{arXiv:2605.06651}}.

\end{thebibliography}

\appendix
\crefalias{section}{appendix}

\section{\texorpdfstring{Satisfiability of average-case \FSubsetSum}
{Satisfiability of average-case F3-Subset-Sum}}\label{sec:fct:f3-subset-sum-existence}

In this section, we prove \Cref{fct:f3-subset-sum-existence}, which establishes the satisfiability phase transition of \Cref{prob:f3-subset-sum}. We emphasize that we do not nail down the low-order terms for the phase transition threshold, as it is not the focus of our paper.

\fctfsubsetsumexistence*
\begin{proof}
The first item follows from standard results in zero-sum theory~\cite{Olson69,diderrich1973combinatorial}. See also \cite[Section 5.5]{KOW26}. Here we focus on the other two.

For each nonempty $S\subseteq[m]$, let
\[
 I_S=\indicator\!\left[\sum_{i\in S}h_i=0\right],
 \qquad X=\sum_{\emptyset\ne S\subseteq[m]}I_S.
\]
The sum indexed by any fixed nonempty $S$ is uniform in $\F_3^n$, and hence
\[
 \E[I_S]=3^{-n},
 \qquad \E[X]=(2^m-1)\cdot3^{-n}.
\]
If $S$ and $T$ are distinct nonempty subsets, then
\(\sum_{i\in S}h_i\) and \(\sum_{i\in T}h_i\) are independent uniform vectors.  Thus the
indicators $I_S$ are pairwise independent and
\[
 \Var[X]=\sum_{\emptyset\ne S\subseteq[m]}\Var[I_S]\le\E[X].
\]

If $m\le(\kappa_\star-\eta)n$, Markov's inequality gives
\[
 \Pr[X>0]=\Pr[X\ge1]\le\E[X]=(2^m-1)\cdot3^{-n}\le2^{-\eta\cdot n}.
\]
If $m\ge(\kappa_\star+\eta)n$, then
\[
 \E[X]=(2^m-1)\cdot3^{-n}\ge2^{\eta\cdot n-1},
\]
and the Paley–Zygmund inequality gives
\[
 \Pr[X=0]\le\frac{\Var[X]}{\E[X^2]}\le\frac{\Var[X]}{\E[X]^2}
 \le\frac1{\E[X]}\le2^{1-\eta\cdot n}.
\]
These two bounds imply the claimed estimates.
\end{proof}

\section{\texorpdfstring{Uniqueness of \PointwiseBinaryErrorLWE over $\F_q$}
{Uniqueness of Pointwise Binary-Error LWE over Fq}}\label{sec:fct:lwe-existence}

In this section, we prove \Cref{fct:lwe-existence}, which establishes the uniqueness phase transition of \Cref{prob:pointwise-binary-error-lwe}. We emphasize that we do not nail down the low-order terms for the phase transition threshold, as it is not the focus of our paper.

For $q=3$, the same threshold was established by Bricout, Chailloux,
Debris-Alazard, and Lequesne~\cite[Section~6]{BCDL19}, though not in the pointwise
setting.

\fctlweexistence*
\begin{proof}
Fix $s\in\F_q^n$ and $e\in\{0,1\}^m$.  For every nonzero $u\in\F_q^n$, let
\[
 I_u=\indicator\!\left[A^\top u+e\in\{0,1\}^m\right],
 \qquad X=\sum_{u\ne0}I_u.
\]
The planted secret $s$ is the unique secret exactly when $X=0$.

Put $N=q^n-1$ and $p=(2/q)^m$.  For fixed $u\ne0$, each coordinate of
$A^\top u$ is uniform in $\F_q$. Hence
\[
 \E[I_u]=p,
 \qquad
 \E[X]=Np.
\]
If $u$ and $v$ are linearly independent, then $I_u$ and $I_v$ are independent.
For $v=c\cdot u$ where $c\in\F_q^*\setminus\{1\}$, we have $\E[I_uI_v]=\Pr[A^\top u=0]=q^{-m}$. Therefore
\[
 \Var[X]
 =N\cdot\left(p+(q-2)q^{-m}-(q-1)p^2\right)
 \le N\cdot\left(p+q^{-m+1}\right).
\]

Since $\kappa_q=\log_{q/2}(q)$, we have
\[
 \E[X]=(1-q^{-n})\cdot(q/2)^{\kappa_q n-m}.
\]
If $m\ge(\kappa_q+\eta)n$, Markov's inequality gives
\[
 \Pr[X>0]=\Pr[X\ge1]\le\E[X]
 \le(2/q)^{m-\kappa_q n}=2^{-\Omega_{q,\eta}(m)}.
\]
If $m\le(\kappa_q-\eta)n$, then the Paley-Zygmund inequality gives
\[
 \Pr[X=0]\le\frac{\Var[X]}{\E[X]^2}\le\frac1{Np}+\frac{q^{-m+1}}{Np^2}=\frac1{Np}\cdot\left(1+q\cdot2^{-m}\right)
 =2^{-\Omega_{q,\eta}(n)}.
\]
These two bounds imply the claimed estimates.
\end{proof}

\section{\texorpdfstring{Degree-$3$ XL experiments}{Degree-3 XL experiments}}
\label{app:relinearization-experiments}

We describe the counting prediction and experimental evidence for degree-$3$ XL over $\F_3$. The experiments below are separate from the proved
guarantees in \Cref{sec:lwe}.

We remark that higher-degree XL significantly blows up the system of equations and we cannot run enough experiments to draw convincing conclusions. Hence we focus on degree-$3$ XL here.

\paragraph{Degree-$3$ XL attack.}
Given an instance of the $q=3$ version of \Cref{prob:pointwise-binary-error-lwe}, form
\[
 f_i(x)=(L_i(x)-b_i)(L_i(x)-b_i+1),
 \qquad L_i(x)=\langle a_i,x\rangle.
\]
Degree-$3$ XL takes the original equations and their linear multiples 
$$
f_i, x_1f_i,x_2f_i,\ldots,x_nf_i,
$$
reduces them using $x_j^3=x_j$, and writes their coefficients in the basis of all
nonconstant reduced monomials $x^\alpha$ with
$\alpha\in\{0,1,2\}^n$ and $|\alpha|\le3$.  It then performs Gaussian elimination over
$\F_3$, with the constant coefficients moved to the right-hand side.
Full column rank is a sufficient recovery condition: it determines all linearized variables, and the degree-$1$ coordinates of this unique solution recover the secret. 

\paragraph{Predicted sample lower bound.}
Although each sample produces $n+1$ rows, these rows always satisfy a dependency.

\begin{lemma}\label{lem:degree3-xl-sample-rank}
There is a nontrivial linear dependence among $f_i,x_1f_i,\ldots,x_nf_i$.
\end{lemma}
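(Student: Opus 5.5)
We need a dependency among $f_i, x_1f_i,\dots,x_nf_i$ after reduction modulo $x_j^3=x_j$ (over $\F_3$). The natural candidate comes from $L_i$ itself: write $y=L_i(x)-b_i$, so $f_i=y(y+1)$. Then $L_i f_i=\sum_j (a_i)_j x_j f_i$ is a linear combination of the $x_jf_i$, and $b_i f_i$ is a multiple of $f_i$, so $y f_i = y^2(y+1)$ lies in the span of $f_i,x_1f_i,\dots,x_nf_i$ with coefficients $-b_i$ on $f_i$ and $(a_i)_j$ on $x_jf_i$. The plan is to show that this particular combination reduces to a multiple of $f_i$ modulo the field equations. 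That produces a dependency, and it is nontrivial because the coefficient vector $(a_i)_1,\dots,(a_i)_n$ is not all zero whenever $a_i\neq0$.

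The key step is the identity $y^3\equiv y$ modulo $I=(x_1^3-x_1,\ldots,x_n^3-x_n)$ for any affine form $y$ over $\F_3$. I would justify it in one of two ways. First, by Frobenius: $(\sum_j c_jx_j+c_0)^3=\sum_j c_j^3x_j^3+c_0^3=\sum_j c_jx_j^3+c_0$ in characteristic $3$, which reduces to $y$ modulo $I$. Alternatively, via the evaluation isomorphism $\ev$: $y^3-y$ vanishes on all of $\F_3^n$, so it lies in $I$.

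With that identity, $y f_i=y^3+y^2\equiv y+y^2=f_i$. Hence
\[
\sum_j (a_i)_j\,x_jf_i-(b_i+1)f_i\equiv0 \pmod I,
\]
and this is the nontrivial dependence. Since the XL matrix rows are exactly the reduced forms of these polynomials, the relation holds among the rows as written.

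The only step needing care is the degenerate case $a_i=0$. This happens with probability $3^{-n}$, and there $f_i$ is a constant, so the rows $f_i,x_jf_i$ are scalar multiples of the $1,x_j$ rows. I would handle it separately: either $f_i=0$, which is trivially dependent, or $f_i$ is a nonzero constant, in which case the system is inconsistent with the promise that a solution exists, so this case does not occur. Apart from this, I do not expect any real obstacle.
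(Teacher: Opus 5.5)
Your proof is correct and is essentially the paper's argument: the paper observes that $(L_i-b_i-1)f_i=(L_i-b_i)^3-(L_i-b_i)\equiv 0$ modulo the field equations, which is exactly your relation $\sum_j (a_i)_j x_j f_i-(b_i+1)f_i\equiv 0$. Your separate treatment of $a_i=0$ is extra care that the paper skips. In that case the coefficient $-(b_i+1)$ on $f_i$ is already nonzero, since the promise forces $b_i=e_i\in\{0,1\}$.
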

\begin{proof}
Observe that modulo the field equations $x_j^3-x_j$, we have
\[
 (L_i-b_i-1)f_i=(L_i-b_i)^3-(L_i-b_i)=0,
\]
which is a nontrivial linear combination of $f_i,x_1f_i,\ldots,x_nf_i$ that always vanishes.
\end{proof}

To lower bound the number of samples $m$ needed for full column rank, we compute the number of nonconstant columns (i.e., reduced monomials of degrees $1$, $2$, and $3$) by
\[
 V(n)=n+\left(n+\binom n2\right)
           +\left(2\binom n2+\binom n3\right)
       =\frac{n^3+6n^2+5n}{6}.
\]
By \Cref{lem:degree3-xl-sample-rank}, $m$ samples supply at most $mn$ independent rows.
Hence the degree-$3$ XL attack requires at least
\begin{equation}
 m_{\mathrm{pred}}(n)
 =\left\lceil\frac{V(n)}n\right\rceil
 =\left\lceil\frac{n^2+6n+5}{6}\right\rceil
 =\frac{1+o(1)}6\cdot n^2
 \label{eq:degree3-xl-finite-prediction}
\end{equation}
samples to obtain a matrix of full column rank. This is a necessary condition for full column rank.

\paragraph{Experiments.}
We ran experiments on the degree-$3$ XL for every dimension $20\le n\le66$. The empirical outcome is surprisingly aligned with the theoretical lower bound prediction \eqref{eq:degree3-xl-finite-prediction}.
\Cref{tab:degree3-xl-experiments} lists a few data points from the experiments.

\begin{table}[H]
\centering
\small
\begin{tabular}{@{}rrc@{}}
\toprule
$n$ & $m_{\mathrm{pred}}$ & success rate\\
\midrule
20  & 88  & $100.0\%$ \\
23 & 112 & $78.8\%$ \\
24 & 121 & $99.7\%$ \\
25& 130 & $80.0\%$ \\
30 & 181 & $99.8\%$ \\
40& 308 & $100.0\%$ \\
50 & 468 & $100.0\%$ \\
60 & 661 & $100.0\%$  \\
61 & 682 & $69.0\%$  \\
66 & 793 & $100.0\%$ \\
\bottomrule
\end{tabular}
\caption{Representative degree-$3$ XL experiments.
``Success rate'' measures whether degree-$3$ XL successfully determines the unique secret given $m_{\mathrm{pred}}$ random samples. We ran $1000$ trials per $n$ for $n\leq 50$ and then $100$ trials per $n$ after.}
\label{tab:degree3-xl-experiments}
\end{table}

For all tested dimensions with
$n\not\equiv\pm1\pmod6$, the success rate at $m_{\mathrm{pred}}$ is above $99\%$ in our experiments. 
For all dimensions tested with $n\equiv\pm1\pmod6$, the rate was between $69\%$ and $86\%$, but a single additional sample always suffices to increase the success rate to $100\%$. This ``+1 surplus'' happens exactly when $n\cdot m_{\mathrm{pred}}=V(n)$ (i.e., no rounding in \eqref{eq:degree3-xl-finite-prediction}).

Based on the experiments, we make the following conjecture.

\begin{conjecture}\label{conj:deg3}
For $q=3$ and all sufficiently large $n$, degree-$3$ XL solves \Cref{prob:pointwise-binary-error-lwe} with probability at least $0.99$ whenever
$$
m\ge\begin{cases}
m_{\mathrm{pred}}(n) & n\not\equiv\pm1\pmod6,\\
m_{\mathrm{pred}}(n)+1 & n\equiv\pm1\pmod6.
\end{cases}
$$
\end{conjecture}

We emphasize that the techniques in this paper do not establish \Cref{conj:deg3}, not even asymptotically $m\ge(1+o(1))n^2/6$. Over fields of sufficiently large characteristic and size, techniques from \cite{nenashev2017note} give an analogous bound at $m=n^2/6+O(n)$. Over $\F_3$, \cite{SemaevTenti21} proves a related leading-form surjectivity bound for uniformly random quadratic forms.

\end{document}